\documentclass[11pt,a4paper]{article}

\hfuzz 3pt  
\flushbottom 


\usepackage{hyperref}
\usepackage{cite}

\usepackage{amsmath}
\usepackage{amsfonts,amsthm,amssymb}

\numberwithin{equation}{section}

\newcommand{\beqa}{\begin{eqnarray}}
\newcommand{\eeqa}{\end{eqnarray}}
\newcommand{\rf}[1]{(\ref{#1})}

\newtheorem{theorem}{Theorem}[section]
\newtheorem{proposition}{Proposition}[section]
\newtheorem{lemma}{Lemma}[section]

\newtheorem{corollary}{Corollary}[section]

{\theoremstyle{remark}
}
\newtheorem{identity}{Identity}
\setcounter{identity}{0}
%







\newcommand{\la}{\lambda}


\textheight 225mm
\textwidth 165mm
\oddsidemargin -1mm
\flushbottom
\numberwithin{equation}{section}

\begin{document}

\author{\vspace{1cm}\vspace{2cm}}

\bigskip \begin{flushright}
LPENSL-TH-08-18
\end{flushright}

\bigskip \bigskip

\begin{center}
\textbf{\LARGE Complete spectrum of quantum integrable lattice models associated to $Y(gl_n)$ by separation of variables} 
\end{center}
\vspace{50pt}

\begin{center}
{\large \textbf{J.~M.~Maillet}\footnote[1]{{Univ Lyon, Ens de Lyon,
Univ Claude Bernard Lyon 1, CNRS, Laboratoire de Physique, UMR 5672, F-69342
Lyon, France; maillet@ens-lyon.fr}} {\large \textbf{\, and}} ~~ \textbf{G. Niccoli}\footnote[2]{%
{Univ Lyon, Ens de Lyon, Univ Claude Bernard Lyon 1, CNRS,
Laboratoire de Physique, UMR 5672, F-69342 Lyon, France;
giuliano.niccoli@ens-lyon.fr}}}
\end{center}
\bigskip

\begin{center}
\end{center}

\vspace{50pt}

\begin{itemize}
\item[ ] \textbf{Abstract.}\thinspace \thinspace We apply our new approach of quantum Separation of Variables (SoV)  to the complete characterization of the transfer matrix  spectrum of quantum integrable lattice models associated to  $gl_n$-invariant $R$-matrices in the fundamental representations. We consider lattices with $N$-sites and general quasi-periodic boundary conditions associated to an arbitrary twist matrix $K$ having simple spectrum (but not necessarily diagonalizable). In our approach the SoV basis is constructed in an universal manner starting from the direct use of the conserved charges of the models, e.g. from the commuting family of transfer matrices. Using the integrable structure of the models, incarnated in the hierarchy of transfer matrices fusion relations, we prove that our SoV basis indeed separates the spectrum of the corresponding transfer matrices. Moreover, the combined use of the fusion rules, of the known analytic properties of the transfer matrices and of the SoV basis allows us to obtain the complete characterization of the transfer matrix spectrum and to prove its simplicity. Any transfer matrix eigenvalue is completely characterized as a solution of a so-called quantum spectral curve equation that we obtain as a difference functional equation of order $n$. Namely, any eigenvalue satisfies this equation and any solution of this equation having prescribed properties that we give leads to an eigenvalue. We construct the associated eigenvector, unique up to normalization, of the transfer matrices by computing its decomposition on the SoV basis that is of a factorized form written in terms of the powers of the corresponding eigenvalues. Finally, if the twist matrix $K$ is diagonalizable with simple spectrum we prove that the transfer matrix is also diagonalizable with simple spectrum. In that case, we give a construction of the Baxter $Q$-operator and show that it satisfies a $T$-$Q$ equation of order $n$, the quantum spectral curve equation, involving the hierarchy of the fused transfer matrices.
\end{itemize}

\newpage
\tableofcontents
\newpage

\section{Introduction}
In this paper we continue our study of quantum integrable lattice models using the new approach to quantum separation of variables that we recently developed \cite{MaiN18}. We  use the framework of the quantum inverse scattering method \cite{FadS78,FadST79,FadT79,Skl79,Skl79a,FadT81,Skl82,Fad82,Fad96}.  In the present article we consider the class of quantum integrable lattice models associated  to irreducible representations of the Yang-Baxter algebra obtained from tensor products of the fundamental representation corresponding to the $n^2 \times n^2$ rational $gl_n$-invariant  $R$-matrices for any $n \ge 2$ \cite{KulR81,KulRS81,KulR82,KulR83,Res83,Res83a,KirR86,Res87}. The $gl_n$ symmetry of these $R$-matrices implies that the monodromy matrix multiplied by an arbitrary $n \times n$ twist matrix $K$ still satisfies the same Yang-Baxter algebra governed by the given $R$-matrix. Let us remark that for any choice of an invertible $K$-matrix, one gets a quantum integrable model associated in the homogeneous limit to the same bulk Hamiltonian as for $K = \bold{1}$ but having different $K$-dependent quasi-periodic boundary conditions. 

Integrable quantum models in this class have been analyzed in the literature and exact results on the spectrum have been obtained e.g. by the use of generalizations of the standard algebraic Bethe ansatz like nested Bethe ansatz \cite{KulR81, KulR83} and analytic Bethe ansatz \cite{Res83,Res83a,KirR86,Res87}, with important recent progress towards their dynamics \cite{BelPRS12,BelPRS12a,BelPRS13,BelPRS13a,PakRS14,PakRS14a,PakRS15,PakRS15a,PakRS15b,LiaS18}. Here, we investigate them in the framework of the quantum Separation of Variable (SoV) approach pioneered by Sklyanin  \cite{Skl85,Skl90,Skl92,Skl92a,Skl95,Skl96} along our new method of constructing an SoV basis presented in \cite{MaiN18}. The SoV approach has in general the clear advantage to give a simple proof of the completeness of the spectrum description as it has been demonstrated in several important examples, mainly associated to the 6-vertex and 8-vertex representations of the  Yang-Baxter and Reflexion algebras, see e.g. \cite{Skl85,Skl90,Skl92,Skl92a,Skl95,Skl96,BabBS96,Smi98a,Smi01,DerKM01,DerKM03,DerKM03b,BytT06,vonGIPS06,FraSW08,MukTV09,MukTV09a,MukTV09c,AmiFOW10,NicT10,Nic10a,Nic11,FraGSW11,GroMN12,GroN12,Nic12,Nic13,Nic13a,Nic13b,GroMN14,FalN14,FalKN14,KitMN14,NicT15,LevNT15,NicT16,KitMNT16,MarS16,KitMNT17,MaiNP17,MaiNP18,BabKP18}, which in Bethe ansatz framework is not an easy task in general. The SoV approach has been also shown to work for several integrable quantum models for which the Algebraic Bethe ansatz cannot be directly used\footnote{Note that, in the rank one case a modification of the original Algebraic Bethe Ansatz has been introduced to describe the XXZ and XXX quantum spin 1/2 chains with general integrable boundaries   \cite{Bel15,BelP15,AvaBGP15,BelSV18,BelSV18a}.}, due in particular to the absence of a so-called reference state, see e.g. \cite{Skl85}. It also allows to get some universal and straightforward simultaneous characterization of the transfer matrix eigenvalues and associated eigenvectors. 

Let us recall that in the Sklyanin's SoV approach\footnote{At least for integrable quantum models associated to finite dimensional quantum spaces over finite lattices.}, the first step is to identify a one parameter commuting family of operators, the so-called $B$-operator, which must be diagonalizable and with simple spectrum. Then, let  $Y_n$ be the operator zeroes of $B(\la)$.  They form a set of commuting operators and their common eigenbasis can be labeled by their eigenvalues. Second, this $B$-operator family should have a ``canonical conjugate'' operator family, the so-called $A$-operator, also depending on a spectral parameter $\lambda$, which, thanks to the Yang-Baxter commutation relations, when carefully evaluated at $\lambda = Y_n$ acts as a shift operator over the spectrum of the $Y_n$. Third, the operator families $B(\la)$, $A(\la)$ and the transfer matrices of the model have to satisfy appropriate commutation relations implying that the operator families $A(\la)$ and the transfer matrices over the spectrum of the $Y_n$ satisfy a quantum spectral curve equation associated to the monodromy matrix $M(\lambda)$ satisfying a Yang-Baxter or Reflexion algebra. The fused transfer matrices appear there as operator coefficients and play the role of the quantum spectral invariants of the monodromy matrix $M(\lambda)$. As shown by Sklyanin, see e.g. \cite{Skl95}, they are defined as quantum deformations of the corresponding classical spectral invariants. When these three steps are realized, the $Y_n$ are the so-called quantum separate variables for the transfer matrix spectral problem, the associated eigenbasis is the SoV basis and the separate relations are given by the quantum spectral curve equations that are finite difference equations over the spectrum of the separate variables.

Hence, this beautiful Sklyanin's picture for the construction of the SoV requires the identification of the operator families $B(\la)$ and $A(\la)$ and the proof that they satisfy all the outlined required properties. Sklyanin has proposed a way to identify these operator families\footnote{On the basis of pure algebraic properties inferred by the Yang-Baxter commutation relations.} for a large class of models associated to the representation of the 6-vertex Yang-Baxter algebra and has implemented the procedure for some important models. As already mentioned, using his identification or simple generalization of it (see for example the idea of pseudo-diagonalizability of the $B$-operator family \cite{FalN14,FalKN14}) it has been possible to widely implement the Sklyanin's SoV approach for integrable quantum model. Nevertheless, the Sklyanin's identification of the operator families $B(\la)$ and $A(\la)$ does not seem to be universal. In particular, for the higher rank cases, it appeared \cite{MaiN18} that for the fundamental representation of the rational Yang-Baxter model associated to $gl_3$  it does not apply, as the proposed $A(\la)$ does not seem to act as a shift operator over the full $B$-spectrum.

Hence, until now, despite several important progress in their understanding \cite{Skl96,Smi01,MarS16,GroLMS17,DerV18}, a systematic SoV description of higher rank quantum integrable models for a generic $K$-matrix has represented a longstanding open problem. Here, we solve it for the class of models associated to the fundamental representations of the Yangian $Y(gl_n)$ for general quasi-periodic boundary conditions associated to a matrix $K$ having simple spectrum (but not necessarily diagonalizable). This is done by implementing our new construction of the Separation of Variables (SoV) basis according to the general lines described in  \cite{MaiN18} where it was already  applied to the cases $n=2$ and $n=3$. 

The key point is that our SoV construction allows us to overcome the above mentioned problem of the identification of the operator families  $B(\la)$ and $A(\la)$ and the proof of their required properties, e.g. the characterization of the $B$-spectrum and the proof of its diagonalizability and simplicity. In the case in which the Sklyanin's SoV approach works our SoV construction can be made coinciding with the Sklyanin's one by choosing appropriately our SoV-basis while our SoV construction applies for larger classes of models, as the higher rank cases that we are going to describe in this article. 

In our approach \cite{MaiN18} the SoV basis is constructed in an universal manner starting from the direct use of the conserved charges of the models, namely from the repeated action of the transfer matrices on a generic co-vector of the Hilbert space. The integrable structure of the model, incarnated in the transfer matrix fusion rules, are the basic tools used to prove the separation of the transfer matrix spectrum in our SoV basis.  The complete characterization of the transfer matrix spectrum (eigenvalues and eigenvectors) is then obtained and its simplicity is proven. For any fixed eigenvalue the associated (unique up to normalization) eigenvector has coefficients in the SoV basis of factorized form written in terms of powers of the corresponding transfer matrix eigenvalues. The eigenvalues admit both a discrete and a functional equation characterization. Our SoV approach naturally leads to a  characterization of the eigenvalues as the set of solutions to a system of $N$ (number of sites of the lattice) equations of order $n$ for $N$ unknowns. Then, in a second characterization, that we prove to be equivalent to the first one, they are obtained as the set of solutions of a functional equation which is an order $n$ finite difference functional equation. The coefficients of this quantum spectral curve equation are related to the quantum spectral invariant eigenvalues of the model, i.e. the eigenvalues of the fused transfer matrices. Finally, under the further condition that the twist matrix $K$ is diagonalizable with simple spectrum, we prove that the transfer matrices are also diagonalizable with simple spectrum. It allows us in this case to define a single higher rank analog of the Baxter $Q$-operator \cite{Bax73-tot,Bax73-tota,Bax73-totb,Bax73,Bax76,Bax77,Bax78,PasG92,BatBOY95,YunB95,BazLZ97,AntF97,BazLZ99,Der99,Pro00,Kor06,Der08,BazLMS10,Man14,BooGKNR10,BooGKNR14,BooGKNR16,BooGKNR17,MenT17} satisfying with the transfer matrices the same $n$ order finite difference functional quantum spectral curve equation. 

It is interesting here to make some further comments on the role played by the integrability in our SoV construction and on the subsequent characterization of the transfer matrix spectrum. We have recalled, in our previous paper \cite{MaiN18}, that the concept of independence of the charges, which is natural in classical integrability, is hard to define in the quantum case. Instead we have used there the requirement of $w$-simplicity (non-degeneracy) of the transfer matrix spectrum as an independence condition for generating the SoV basis. One has however to point out that from the $w$-simplicity of the spectrum of the commuting family of the transfer matrices it follows that we can always fix some value of the spectral parameter for which the corresponding transfer matrix, let us say $T(\la_0)$ is itself $w$-simple. Then, remarking that any operator commuting with a $w$-simple operator can be written as a polynomial of it of maximal degree $d-1$, with $d$ the dimension of the Hilbert space \cite{HorJL85,DumFL04}, one natural question that can emerge concerns the role of the other conserved charges and of the integrable structure. What we have shown in \cite{MaiN18} is their role in the solution of the common spectral problem. Indeed, using only one $w$-simple operator $T(\la_0)$ we can in fact construct our basis according to formula (2.17) of our previous paper, so that the factorized characterization of the eigenvectors in terms of the eigenvalues of the Lemma 2.1 works. However this can be seen only as a pre-SoV characterization. Indeed, it leads to a characterization of the eigenvalues through a polynomial equation (given by the characteristic polynomial) of degree $n^N$ (the dimension of the Hilbert space) for $N$ sites. Instead, exploiting the full integrable structure of the model, in particular using the hierarchy of fused transfer matrices and their fusion relations, allows for the introduction of different type of  spectrum characterization (discrete and functional)  giving rise to equations of the quantum spectral curve of degree $n$.
 
Let us comment that the use of the fusion relations \cite{KulRS81,KulR83} in the framework of the quantum inverse scattering method to investigate the transfer matrix spectrum has already found several applications in the literature of quantum integrable models. A first systematic use of them has been introduced by Reshetikhin in his analytic Bethe ansatz method \cite{Res83,Res83a,KirR86,Res87}, see also \cite{KunNS94,ArnCDFR05}. There, these fusion relations are used to introduce an ansatz on the form of the transfer matrix eigenvalues which eventually leads to a nested system of Bethe equations by the requirement of analyticity. 

It is also interesting to remark that the connection with the fusion relations of the transfer matrices is evident already in the Sklyanin's SoV approach. This is intrinsically contained in the SoV as the condition of existence of transfer matrix eigenvectors \cite{Nic12,Nic13,Nic13a} is just equivalent to the fusion relations of the transfer matrices computed in the spectrum of the separate variables and their shifted values. In \cite{Nic13a} indeed it was explicitly stated, in the case of the 8-vertex model, that these fusion relations together with the known analyticity properties of the transfer matrix can be used to characterize the transfer matrix eigenvalues (as solutions to a system of quadratic equations of $N$ equations in $N$ unknowns). However, it was there pointed out that such a purely functional approach does not allow to identify the solutions to the system of equations which correspond to true eigenvalues. This is the case for the purely functional methods which do not allow for the construction of eigenvectors. Such a problem is indeed solved by our construction of the SoV basis in \cite{MaiN18}. Indeed, our SoV basis with the combined use of transfer matrix fusion relations and their known analytic properties allows to identify the eigenvalues as the solutions to the system for which a unique (up to normalization) corresponding eigenvector can be constructed.

Finally, let us mention that while this work was already completed, an interesting paper \cite{RyaV18} appeared, using the ideas of \cite{MaiN18}, also dealing with quantum integrable models associated to  $gl_n$-invariant $R$-matrices. In this article, the conjecture discussed in  \cite{MaiN18} (and proven there for the $gl_2$ case) that our SoV basis can be constructed in such a way that it is an eigenbasis for the Sklyanin $B$-operator is argued for the $gl_n$ case. Let us comment however that the strategy we use in the present article (and also in \cite{MaiN18}) to completely characterize the transfer matrix spectrum does not use the properties of the Sklyanin $B$-operator. Rather, as will be shown in the following, we use directly the properties of our  SoV basis constructed from repeated actions of transfer matrices on a generic co-vector of the Hilbert space. From there, we will show that the structure constants of the commutative and associative Bethe algebra of conserved charges generated by the transfer matrices can be completely determined from the transfer matrix fusion relations and their asymptotic behavior properties (see eq.\eqref{ThTh}). It leads in a direct way to the complete characterization of the spectrum of these transfer matrices (eigenvalues and corresponding eigenvectors). Notice that for Gaudin $gl_n$ models the interest of such a Bethe algebra was pointed out in \cite{MukTV11}. Let us finally remark that it would be quite interesting to investigate the possible role of these structure constants in the approach of \cite{{MauO12}} to quantum integrable models.

This article is organized as  follows. In section 2 we fix the basic definitions and the essential fusion and asymptotic behavior properties of the transfer matrices that we need for our purposes. In section 3 we completely characterize the spectrum of the transfer matrix for $gl_n$ models in the fundamental representations in terms of a discrete set of equations. In section 4 we prove that this characterization is equivalent to a functional quantum spectral curve equation.  Further, in the case where the twist matrix is diagonalizable with simple spectrum we give a reconstruction of the Baxter $Q$-operator satisfying the corresponding $T$-$Q$ equation together with the fused transfer matrices. Finally in section 5 we give an Algebraic Bethe ansatz like rewriting of the transfer matrix complete spectrum. Important properties of the transfer matrices commutative algebra used in section 3 are proven in Appendix A. Similarly, technical proofs needed in section 4 are gathered in Appendix B.

\section{Transfer matrices for quasi-periodic $Y(gl_n)$ fundamental model}

Here and in the following we denote by $N$ the number of lattice sites of the model. Using the same notations as in \cite{MaiN18}, let us consider the Yangian $gl_{n}$ $R$-matrix%
\begin{equation}
R_{a,b}(\lambda _{a}-\lambda _{b})=(\lambda _{a}-\lambda _{b})I_{a,b}+%
\eta  \mathbb{P}_{a,b}\in \rm{End}(V_{a}\otimes V_{b}),\text{ \ with }V_{a}=\mathbb{C}%
^{n}\text{, }V_{b}=\mathbb{C}^{n}\text{, }n\in \mathbb{N}^{\ast },
\end{equation}%
where $\mathbb{P}_{a,b}$ is the permutation operator on the tensor product $%
V_{a}\otimes V_{b}$ and $\eta$ is an arbitrary complex number. It is solution of the Yang-Baxter
equation written in $\rm{End}(V_{a}\otimes V_{b}\otimes V_{c})$:%
\begin{equation}
R_{a,b}(\lambda _{a}-\lambda _{b})R_{a,c}(\lambda _{a}-\lambda
_{c})R_{b,c}(\lambda _{b}-\lambda _{c})=R_{b,c}(\lambda _{b}-\lambda
_{c})R_{a,c}(\lambda _{a}-\lambda _{c})R_{a,b}(\lambda _{a}-\lambda _{b}),
\end{equation}%
and any matrix $K\in \rm{End}(\mathbb{C}^{n})$ satisfies:%
\begin{equation}
R_{a,b}(\lambda _{a},\lambda _{b})K_{a}K_{b}=K_{b}K_{a}R_{a,b}(\lambda
_{a},\lambda _{b})\in \rm{End}(V_{a}\in V_{b}\otimes V_{c}),
\end{equation}%
i.e. it realizes the $gl_{n}$ invariance  of the considered $R$-matrix. Then we can define the general (twisted) monodromy matrix,%
\begin{equation}
M_{a}^{(K)}(\lambda)\equiv K_{a}R_{a,N}(\lambda
_{a}-\xi _{N})\cdots R_{a,1}(\lambda _{a}-\xi _{1}),
\end{equation}%
which satisfies the Yang-Baxter equation,%
\begin{equation}
R_{a,b}(\lambda _{a}-\lambda _{b})M_{a}^{(K)}(\lambda _{a})M_{b}^{(K)}(\lambda _{b})=M_{b}^{(K)}(\lambda _{b})M_{a}^{(K)}(\lambda _{b})R_{a,b}(\lambda _{a}-\lambda _{b})
\end{equation}%
in $\rm{End}(V_{a}\otimes V_{b}\otimes \mathcal{H})$, with $\mathcal{H}\equiv
\otimes _{l=1}^{N}V_{l}$ and its dimension $d=n^{N}$. Hence it defines a
representation of the Yang-Baxter algebra associated to this $R$-matrix and
the following one parameter family of commuting transfer matrices:%
\begin{equation}
T^{(K)}(\lambda)\equiv {\rm{tr}}_{V_{a}}M_{a}^{(K)}(\lambda).
\end{equation}
In the above formulae, and in all this article, the complex parameters $\{\xi_{1},...,\xi _{N}\}$ are called {\em inhomogeneity parameters}, and we will assume in the following that  they are in generic position such that the above Yang-Baxter algebra representation is irreducible. 

Let us define the following antisymmetric projectors:%
\begin{equation}
P_{1,...,m}^{-}=\frac{\sum_{\pi \in S_{m}}\left( -1\right) ^{\sigma _{\pi
}}P_{\pi }}{m!},
\end{equation}%
where:%
\begin{equation}
P_{\pi }(v_{1}\otimes \cdots \otimes v_{m})=v_{\pi (1)}\otimes \cdots
\otimes v_{\pi (m)},
\end{equation}%
with $P_{1}^{-}=I$. Then the following proposition holds:

\begin{proposition} [\hspace{-0.02cm}\cite{KulRS81,KulR82,KunNS94}]
The fused transfer matrices (quantum spectral invariants):%
\begin{equation}
T_{m}^{\left( K\right) }(\lambda )\equiv {\rm{tr}}_{1,...,m}\left[
P_{1,...,m}^{-}M_{1}^{(K)}(\lambda )M_{2}^{\left( K\right) }(\lambda -\eta
)\cdots M_{m}^{\left( K\right) }(\lambda -(m-1)\eta )\right] ,\forall m\in
\{1,...,n\}
\end{equation}%
generate $n$ one parameter families of commuting operators:
\begin{equation}
\left[ T_{l}^{\left( K\right) }(\lambda ),T_{m}^{\left( K\right) }(\mu )%
\right] =0\, , \qquad \forall l,m\in\{1,...,n\} .
\end{equation}%
The last quantum spectral invariant, the so-called quantum determinant:%
\begin{equation*}
q{\rm{-det}}M^{(K)}(\lambda )\equiv {\rm{tr}}_{1,...,n}\left[
P_{1,...,n}^{-}M_{1}^{(K)}(\lambda )M_{2}^{\left( K\right) }(\lambda -\eta
)\cdots M_{n}^{\left( K\right) }(\lambda -(n-1)\eta )\right],
\end{equation*}%
is moreover a central element of the algebra:
\begin{equation}
\lbrack q{\rm{-det}}M^{(K)}(\lambda ),M_{a}^{(K)}(\mu )]=0.
\end{equation}
\end{proposition}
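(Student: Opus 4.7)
The proof will rest on the observation that the rational $R$-matrix $R(\lambda)=\lambda I+\eta\mathbb{P}$ is proportional to the antisymmetrizer at the fusion point: $R_{ab}(-\eta)=-2\eta\,P^-_{ab}$, and more generally $R_{ab}((k-1)(-\eta))$ on $V_a\otimes V_b^{\otimes k}$ produces the rank-one antisymmetric projector up to normalization after iteration. My plan is to derive first a ``fused'' Yang--Baxter intertwining relation between ordered products of monodromy matrices at shifted spectral parameters, then to trace with the antisymmetric projectors.

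First I would establish the absorption lemma: for any $m\le n$,
\begin{equation*}
P^-_{1,\dots,m}\,M^{(K)}_1(\la)M^{(K)}_2(\la-\eta)\cdots M^{(K)}_m(\la-(m-1)\eta)
=P^-_{1,\dots,m}\,M^{(K)}_1(\la)\cdots M^{(K)}_m(\la-(m-1)\eta)\,P^-_{1,\dots,m}.
\end{equation*}
The idea is to build $P^-_{1,\dots,m}$ from pairwise $P^-_{j,j+1}$'s, and for each adjacent pair to use the RTT relation at the fusion point $\la_j-\la_{j+1}=-\eta$, where the left-hand $R$-matrix becomes $-2\eta P^-_{j,j+1}$ and the right-hand one, evaluated at the same argument, gives the reinserted projector. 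The $K$-part is handled by the intertwining $R_{a,b}K_aK_b=K_bK_aR_{a,b}$ given in the excerpt, so nothing obstructs moving the $K$'s across.

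Second, for the commutativity $[T_l^{(K)}(\la),T_m^{(K)}(\mu)]=0$, I would introduce two disjoint groups of auxiliary spaces $\{a_1,\dots,a_l\}$ and $\{b_1,\dots,b_m\}$ and form the product of all the RTT relations $R_{a_i,b_j}(\la-\mu-(i-1)\eta+(j-1)\eta)$, obtaining
\begin{equation*}
\mathcal{R}(\la,\mu)\,\prod_i M^{(K)}_{a_i}(\la-(i-1)\eta)\prod_j M^{(K)}_{b_j}(\mu-(j-1)\eta)
=\prod_j M^{(K)}_{b_j}(\cdot)\prod_i M^{(K)}_{a_i}(\cdot)\,\mathcal{R}(\la,\mu),
\end{equation*}
where $\mathcal{R}(\la,\mu)$ is the ordered product of the $lm$ auxiliary $R$-matrices. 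For generic $\la-\mu$ this $\mathcal{R}$ is invertible. Then I multiply both sides by $P^-_{a_1,\dots,a_l}\otimes P^-_{b_1,\dots,b_m}$, use the absorption lemma in both auxiliary groups, take the trace over all $a_i,b_j$ and use cyclicity to exchange $\mathcal{R}$ with its inverse. What survives is precisely $T^{(K)}_l(\la)T^{(K)}_m(\mu)=T^{(K)}_m(\mu)T^{(K)}_l(\la)$, and the equality extends to all values by analyticity in $\la-\mu$.

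Third, for the centrality of the quantum determinant, I would exploit that $P^-_{1,\dots,n}$ is the rank-one projector onto the one-dimensional $gl_n$-antisymmetric subspace of $(\mathbb{C}^n)^{\otimes n}$. The same intertwining argument with one extra auxiliary space $a$ gives
\begin{equation*}
\Big(\prod_{j=1}^n R_{a,j}(\mu-\la+(j-1)\eta)\Big)\,M^{(K)}_a(\mu)\,M^{(K)}_1(\la)\cdots M^{(K)}_n(\la-(n-1)\eta)
= M^{(K)}_1(\la)\cdots M^{(K)}_n(\la-(n-1)\eta)\,M^{(K)}_a(\mu)\,\Big(\prod_{j=1}^n R_{a,j}(\cdot)\Big).
\end{equation*}
When one sandwiches by $P^-_{1,\dots,n}$ and traces over $1,\dots,n$, the product $\prod_j R_{a,j}(\mu-\la+(j-1)\eta)$ restricted to the image of $P^-_{1,\dots,n}$ reduces to a scalar on the auxiliary space $a$ (a standard calculation: one checks that, after antisymmetrization over $1,\dots,n$, this product equals $\prod_{j=1}^n(\mu-\la+(j-1)\eta)$ times the identity on $V_a\otimes\mathrm{Im}\,P^-_{1,\dots,n}$). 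Cancelling this nonzero scalar yields $[q\text{-det}\,M^{(K)}(\la),M^{(K)}_a(\mu)]=0$.

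The genuinely delicate step is the last one, namely proving that the auxiliary factor collapses to a scalar on the one-dimensional image of $P^-_{1,\dots,n}$; everything else is a bookkeeping exercise in the Yang--Baxter equation plus the $K$-intertwining already stated. Invertibility of $\mathcal{R}$ in step two and generic position of the inhomogeneities (to guarantee no accidental vanishing) are the mild technical points to check.
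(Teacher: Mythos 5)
The paper does not prove this proposition at all: it is quoted as a known result with references to Kulish--Reshetikhin--Sklyanin, Kulish--Reshetikhin and Kuniba--Nakanishi--Suzuki, so there is no internal proof to compare against. Your sketch is precisely the standard fusion argument from those references --- degeneration of the $R$-matrix at $\lambda=-\eta$ into $-2\eta P^{-}$, the absorption identity $P^{-}_{1,\dots,m}\,M_{1}\cdots M_{m}=P^{-}_{1,\dots,m}\,M_{1}\cdots M_{m}\,P^{-}_{1,\dots,m}$, the fused intertwiner $\mathcal{R}$ for commutativity, and Schur-type collapse on $V_{a}\otimes\Lambda^{n}\mathbb{C}^{n}$ for centrality --- and it is correct in outline. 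Two points deserve tightening. First, in the commutativity step the cyclicity of the partial trace only closes the argument if you also know that $\mathcal{R}$ is compatible with $P^{-}_{a_1,\dots,a_l}\otimes P^{-}_{b_1,\dots,b_m}$ (equivalently, you must apply the absorption identity on \emph{both} sides of the intertwining relation before tracing); as written, $\tr_{ab}\bigl[P^{-}P^{-}\mathcal{R}^{-1}\Pi_{b}\Pi_{a}\mathcal{R}\bigr]$ does not immediately reduce to $\tr_{ab}\bigl[P^{-}P^{-}\Pi_{b}\Pi_{a}\bigr]$ without that compatibility. Second, the scalar you quote for the restriction of $\prod_{j}R_{a,j}(\mu-\lambda+(j-1)\eta)$ to $V_{a}\otimes\mathrm{Im}\,P^{-}_{1,\dots,n}$ is not right: a direct check at $n=2$ gives $(\mu-\lambda)(\mu-\lambda+2\eta)$ rather than $(\mu-\lambda)(\mu-\lambda+\eta)$, consistent with the quantum-determinant formula $(\lambda-\xi+\eta)\prod_{m=1}^{n-1}(\lambda-\xi-m\eta)$ of Proposition 2.2. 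The correct justification that it \emph{is} a scalar is Schur's lemma applied to the irreducible $gl_{n}$-module $V_{a}\otimes\Lambda^{n}\mathbb{C}^{n}\simeq V_{a}$, and for centrality only its generic nonvanishing matters, so neither issue is fatal --- but both should be made explicit before the sketch counts as a proof.
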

Moreover, the quantum spectral invariants satisfy the following properties:

\begin{proposition}
The fused transfer matrices have the following polynomial form:\newline

a) $T_{m}^{\left( K\right) }(\lambda )$ has degree $mN$ in $\lambda $ and central asymptotic behavior given by:%
\begin{equation}
T_{m}^{(K,\infty )}\equiv \lim_{\lambda \rightarrow \infty }\lambda ^{-m%
N}T_{m}^{\left( K\right) }(\lambda )={\rm{tr}}_{1,...,m}\left[
P_{1,...,m}^{-}K_{1}K_{2}\cdots K_{m}\right] ,\text{ \ }\forall m\in
\{1,...,n-1\},
\end{equation}

b) the quantum determinant reads:%
\begin{equation}
q{\rm{-det}}M^{(K)}(\lambda )=T_{n}^{\left( K\right) }(\lambda )={\rm{det}K}%
\text{ }\prod_{b=1}^{N}\left[ (\lambda -\xi
_{b} +\eta)\prod_{m=1}^{n-1}(\lambda -\xi _{b}-m\eta )\right] .
\end{equation}%
The next fusion identities hold:%
\begin{equation}
T_{1}^{\left( K\right) }(\xi _{a})T_{m}^{\left( K\right) }(\xi _{a}-\eta
)=T_{m+1}^{\left( K\right) }(\xi _{a}),\text{ \ }\forall m\in \{1,...,n-1\},
\end{equation}%
together with the following central zeroes structure:%
\begin{equation}
T_{m}^{\left( K\right) }(\xi _{a}+r\eta )=0,\text{ \ }\forall r\in
\{1,...,m-1\}.
\end{equation}
\end{proposition}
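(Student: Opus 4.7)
The plan is to treat the polynomial structure, the explicit quantum determinant formula, the fusion identities, and the central zeros in that order, reducing everything to the special values of $R_{a,b}(\lambda)$ at $\lambda = 0$ and $\lambda = \pm\eta$, where $R$ becomes proportional to the permutation $\mathbb{P}$ or to a rank-adapted projector.

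For part (a), I would start from the fact that $R_{a,q}(\la-\xi_{q}) = (\la-\xi_{q})I_{a,q} + \eta\,\mathbb{P}_{a,q}$ is affine in $\la$ with leading term $\la\,I_{a,q}$. Hence $M_{a}^{(K)}(\la)$ is a polynomial of degree $N$ in $\la$ whose leading coefficient $\la^{N}K_{a}$ acts nontrivially only on the auxiliary space. The product of $m$ such factors with shifted arguments is therefore a polynomial of degree $mN$ with leading coefficient $\la^{mN}K_{1}K_{2}\cdots K_{m}$. Taking the projected trace gives the central asymptotic $T_{m}^{(K,\infty)} = \tr_{1,\ldots,m}[P^{-}_{1,\ldots,m}K_{1}\cdots K_{m}]$, which is a scalar (the character of the $m$-th antisymmetric power of $K$).

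For the quantum determinant in (b), the key structural fact is that $P^{-}_{1,\ldots,n}$ has rank one on $(\mathbb{C}^{n})^{\otimes n}$, projecting onto the one-dimensional fully antisymmetric subspace. This already forces $T_{n}^{(K)}(\la)$ to be a scalar on the quantum space, consistent with the centrality established in Proposition 2.1, and its leading term is $\la^{nN}\det K$ by (a). To pin down the polynomial factor, I would factor the trace site by site: for each quantum site $q$, the vertical product $R_{1,q}(\la-\xi_{q})R_{2,q}(\la-\eta-\xi_{q})\cdots R_{n,q}(\la-(n-1)\eta-\xi_{q})$, sandwiched by $P^{-}_{1,\ldots,n}$, reduces to a scalar on $V_{q}$ equal to $(\la-\xi_{q}+\eta)\prod_{m=1}^{n-1}(\la-\xi_{q}-m\eta)$. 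This is a direct computation once one uses that $P^{-}_{1,\ldots,n}$ projects onto the antisymmetric line, so that the product is effectively the determinant of an $n\times n$ matrix with explicit $\la$-dependent entries. Multiplying over $q=1,\ldots,N$ and combining with the $\det K$ normalization from (a) yields the stated formula.

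The fusion identity $T_{1}^{(K)}(\xi_{a})T_{m}^{(K)}(\xi_{a}-\eta) = T_{m+1}^{(K)}(\xi_{a})$ follows from the standard fusion (railway) argument. At $\la = \xi_{a}$, the factor $R_{0,a}(0) = \eta\,\mathbb{P}_{0,a}$ inside $M_{0}^{(K)}(\xi_{a})$ is a permutation between the auxiliary space $V_{0}$ and the quantum site $V_{a}$. Using the Yang-Baxter equation to transport this permutation through the remaining $R$-matrices and combining the trace over $V_{0}$ with the projector $P^{-}_{1,\ldots,m}$ already present in $T_{m}^{(K)}(\xi_{a}-\eta)$, one sees that the auxiliary-space antisymmetrizer assembles into $P^{-}_{0,1,\ldots,m}$ with the correct normalization; this reconstructs $T_{m+1}^{(K)}(\xi_{a})$.

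Finally, for the central zeros $T_{m}^{(K)}(\xi_{a}+r\eta) = 0$ with $r\in\{1,\ldots,m-1\}$, the product at quantum site $a$ contains the adjacent pair $R_{r,a}(\eta) = 2\eta\,P^{+}_{r,a}$ and $R_{r+1,a}(0) = \eta\,\mathbb{P}_{r+1,a}$. Using $P^{+}_{r,a}\mathbb{P}_{r+1,a} = \mathbb{P}_{r+1,a}P^{+}_{r,r+1}$, the symmetric projector $P^{+}_{r,r+1}$ on the two auxiliary spaces $V_{r}$ and $V_{r+1}$ is exposed in the product. Since $P^{-}_{1,\ldots,m}\mathbb{P}_{r,r+1} = -P^{-}_{1,\ldots,m}$, we have $P^{-}_{1,\ldots,m}P^{+}_{r,r+1} = 0$, and the trace vanishes. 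I expect the main obstacle to lie in the combinatorial bookkeeping of the fusion step of the previous paragraph: keeping track of the normalization factor when $\mathbb{P}_{0,a}$ is transported through the other $R$-matrices and identifying the resulting auxiliary-space antisymmetrizer as exactly $P^{-}_{0,1,\ldots,m}$ requires some care, even though the underlying mechanism is elementary.
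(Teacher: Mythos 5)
Your proposal is a self-contained $R$-matrix computation, whereas the paper's own proof is essentially a citation: it invokes the general fusion identities of \cite{KulRS81,KulR82,KunNS94}, observes that they degenerate to the stated product formulae when evaluated at the inhomogeneities and their shifts, and obtains the central zeros as a by-product of those general identities (they are visible as the explicit prefactors $\prod_{r}(\lambda-\xi_b-r\eta)$ in the interpolation formula of the subsequent corollary). Your parts (a) and (b) and your sketch of the fusion identity via $R_{0,a}(0)=\eta\,\mathbb{P}_{0,a}$ are the standard direct arguments, and you correctly flag that the real work in the fusion step is the bookkeeping that reassembles $P^-_{0,1,\ldots,m}$.

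There is, however, a genuine gap in the central-zeros argument, and it is not mere bookkeeping. The factors $R_{r,a}(\eta)$ and $R_{r+1,a}(0)$ are \emph{not} adjacent in the product defining $T_m^{(K)}$: between them sit $R_{r,a-1}\cdots R_{r,1}$, $K_{r+1}$ and $R_{r+1,N}\cdots R_{r+1,a+1}$, which act nontrivially on $V_r$ or $V_{r+1}$ and therefore cannot be commuted out of the way factor by factor. One can repair adjacency by reordering the whole product into site-columns $C_q=R_{1,q}\cdots R_{m,q}$ (entire columns for distinct sites do commute), and inside the column $C_a$ your identity $P^{+}_{r,a}\mathbb{P}_{r+1,a}=\mathbb{P}_{r+1,a}P^{+}_{r,r+1}$ does expose $P^{+}_{r,r+1}$. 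But that projector then sits in the middle of the reordered product, separated from $P^{-}_{1,\ldots,m}$ by the columns $C_N,\ldots,C_{a+1}$ (or $C_{a-1},\ldots,C_1$, depending on which side you push it to), none of which commute with $P^{+}_{r,r+1}$; and cyclicity of the trace is unavailable because the trace is only over the auxiliary spaces while every factor also acts on $\mathcal{H}$. To conclude $P^{-}_{1,\ldots,m}\cdots P^{+}_{r,r+1}\cdots=0$ you need the additional Yang--Baxter--based lemma that the antisymmetrizer can be reinserted between columns, i.e.\ the one-sided projection property $(1-P^{-}_{1,\ldots,m})\,C_q\,P^{-}_{1,\ldots,m}=0$ (or its mirror) at the fusion shifts, which is exactly what makes the fused monodromy well defined on $\wedge^m\mathbb{C}^n$. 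The same lemma is silently used in your site-by-site factorization of the quantum determinant in part (b). With that lemma stated and proved (it follows from writing $P^-$ as an ordered product of $R$-matrices at the degenerate points $-k\eta$ and applying the Yang--Baxter equation), your argument closes; without it, the step ``the trace vanishes'' does not follow.
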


\begin{proof}The general fusion identities \cite{KulRS81,KulR82,KunNS94} reduce to the above ones if computed in the inhomogeneities and their shifted values and moreover they imply the central zeroes structure stated in the proposition.
\end{proof}

Let us introduce the functions%
\begin{equation}
g_{a,\mathbf{h}}^{\left( m\right) }(\lambda )=\prod_{b\neq a,b=1}^{N%
}\frac{\lambda -\xi _{b}^{(h_{b})}}{\xi _{a}^{(h_{a})}-\xi _{b}^{(h_{b})}}%
\prod_{b=1}^{N}\prod_{r=1}^{m-1}\frac{1}{\xi _{a}^{(h_{a})}-\xi
_{b}^{(-r)}},\text{ \ \ }\xi _{b}^{(h)}=\xi _{b}-h\eta ,
\end{equation}%
and%
\begin{equation}
T_{m,\mathbf{h}}^{(K,\infty )}(\lambda )=T_{m}^{(K,\infty )}\prod_{b=1}^{%
N}(\lambda -\xi _{b}^{(h_{n})}),
\end{equation}%
then the following corollary holds:

\begin{corollary}
The transfer matrix $T_{1}^{\left( K\right) }(\lambda )$ allows to completely characterize all the higher transfer matrices $T_{m}^{\left( K\right) }(\lambda )$ by the fusion
equations and central zeros structure in terms of the following interpolation formulae:%
\begin{equation}
T_{m+1}^{\left( K\right) }(\lambda )=\prod_{b=1}^{N%
}\prod_{r=1}^{m}(\lambda -\xi _{b}-r\eta )\left[ T_{m+1,\mathbf{h}=\mathbf{0}%
}^{(K,\infty )}(\lambda )+\sum_{a=1}^{N}g_{a,\mathbf{h}=\mathbf{0}%
}^{\left( m+1\right) }(\lambda )T_{m}^{\left( K\right) }(\xi _{a}-\eta
)T_{1}^{\left( K\right) }(\xi _{a})\right] , \label{T-Func-form-m}
\end{equation}
where we have denoted by $\mathbf{h}=\mathbf{0}$ the special set of values of $\mathbf{h}$ where for all $k$, $h_k=0$.
\end{corollary}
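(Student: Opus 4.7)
The plan is to recognize the right-hand side of \eqref{T-Func-form-m} as a Lagrange interpolation of the polynomial obtained by dividing $T_{m+1}^{(K)}(\lambda)$ by its known central-zero factor. Concretely, I would introduce
\[
\widetilde T_{m+1}^{(K)}(\lambda) \equiv \frac{T_{m+1}^{(K)}(\lambda)}{\prod_{b=1}^{N}\prod_{r=1}^{m}(\lambda-\xi_b-r\eta)}
\]
and show first that $\widetilde T_{m+1}^{(K)}(\lambda)$ is a polynomial in $\lambda$ of degree exactly $N$, then reconstruct it by Lagrange interpolation on the $N$ points $\xi_1,\dots,\xi_N$ with its leading coefficient supplied by the asymptotic behavior, and finally multiply back by the central-zero factor.

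For the polynomiality and degree count, Proposition 2.2(a) (together with the explicit form of the quantum determinant when $m+1=n$) tells us $T_{m+1}^{(K)}(\lambda)$ is polynomial of degree $(m+1)N$, while the central-zero structure of Proposition 2.2 gives $T_{m+1}^{(K)}(\xi_b+r\eta)=0$ for all $b\in\{1,\dots,N\}$ and $r\in\{1,\dots,m\}$. Under the genericity assumption on the inhomogeneities, these $mN$ points are pairwise distinct, so $\prod_{b,r}(\lambda-\xi_b-r\eta)$ divides $T_{m+1}^{(K)}(\lambda)$ in $\mathbb{C}[\lambda]$, and $\widetilde T_{m+1}^{(K)}(\lambda)$ has degree $(m+1)N-mN=N$.

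Next I would supply the $N+1$ interpolation data. Its leading coefficient is, by Proposition 2.2(a), equal to $T_{m+1}^{(K,\infty)}$. Its values at the $\xi_a$ are obtained by combining the definition of $\widetilde T_{m+1}^{(K)}$ with the fusion identity $T_{m+1}^{(K)}(\xi_a)=T_1^{(K)}(\xi_a)\,T_m^{(K)}(\xi_a-\eta)$:
\[
\widetilde T_{m+1}^{(K)}(\xi_a)=\frac{T_1^{(K)}(\xi_a)\,T_m^{(K)}(\xi_a-\eta)}{\prod_{b=1}^{N}\prod_{r=1}^{m}(\xi_a-\xi_b-r\eta)}.
\]
Feeding these into the elementary identity for a degree-$N$ polynomial $P$ with leading coefficient $c$,
\[
P(\lambda)=c\prod_{b=1}^{N}(\lambda-\xi_b)+\sum_{a=1}^{N}P(\xi_a)\prod_{b\neq a}\frac{\lambda-\xi_b}{\xi_a-\xi_b},
\]
(where the usual correction term $c\prod_b(\xi_a-\xi_b)$ vanishes because the product includes $b=a$), and then multiplying both sides by $\prod_{b,r}(\lambda-\xi_b-r\eta)$, recovers exactly the stated formula: the $c$-term reproduces $T_{m+1,\mathbf{h}=\mathbf{0}}^{(K,\infty)}(\lambda)=T_{m+1}^{(K,\infty)}\prod_b(\lambda-\xi_b)$ since $\xi_b^{(0)}=\xi_b$, while the $a$-th term becomes $g_{a,\mathbf{h}=\mathbf{0}}^{(m+1)}(\lambda)\,T_m^{(K)}(\xi_a-\eta)\,T_1^{(K)}(\xi_a)$, once one observes that the two products in the definition of $g_{a,\mathbf{0}}^{(m+1)}(\lambda)$ are exactly the $\lambda$-Lagrange kernel at $\xi_a$ and the reciprocal $1/\prod_{b,r}(\xi_a-\xi_b-r\eta)$ coming from $\widetilde T_{m+1}^{(K)}(\xi_a)$.

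No step here is genuinely difficult: the whole statement is a direct assembly of the polynomial form, asymptotic behavior, central-zero structure, and fusion identity collected in Proposition 2.2, combined with Lagrange interpolation. The only sub-point needing a brief justification is the exact divisibility of $T_{m+1}^{(K)}(\lambda)$ by $\prod_{b,r}(\lambda-\xi_b-r\eta)$, which hinges on the $mN$ shifted inhomogeneities $\xi_b+r\eta$ being pairwise distinct; this is secured by the genericity of the $\xi_j$ already assumed in the setup. Consequently I expect no real obstacle in writing out the proof.
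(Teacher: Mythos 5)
Your proposal is correct and follows exactly the paper's own (one-line) argument: divide out the central-zero factor, interpolate the resulting degree-$N$ polynomial at the $\xi_a$ with leading coefficient fixed by the asymptotics, and use the fusion identity $T_{m+1}^{(K)}(\xi_a)=T_1^{(K)}(\xi_a)T_m^{(K)}(\xi_a-\eta)$ to supply the interpolation data. Your write-up simply makes explicit the divisibility and degree-count steps that the paper leaves implicit.
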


\begin{proof}
The known central zeroes and asymptotic behavior imply the above interpolation formula
once we use the fusion equations to write $T_{m}^{\left( K\right) }(\xi
_{a}) $.
\end{proof}

\section{Complete transfer matrix spectrum in our SoV approach}

\subsection{SoV covector basis for the quasi-periodic $Y(gl_n)$ fundamental model}

The fundamental Proposition 2.4 proven in \cite{MaiN18} for the construction of the SoV
covector basis applies to the fundamental representation of
the $gl_n$ rational Yang-Baxter algebra, i.e. the Yangian $gl_n$. Let us introduce the following notations 
\begin{equation}
K=W_{K}K_{J}W_{K}^{-1}
\end{equation}%
with $K_{J}$ the Jordan form of $K$ with the following block form:%
\begin{equation}
K_{J}=\left( 
\begin{array}{cccc}
K_{J}^{\left( 1\right) } & 0 & \cdots & 0 \\ 
0 & K_{J}^{\left( 2\right) } & \ddots & 0 \\ 
0 & \ddots & \ddots & 0 \\ 
0 & 0 & \cdots & K_{J}^{\left( M\right) }%
\end{array}%
\right) ,  \label{Jordan Form}
\end{equation}%
where any $K_{J}^{\left( a\right) }$ is a $d_{a}\times d_{a}$ Jordan block,
let us say upper triangular, with eigenvalue $k_{a}$, where $%
\sum_{a=1}^{M}d_{a}=n$. Then Proposition 2.4 of \cite{MaiN18} reads:

\begin{proposition}[\hspace{-0.02cm}\cite{MaiN18}]
Let $K$ be a $n\times n$ w-simple matrix, i.e. a matrix with non-degenerate spectrum, then
\begin{equation}
\langle h_{1},...,h_{N}|\equiv \langle S|\prod_{n=1}^{N%
}(T_{1}^{(K)}(\xi _{n}))^{h_{n}}\text{ \ for any }\{h_{1},...,h_{N%
}\}\in \{0,...,n-1\}^{\otimes N},  \label{SoV-L-gln-basis}
\end{equation}%
form a covector basis of $\mathcal{H}=\bigotimes_{a=1}^{N}V_{a}$,
with $V_{a}\simeq \mathbb{C}^{n}$, for almost any choice of $\langle S|$ and of the inhomogeneities satisfying the irreducibility condition. In particular, the state $%
\langle S|$ can take the next tensor product form:%
\begin{equation}
\langle S|=\bigotimes_{a=1}^{N}\langle S,a|\Gamma _{W}^{-1},\text{
\ \ }\Gamma _{W}=\bigotimes_{a=1}^{N}W_{K,a}
\end{equation}%
where:%
\begin{equation}
\langle S,a|W_{K,a}^{-1}=(w_{1}^{\left( 1\right) },...,w_{d_{1}}^{\left(
1\right) },w_{1}^{\left( 2\right) },...,w_{d_{2}}^{\left( 2\right)
},...,w_{1}^{\left( M\right) },...,w_{d_{M}}^{\left( M\right) })\in V_{a},
\end{equation}%
as soon as we take:%
\begin{equation}
\prod_{j=1}^{M}w_{1}^{\left( j\right) }\neq 0\,.
\end{equation}
\end{proposition}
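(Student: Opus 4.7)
The plan is to reduce the claim to the abstract SoV basis construction of Proposition 2.4 of \cite{MaiN18}, whose hypothesis is the w-simplicity of the commutative family generated by $\{T_1^{(K)}(\xi_a)\}_{a=1}^{N}$. Since the covector count already matches $n^{N}=\dim \mathcal{H}$, and commutativity of the $T_1^{(K)}(\xi_a)$ makes the definition (\ref{SoV-L-gln-basis}) unambiguous in the ordering, the only thing to prove is linear independence at one convenient choice of parameters; an analyticity argument then upgrades this to ``almost any'' choice.

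First I would observe that the determinant of the matrix of components of the covectors in any fixed basis of $\mathcal{H}$ is polynomial in the entries of $K$ and of $\langle S|$, and rational in the inhomogeneities $\xi_{a}$. Hence, by a standard Zariski-density argument, non-vanishing at a single convenient point forces non-vanishing on a dense open subset of parameter space, which is exactly what the proposition asserts.

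Next, I would specialize to $K=\mathrm{diag}(k_1,\ldots,k_n)$ with pairwise distinct eigenvalues $k_j$, so that $W_K=\mathbf{1}$, $M=n$, $d_{a}=1$, and $\langle S|=\bigotimes_{a=1}^{N}\langle S,a|$ with $\langle S,a|=(w_1,\ldots,w_n)$. The key ingredient is the $R$-matrix collapse $R_{a,a}(0)=\eta\,\mathbb{P}_{a,a}$ occurring inside $M_{a}^{(K)}(\xi_{a})$: it allows one to rewrite $T_1^{(K)}(\xi_a)$, up to an overall prefactor built out of $R$-matrices at inhomogeneities $\xi_b$ for $b\neq a$, as the local operator $K^{(a)}$ acting at the quantum site $a$ multiplied by an operator acting on the remaining sites only. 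When this operator acts on the factorized state $\langle S|$, each factor $(T_1^{(K)}(\xi_{a}))^{h_{a}}$ then multiplies the $a$-th tensor slot by powers $k_{j}^{h_{a}}$, indexed by the chosen local eigendirection. Consequently, the Gram matrix of the covectors with respect to the natural tensor product basis factorizes into a tensor product of $N$ local $n\times n$ blocks, each one a Vandermonde matrix $(k_{j}^{h})_{j,h=0}^{n-1}$ multiplied by the diagonal weights $\mathrm{diag}(w_1,\ldots,w_n)$.

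The main obstacle will be the careful tracking of the collapse step at $\lambda=\xi_{a}$ on the full tensor product, in particular verifying that the residual $R$-matrix contributions coming from sites $b\neq a$ factor out as an invertible global scalar (or, more precisely, an invertible operator that commutes with the selection of the local $K$-eigendirection on site $a$), so that they do not mix different values of $h_{a}$. Once this is established, the local determinant is $\prod_{i<j}(k_{j}-k_{i})\prod_{j}w_{j}$, which is non-zero by the distinctness of the $k_{j}$ and by the non-vanishing condition $\prod_{j=1}^{n}w_{j}\neq 0$ (specializing the hypothesis $\prod_{j=1}^{M}w_{1}^{(j)}\neq 0$ to the diagonal case). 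Analytic continuation in $K$ then extends linear independence to every w-simple $K$, diagonalizable or not, and Proposition 2.4 of \cite{MaiN18} yields the basis property together with the factorized form of $\langle S|$ stated in the proposition.
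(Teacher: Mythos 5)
The paper does not reprove this statement: it is imported verbatim as Proposition~2.4 of \cite{MaiN18}, and the only argument given here is that the fundamental representation of $Y(gl_n)$ falls within the hypotheses of that proposition. Your reconstruction follows the same general strategy as the proof in \cite{MaiN18} (reduce to non-vanishing of a determinant that is polynomial in the parameters, exhibit one point where it factorizes into local Vandermonde-type blocks, then invoke genericity), but two steps as you present them do not go through.

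First, the step you yourself flag as ``the main obstacle'' is not merely a bookkeeping issue, and the resolution you hope for is not available. At generic finite inhomogeneities one has
$T_{1}^{(K)}(\xi _{a})=\eta\, R_{a,a-1}(\xi _{a}-\xi _{a-1})\cdots R_{a,1}(\xi _{a}-\xi _{1})\,K_{a}\,R_{a,N}(\xi _{a}-\xi _{N})\cdots R_{a,a+1}(\xi _{a}-\xi _{a+1})$,
and the dressing $R$-matrices genuinely entangle site $a$ with the other sites; they do not ``factor out as an invertible global scalar'', nor do they commute with the selection of a $K$-eigendirection at site $a$. Consequently the matrix of components does \emph{not} factorize into a tensor product of local blocks at a generic point. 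The factorization only occurs in a degenerate limit of the inhomogeneities (sending the $\xi_b$ apart so that the normalized $R$-matrices tend to the identity and $T_{1}^{(K)}(\xi_a)$ collapses onto $K_a$ up to nonzero scalars); it is the non-vanishing of the determinant in that limit, combined with its polynomial dependence on the $\xi_b$, that yields genericity. Without carrying out this limit your ``convenient choice of parameters'' does not exist as described.

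Second, the closing sentence ``analytic continuation in $K$ then extends linear independence to every w-simple $K$, diagonalizable or not'' is not valid. Zariski density gives non-vanishing for \emph{almost all} $K$, whereas the proposition asserts the result for \emph{every} w-simple $K$, and the non-diagonalizable ones form a proper subvariety on which a polynomial non-vanishing generically could still vanish identically. The general case must be handled directly with the Jordan form: in the factorized limit the local $n\times n$ block at each site has rows $\langle S,a|K^{h}$, $h=0,\dots,n-1$, and its determinant is nonzero precisely when $\langle S,a|$ is a cyclic covector for $K$; for an upper-triangular Jordan block $K_J^{(j)}$ this is the condition $w_{1}^{(j)}\neq 0$, i.e. exactly the hypothesis $\prod_{j=1}^{M}w_{1}^{(j)}\neq 0$ of the proposition. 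Your argument, restricted to diagonal $K$, recovers only the Vandermonde specialization $\prod_{i<j}(k_j-k_i)\prod_j w_j\neq 0$ and never explains where the stated condition on the first components of the Jordan blocks comes from.
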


\subsection{Transfer matrix spectrum characterization}

Let us define the following $n-1$ polynomials in $\lambda $ and functions of
a $n\times n$ matrix $K$ and of a point $\{x_{1},...,x_{N}\}\in 
\mathbb{C}^{N}$:%
\begin{equation}
t_{1}^{\left( K,\{x\}\right) }(\lambda )={\rm{tr}}\,K\text{ }%
\prod_{a=1}^{N}(\lambda -\xi _{a})+\sum_{a=1}^{N}g_{a,%
\mathbf{h=0}}^{\left( 1\right) }(\lambda ) x_{a},  \label{Func-form-1}
\end{equation}%
and from it:%
\begin{equation}
t_{m+1}^{\left( K,\{x\}\right) }(\lambda )=\prod_{b=1}^{N%
}\prod_{r=1}^{m}(\lambda -\xi _{b}-r\eta )\left[ T_{m+1,\mathbf{h}=\mathbf{0}%
}^{(K,\infty )}(\lambda )+\sum_{a=1}^{N}g_{a,\mathbf{h}=\mathbf{0}%
}^{\left( m+1\right) }(\lambda )x_{a}t_{m}^{\left( K,\{x\}\right) }(\xi
_{a}-\eta )\right] ,  \label{Func-form-m}
\end{equation}%
for any $m\in \{1,...,n-2\}$. Then, the following characterization of the
transfer matrix spectrum holds:

\begin{theorem}
\label{main-Th}Let us assume that the same conditions implying that $(\ref{SoV-L-gln-basis})$ is a covector basis are satisfied, then we have the following characterization of the spectrum of $T_{1}^{(K)}(\lambda )$:
\begin{equation}
\Sigma _{T^{(K)}}=\left\{ t_{1}(\lambda ):t_{1}(\lambda )=t_{1}^{\left(
K,\{x\}\right) }(\lambda ),\text{ \ }\forall \{x_{1},...,x_{N}\}\in
\Sigma _{T}\right\} ,  \label{SET-T}
\end{equation}%
$\Sigma _{T}$ being the set of solutions to the following system
of $N$ equations of order $n$:%
\begin{equation}
x_{a}t_{n-1}^{\left( K,\{x\}\right) }(\xi _{a}-\eta )\left. =\right. q{\rm{-det}}\,M^{(K)}(\xi _{a}),  \label{Ch-System-SoV-n}
\end{equation}%
in $N$ unknown $\{x_{1},...,x_{N}\}$. Furthermore, the spectrum of $T_{1}^{(K)}(\lambda)$ is non-degenerate and for any $t_{1}(\lambda )\in
\Sigma _{T^{(K)}}$ the associated unique eigenvector $|t\rangle $ has the following wave-function in the left SoV basis, up-to an overall normalization:
\begin{equation}
t_{1, \mathbf{h}} \equiv \langle h_{1},...,h_{N}|t\rangle =\prod_{a=1}^{N%
}t_{1}^{h_{a}}(\xi _{a}).  \label{SoV-Ch-T-eigenV}
\end{equation}%
Finally, if the $n\times n$ twist matrix $K$ has simple spectrum and it is diagonalizable then $T_{1}^{(K)}(\lambda)$ is diagonalizable and with simple spectrum, for almost any choice of the inhomogeneity parameters.
\end{theorem}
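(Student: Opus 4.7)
The plan is to reduce diagonalizability with simple spectrum of $T_1^{(K)}(\lambda)$ to a cardinality count. The previously established parts of Theorem \ref{main-Th} ensure that each eigenvalue $t_1\in\Sigma_{T^{(K)}}$ admits a unique (up to normalization) eigenvector with wave-function \eqref{SoV-Ch-T-eigenV}, and since eigenvectors associated to distinct eigenvalues are linearly independent one has the inequality $|\Sigma_{T^{(K)}}|\le\dim\mathcal{H}=n^N$. Equality would then yield a full eigenbasis with all eigenspaces one-dimensional, hence diagonalizability and simplicity of the spectrum. So the problem reduces to showing $|\Sigma_{T^{(K)}}|=n^N$ under the diagonalizability-with-simple-spectrum hypothesis on $K$.

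Next I would observe that the assignment $\{x\}\mapsto t_1^{(K,\{x\})}$ is injective: evaluating \eqref{Func-form-1} at $\lambda=\xi_a$ together with the Lagrange-type identity $g_{a,\mathbf{h}=\mathbf{0}}^{(1)}(\xi_b)=\delta_{ab}$ yields $t_1^{(K,\{x\})}(\xi_a)=x_a$, so $\{x\}$ is recovered from its associated eigenvalue function. Hence $|\Sigma_{T^{(K)}}|=|\Sigma_T|$, and the problem becomes one of counting solutions of the polynomial system \eqref{Ch-System-SoV-n}. A short induction on $m$ based on \eqref{Func-form-m} shows that $t_m^{(K,\{x\})}(\lambda)$ is of total degree $m$ in the unknowns $\{x_a\}$; therefore $x_a\,t_{n-1}^{(K,\{x\})}(\xi_a-\eta)$ has degree $n$, and \eqref{Ch-System-SoV-n} is a system of $N$ polynomial equations of degree $n$ in $N$ unknowns. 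Bezout's theorem then gives the upper bound $|\Sigma_T|\le n^N$.

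To reach equality for generic inhomogeneity parameters I would proceed in two steps. First, rule out solutions at infinity of the projectivized system: the top-degree-in-$\{x_a\}$ parts of the equations are obtained by iterating \eqref{Func-form-m} and retaining only the highest homogeneous contributions, and their coefficients involve the asymptotic quantum spectral invariants $T_m^{(K,\infty)}=\mathrm{tr}_{1,\dots,m}[P^{-}_{1,\dots,m}K_1\cdots K_m]$. When $K$ is diagonalizable with simple spectrum $\{k_1,\dots,k_n\}$, these reduce to the elementary symmetric polynomials $e_m(k_1,\dots,k_n)$, which for generic such $K$ are all non-zero; one then checks that the leading homogeneous system admits only the trivial projective zero. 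Second, rule out multiple roots by showing that the discriminant of \eqref{Ch-System-SoV-n}, viewed as a polynomial in $\{\xi_a\}$ and the entries of $K$, is not identically zero; a single witness suffices — for instance the well-separated-inhomogeneity regime, or a specialization to diagonal $K$ where simplicity of the spectrum is known from the nested algebraic Bethe ansatz — and being Zariski-open this property then persists for almost every choice of $\{\xi_a\}$.

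Combined, these two steps give $|\Sigma_T|=n^N$, completing the proof. The main technical obstacle is the first step: identifying in closed form the leading-in-$\{x_a\}$ part of the recursively defined polynomials $t_m^{(K,\{x\})}$ and proving that the non-vanishing of the $e_m(k_1,\dots,k_n)$ alone suffices to eliminate all non-trivial projective zeros of the homogenized system. The diagonalizability-with-simple-spectrum hypothesis on $K$ enters precisely here, through the non-degeneracy of the asymptotic quantum spectral invariants $T_m^{(K,\infty)}$; without it, non-trivial projective solutions can appear and fewer than $n^N$ genuine eigenvalues survive, consistent with the fact that the conclusion fails when $K$ is only $w$-simple but not diagonalizable.
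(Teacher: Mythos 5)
Your proposal does not prove the theorem: it sketches only its last sentence while taking the rest as input. You write that ``the previously established parts of Theorem \ref{main-Th}'' give each eigenvalue a unique eigenvector with wave-function \eqref{SoV-Ch-T-eigenV} --- but the characterization \eqref{SET-T}, the non-degeneracy, and the formula \eqref{SoV-Ch-T-eigenV} are precisely what Theorem \ref{main-Th} asserts, and they constitute essentially all of the paper's proof. The easy direction (every eigenvalue solves \eqref{Ch-System-SoV-n}) follows from the fusion relations; the hard direction --- that every solution $\{x\}$ of \eqref{Ch-System-SoV-n} produces, via \eqref{SoV-Ch-T-eigenV}, a genuine eigenvector --- is obtained in the paper by showing that the products $T^{(K)}_{\mathbf h}$ form a basis of the Bethe algebra of operators commuting with the transfer matrix, that the coefficients in $T^{(K)}_{\mathbf h}\cdot T_1^{(K)}(\xi_a)=\sum_{\mathbf h'}C^{(a)}_{\mathbf h\mathbf h'}T^{(K)}_{\mathbf h'}$ are determined solely by the fusion hierarchy and the asymptotic behavior (the content of Appendix A), and hence that the same linear relations are satisfied by the scalars $t_{1,\mathbf h}$. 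None of this machinery, or any substitute for it, appears in your proposal, so the central claim is left unproved.

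For the part you do address, your route (injectivity of $\{x\}\mapsto t_1^{(K,\{x\})}$, a Bezout bound $|\Sigma_T|\le n^N$, then equality) is genuinely different from the paper, which simply invokes Proposition 2.5 of \cite{MaiN18}. But the sketch has a concrete flaw: the top-degree homogeneous parts in $\{x_a\}$ of the system \eqref{Ch-System-SoV-n} do not involve $K$ at all, since in \eqref{Func-form-1}--\eqref{Func-form-m} the twist enters only through the lower-degree terms ${\rm tr}\,K\prod_a(\lambda-\xi_a)$ and $T^{(K,\infty)}_{m+1,\mathbf h=\mathbf 0}(\lambda)$. Hence ruling out projective zeros at infinity cannot be where the hypothesis that $K$ is diagonalizable with simple spectrum enters; that hypothesis must instead control the multiplicities of the affine solutions (for a $w$-simple but non-diagonalizable $K$ the same leading system and the same Bezout count hold, yet the transfer matrix has fewer than $n^N$ eigenvalues, so some solutions must acquire multiplicity). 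Your treatment of multiplicities is a one-line appeal to a non-vanishing discriminant with an unverified witness, so both key steps of the counting argument remain open, and the step you yourself flag as the main obstacle is aimed at the wrong place.
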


\begin{proof}
The system of $N$ equations of order $n$ in the $N$ unknown $\{x_{1},...,x_{N%
}\}$ coincides with the rewriting of the transfer matrix fusion equations:%
\begin{equation}
t_{1}(\xi _{a})t_{n-1}(\xi _{a}-\eta )=q{\rm{-det}}M^{(K)}(\xi _{a}),\text{ }%
\forall a\in \{1,...,N\},  \label{n-scalar-fusion-1}
\end{equation}%
for the eigenvalues of the transfer matrices $T_{1}^{(K)}(\lambda )$ and $%
T_{n-1}^{(K)}(\lambda )$. Moreover the recursion relations \eqref{Func-form-m} just follow from the corresponding recursion relations for the fused transfer matrices in \eqref{T-Func-form-m}. So any eigenvalue of the transfer matrix has to satisfy this system of equations and the associated eigenvector $|t\rangle $ has the given characterization in the
covector SoV basis.

The reverse statement has to be proven now. That is we have to prove that any
polynomial $t_{1}(\lambda )$, of the above given form, which is solution of this system is an
eigenvalue of the transfer matrix. We prove this showing that the
vector $|t\rangle $ characterized by $(\ref{SoV-Ch-T-eigenV})$ is a transfer
matrix eigenvector, i.e. that for any $\langle h_{1},...,h_{N}|$ it holds:%
\begin{equation}
\langle h_{1},...,h_{N}|T_{1}^{(K)}(\lambda )|t\rangle
=t_{1}(\lambda )\langle h_{1},...,h_{N}|t\rangle ,\text{ }\forall
\{h_{1},...,h_{N}\}\in \{0,...,n-1\}^{\otimes N}.
\label{n-Eigen-cond-sl3}
\end{equation}%
This proof being quite lengthy, the main technical part of it is presented in the Appendix A. Let us just here give the main steps. The first remark is that the common spectrum of the transfer matrices evaluated in the $\xi_i$ being simple, any operator commuting with the transfer matrix can be obtained as a polynomial of maximal degree $n^N-1$ (with $n^N$ the dimension of the Hilbert space) in the transfer matrix itself, see e.g., \cite{HorJL85,DumFL04}. It means that the vector space $\mathbf{B}_T$ (the so-called Bethe algebra) spanned by the operators commuting with the transfer matrices $T_m^{(K)}(\la)$ evaluated at an arbitrary spectral parameter $\la$ is of maximal dimension $n^N$. Now, let us denote by $T^{(K)}_{\mathbf{h}}$, with $\mathbf{h} = (h_{1},...,h_{N})$ for any $\{h_{1},...,h_{N%
}\}\in \{0,...,n-1\}^{\otimes N}$  the following products:
\begin{equation}
T^{(K)}_{\mathbf{h}}\equiv \prod_{n=1}^{N%
}(T_{1}^{(K)}(\xi _{n}))^{h_{n}}  .
\label{T_h}
\end{equation}%
The fact that \eqref{SoV-L-gln-basis} defines a basis of our Hilbert space immediately implies that the system of operators given in \eqref{T_h} forms a free family of $n^N$ operators. Moreover any of these operators obviously commutes with the transfer matrix. Hence, the system of operators given in  \eqref{T_h} forms a basis of the vector space $\mathbf{B}_T$ of operators commuting with the transfer matrix evaluated at arbitrary values of the spectral parameter. As already noted above, to obtain the proof that the polynomial $t_{1}(\lambda )$ is an eigenvalue, we need to consider the action of $T^{(K)}_1(\xi_a)$, for $a = 1, \dots, N$  on an arbitrary vector $\langle h_{1},...,h_{N}| $ of our SoV basis and to give its decomposition again on this basis. Then, the vector $|t\rangle$ being completely defined by its components on the SoV basis we can compute the necessary objects entering \eqref{n-Eigen-cond-sl3}. Obviously, this amounts to be able to write the product $T^{(K)}_{\mathbf{h}} \cdot T^{(K)}_1(\xi_a)$ for any given $\mathbf{h}$ as a linear combination of $T^{(K)}_{\mathbf{h'}}$ with $
\{h'_{1},...,h'_{N}\}\in \{0,...,n-1\}^{\otimes N}$. However, since $T^{(K)}_{\mathbf{h}} \cdot T^{(K)}_1(\xi_a)$ is an operator commuting with any transfer matrix, it belongs to the vector space $\mathbf{B}_T$, and can be decomposed linearly on the basis given by the set  \eqref{T_h}. Namely, there exist sets of complex numbers $C_{\mathbf{h} \mathbf{h'}}^{(a)}$ depending on the two sets of parameters $\mathbf{h}$ and $\mathbf{h'}$ and on the index $(a)$ such that:
\begin{equation}
T^{(K)}_{\mathbf{h}} \cdot T^{(K)}_1(\xi_a) = \sum_{\mathbf{h'}} C_{\mathbf{h} \mathbf{h'}}^{(a)} \ T^{(K)}_{\mathbf{h'}} .
\label{ThTa}
\end{equation}%
Then, using the interpolation formula for $T^{(K)}_1(\lambda)$, there exist sets of polynomials $C_{\mathbf{h} \mathbf{h'}} (\lambda)$ such that, 
\begin{equation}
T^{(K)}_{\mathbf{h}} \cdot T^{(K)}_1(\lambda) = \sum_{\mathbf{h'}} C_{\mathbf{h} \mathbf{h'}} (\lambda) \ T^{(K)}_{\mathbf{h'}} .
\label{ThTlambda}
\end{equation}%

Moreover, the results of Appendix A show that the set of complex numbers $C_{\mathbf{h} \mathbf{h'}}^{(a)}$ and hence of polynomials $C_{\mathbf{h} \mathbf{h'}} (\lambda)$  are completely determined by the hierarchy of fusion relations for the transfer matrices supplemented by their asymptotic behavior. Consequently, the above relation is also satisfied by the quantities $t_1(\xi_a)$, namely we have,
\begin{equation}
t_{1, \mathbf{h}} \cdot t_1(\xi_a) = \sum_{\mathbf{h'}} C_{\mathbf{h} \mathbf{h'}}^{(a)}\  t_{1, \mathbf{h'}} ,
\label{thta}
\end{equation}%
and,
\begin{equation}
t_{1, \mathbf{h}} \cdot t_1(\lambda) = \sum_{\mathbf{h'}} C_{\mathbf{h} \mathbf{h'}} (\lambda)\ t_{1, \mathbf{h'}} .
\label{thtlambda}
\end{equation}%
Hence we get for any choice of the co-vector $\langle h_{1},...,h_{N}|$,
\begin{eqnarray}
\langle h_{1},...,h_{N}|T_{1}^{(K)}(\lambda )|t\rangle
&=&\langle S| T^{(K)}_{\mathbf{h}} \cdot T^{(K)}_1(\lambda)|t\rangle \\ \nonumber
&=&\langle S| \sum_{\mathbf{h'}} C_{\mathbf{h} \mathbf{h'}} (\lambda) \ T^{(K)}_{\mathbf{h'}}|t\rangle \\ \nonumber
&=& \sum_{\mathbf{h'}} C_{\mathbf{h} \mathbf{h'}} (\lambda)\ t_{1, \mathbf{h'}} = t_{1, \mathbf{h}} \cdot t_1(\lambda) \\ \nonumber
&=& t_{1}(\lambda )\langle h_{1},...,h_{N}|t\rangle
\label{n-Eigen-cond}
\end{eqnarray}%
This relation being true on the SoV basis, it proves that $|t\rangle$ is an eigenvector of the transfer matrix $T_{1}^{(K)}(\lambda )$ with eigenvalue $t_1(\lambda)$.

The final statement of the theorem about simplicity and diagonalizability of the transfer matrix has been already shown in Proposition 2.5 of \cite{MaiN18}.
\end{proof}
Let us further remark that the above results have an interesting consequence. Indeed, using recursively the algebraic relation \eqref{ThTa}, one can obtain the following algebraic structure of the space $\mathbf{B}_T$:
\begin{equation}
T^{(K)}_{\mathbf{h}} \cdot T^{(K)}_{\mathbf{h'}} = \sum_{\mathbf{h''}} C^{\mathbf{h''}}_{\mathbf{h} \mathbf{h'}}  \ T^{(K)}_{\mathbf{h''}} ,
\label{ThTh}
\end{equation}%
for a set of complex numbers coefficients $C^{\mathbf{h''}}_{\mathbf{h} \mathbf{h'}}$ that are completely determined (and computable) from the fusion relations satisfied by the transfer matrices. These coefficients are the structure constants of the associative and commutative algebra $\mathbf{B}_T$. 

\section{Transfer matrix spectrum by quantum spectral curve}

\subsection{The quantum spectral curve equation}

The transfer matrix spectrum in our SoV basis is equivalent to the quantum
spectral curve functional reformulation as stated in the next theorem. 

\begin{theorem}\label{main-Th-F-Eq}
Let the $n\times n$ matrix $K$ be nondegenerate with at least one
nonzero eigenvalue. Then an entire functions $t_{1}(\lambda )$ is a $%
T_{1}^{\left( K\right) }(\lambda )$ transfer matrix eigenvalue iff there exists a unique polynomial:
\begin{equation}
\varphi _{t}(\lambda )=\prod_{a=1}^{\mathsf{M}}(\lambda -\lambda _{a})\text{\ \ with }\mathsf{M}\leq N\text{ and }\lambda _{a}\neq \xi _{b}%
\text{ }\forall (a,b)\in \{1,...,\mathsf{M}\}\times \{1,...,N\},
\label{Phi-form}
\end{equation}%
such that $t_{1}(\lambda )$,%
\begin{equation}
t_{m+1}(\lambda )=\prod_{b=1}^{N}\prod_{r=1}^{m}(\lambda -\xi
_{b}-r\eta )\left[ T_{m+1,\mathbf{h}=\mathbf{0}}^{(K,\infty )}(\lambda
)+\sum_{a=1}^{N}g_{a,\mathbf{h}=\mathbf{0}}^{\left( m+1\right)
}(\lambda )t_{1}(\xi _{a})t_{m}(\xi _{a}-\eta )\right] ,
\end{equation}%
for any $m\in \{1,...,n-2\}$, and $\varphi _{t}(\lambda )$ are solutions of
the following quantum spectral curve functional equation:%
\begin{equation}
\sum_{b=0}^{n}\alpha _{b}(\lambda )\varphi _{t}(\lambda -b\eta
)t_{n-b}(\lambda -b\eta )=0
\end{equation}%
where we have defined%
\begin{equation}
t_{0}(\lambda )\equiv 1,\text{ }t_{n}(\lambda )\equiv {\rm{det}}%
_{q}M^{(K)}(\lambda ),\text{ }\alpha _{0}(\lambda )\equiv -1,
\end{equation}%
and%
\begin{eqnarray}
\alpha _{1}(\lambda ) &=&\bar{\alpha}\, a(\la), \ \ \ \ a(\la)=\prod_{a=1}^{N}(\lambda +\eta
-\xi _{a}), \\
\alpha _{1+j}(\lambda ) &=&(-1)^{j}\prod_{h=0}^{j}\alpha _{1}(\lambda -h\eta
),\text{ \ }\forall j\in \{1,...,n-1\}
\end{eqnarray}%
and $\bar{\alpha}$ is solution of the characteristic equation:%
\begin{equation}
\sum_{b=0}^{n}(-1)^{b+1}\bar{\alpha}^{b}T_{n-b}^{(K,\infty )}=0,
\label{Ch-Eq-K_n}
\end{equation}%
i.e. $\bar{\alpha}$ is an eigenvalue of the matrix $K$. Moreover, up to a
normalization the common transfer matrix eigenvector $|t\rangle $ admits the
following separate representation:%
\begin{equation}
\langle h_{1},...,h_{N}|t\rangle =\prod_{a=1}^{N}\alpha
_{1}^{h_{a}}(\xi _{a})\varphi _{t}^{h_{a}}(\xi _{a}-\eta )\varphi
_{t}^{n-1-h_{a}}(\xi _{a}).
\end{equation}
\end{theorem}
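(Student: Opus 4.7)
The plan is to establish the theorem as an equivalence between the proposed functional quantum spectral curve equation and the discrete SoV characterization of Theorem \ref{main-Th}. Once this equivalence of spectral characterizations is obtained, the separated form of the eigenvector will follow by a direct rewriting of \eqref{SoV-Ch-T-eigenV}.

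I first treat the easy implication: functional equation $\Rightarrow$ eigenvalue. The key computation is to evaluate the quantum spectral curve at $\lambda = \xi_a$ and exploit the zero structure of the coefficients $\alpha_b$. Indeed $\alpha_{1}(\xi_a - \eta) = \bar{\alpha}\, a(\xi_a-\eta) = \bar{\alpha}\prod_{c}(\xi_a - \xi_c)$ vanishes because of the factor $c=a$, and since for every $j\geq 1$ the definition of $\alpha_{1+j}$ contains $\alpha_1(\lambda - \eta)$ as a factor, all contributions with $b\geq 2$ in the functional equation drop out at $\lambda = \xi_a$. What survives is $\varphi_t(\xi_a)\, t_n(\xi_a) = \alpha_1(\xi_a)\, \varphi_t(\xi_a - \eta)\, t_{n-1}(\xi_a - \eta)$. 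Using $\varphi_t(\xi_a)\neq 0$ (ensured by $\lambda_a \neq \xi_b$) together with $t_n(\xi_a) = q\text{-det}\, M^{(K)}(\xi_a)$, this identifies the values $t_1(\xi_a) \equiv \alpha_1(\xi_a)\, \varphi_t(\xi_a - \eta)/\varphi_t(\xi_a)$ as solutions of the discrete system \eqref{Ch-System-SoV-n}; the matching of asymptotic behaviors is guaranteed by \eqref{Ch-Eq-K_n}. Theorem \ref{main-Th} then certifies $t_1(\lambda)$ as a transfer matrix eigenvalue.

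For the converse, starting from an eigenvalue $t_1(\lambda)$, I construct $\varphi_t$ as the monic polynomial of the form \eqref{Phi-form} whose roots $\lambda_a$ solve the Bethe-like constraints $t_1(\xi_a)\varphi_t(\xi_a) = \alpha_1(\xi_a)\varphi_t(\xi_a - \eta)$. Existence, uniqueness and the bound $\mathsf{M}\leq N$ are forced by the discrete system together with a leading-order count that selects $\bar{\alpha}$ as one of the roots of \eqref{Ch-Eq-K_n}. The delicate step is to promote the functional equation from its validity at $\lambda=\xi_a$ to its validity for all $\lambda$. My strategy is to view the left-hand side of the quantum spectral curve as a polynomial in $\lambda$ of controlled degree bounded by $\mathsf{M} + nN$, and to accumulate enough vanishing points: at each $\xi_a$ the equation holds by construction; at the shifted points $\xi_a + r\eta$ with $1\leq r\leq n-1$ the central zeros of the $t_m$ together with the fusion identities of Proposition 2.2 produce the needed cancellations; and the leading asymptotics match precisely by \eqref{Ch-Eq-K_n}. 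A zero-counting argument versus the maximal degree then forces the polynomial identity.

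I expect this global verification to be the main obstacle, since the cancellations among the shifted terms rely on a subtle interplay between the fusion recursion \eqref{Func-form-m}, the central zeros of Proposition 2.2, and the specific factorized structure of the $\alpha_b$; this is presumably the content of Appendix B. Once this is in place, the separated eigenvector form is immediate: substituting $t_1(\xi_a) = \alpha_1(\xi_a)\varphi_t(\xi_a - \eta)/\varphi_t(\xi_a)$ into \eqref{SoV-Ch-T-eigenV} and rescaling by the overall constant $\prod_{a=1}^{N}\varphi_t^{\,n-1}(\xi_a)$ reproduces exactly $\prod_{a=1}^{N}\alpha_1^{h_a}(\xi_a)\,\varphi_t^{h_a}(\xi_a - \eta)\,\varphi_t^{n-1-h_a}(\xi_a)$.
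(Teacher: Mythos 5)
Your architecture matches the paper's: both directions are reduced to the discrete SoV characterization of Theorem \ref{main-Th}, the functional equation is checked on a lattice of points $\xi_a+k\eta$ against a degree count, and the eigenvector formula follows by the same rescaling by $\prod_{a}\varphi_t^{\,n-1}(\xi_a)$. However, the proposal has concrete gaps in both directions. In the implication ``functional equation $\Rightarrow$ eigenvalue'' you evaluate only at $\lambda=\xi_a$. That yields $\varphi_t(\xi_a)\,t_n(\xi_a)=\alpha_1(\xi_a)\varphi_t(\xi_a-\eta)\,t_{n-1}(\xi_a-\eta)$, i.e.\ a relation between the ratio $\varphi_t(\xi_a-\eta)/\varphi_t(\xi_a)$ and $t_n(\xi_a)/t_{n-1}(\xi_a-\eta)$; it does not identify that ratio with $t_1(\xi_a)/\alpha_1(\xi_a)$, since $t_1$ is a given entire function, not something you may define by this relation. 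The paper closes this by also evaluating at $\lambda=\xi_a+s\eta$ for $1\le s\le n-1$, where the central zeros and the vanishing of $\alpha_{r\ge s+2}$ reduce the equation to \eqref{Ch-2-Q-n}; combined with the ratio identity \eqref{Ch-3-Q} this produces the whole ladder $t_{m+1}(\xi_a)=t_m(\xi_a-\eta)t_1(\xi_a)$, and it is the case $s=n-1$ (where $t_0\equiv 1$) that finally ties $\varphi_t$ to $t_1(\xi_a)$ and hence to the system \eqref{Ch-System-SoV-n}. Without these evaluations your forward implication does not close.

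In the converse, two issues. Your zero count is short: the left-hand side has degree $\mathsf{M}+nN$, so after killing its leading coefficient via \eqref{Ch-Eq-K_n} you still need $\mathsf{M}+nN$ vanishing points, whereas $\{\xi_a\}\cup\{\xi_a+r\eta\}_{1\le r\le n-1}$ supplies only $nN$; the paper adds the $N$ points $\xi_a-\eta$, at which every term vanishes trivially because $\alpha_r(\xi_a-\eta)=0$ for $1\le r\le n$ and the quantum determinant also vanishes there. More seriously, the existence, uniqueness and non-vanishing at the $\xi_a$ of a polynomial $\varphi_t$ of the form \eqref{Phi-form} satisfying \eqref{Discrete-Ch-Q} is not ``forced by the discrete system together with a leading-order count'': it requires proving that the determinant of the associated $N\times N$ linear system is nonzero for almost all parameters, and this is the actual content of Appendix B (algebraic-function arguments plus an explicit degeneration to the twist $\mathsf{k}_i=\delta_{1,i}$). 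You instead place the main difficulty in the cancellations at the shifted points, but those follow routinely from the fusion relations once \eqref{Discrete-Ch-Q} is granted; the hard step is the nondegeneracy that makes $\varphi_t$ exist at all.
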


\begin{proof}
Let the entire function $t_{1}(\lambda )$ satisfies with the
polynomial $t_{m}(\lambda )$ and $\varphi _{t}(\lambda )$ the functional
equation then it is a degree $N$ polynomial in $\lambda $ with leading coefficient $t_{1,N}$
solution of the equation:
\begin{equation}
\bar{\alpha}^{n}-\bar{\alpha}^{n-1}\text{ }t_{1,N%
}+\sum_{b=0}^{n-2}(-1)^{b+1}\bar{\alpha}^{b}T_{n-b}^{(K,\infty )}=0\text{,}
\end{equation}%
which being $\bar{\alpha}$ an eigenvalue of $K$ implies:%
\begin{equation}
t_{1,N}={\rm{tr }}\,K.
\end{equation}%
Now for $\lambda =\xi _{a}$ it holds:%
\begin{equation}
\alpha _{1+j}(\xi _{a})=0,\text{ for }1\leq j\leq n-1,\text{ }\alpha
_{1}(\xi _{a})\neq 0,\text{ det}_{q}M^{(K)}(\xi _{a})\neq 0,
\end{equation}%
and the functional equation reduces to:%
\begin{equation}
\frac{\alpha _{1}(\xi _{a})\varphi _{t}(\xi _{a}-\eta )}{\varphi _{t}(\xi
_{a})}=\frac{{\rm{det}}_{q}M^{(K)}(\xi _{a})}{t_{n-1}(\xi _{a}-\eta )}.
\label{Ch-1-Q-n}
\end{equation}%
While, for any fixed $s$ such that $1\leq s\leq n-1$, for $\lambda =\xi
_{a}+s\eta $ it holds:%
\begin{eqnarray}
\alpha _{r\geq s+2}(\xi _{a}+s\eta ) &=&0,\text{ }t_{n-b}(\xi _{a}+(s-b)\eta
)=0,\text{ for any }0\leq b\leq s-1 \\
\text{\ }\alpha _{r\leq s+1}(\xi _{a}+s\eta ) &\neq &0,
\end{eqnarray}%
so that the functional equation reduces to:%
\begin{equation}
\frac{\alpha _{s+1}(\xi _{a}+s\eta )\varphi _{t}(\xi _{a}-\eta )}{\alpha
_{s}(\xi _{a}+s\eta )\varphi _{t}(\xi _{a})}=\frac{t_{n-s}(\xi _{a})}{%
t_{n-s-1}(\xi _{a}-\eta )}.  \label{Ch-2-Q-n}
\end{equation}%
Now from the identities:%
\begin{equation}
\frac{\alpha _{s+1}(\xi _{a}+s\eta )}{\alpha _{s}(\xi _{a}+s\eta )}=\alpha
_{1}(\xi _{a})\text{ \ for any }1\leq s\leq n-1,  \label{Ch-3-Q}
\end{equation}%
the previous identities imply that the following equations are satisfied:%
\begin{equation}
t_{m+1}(\xi _{a})=t_{m}(\xi _{a}-\eta )t_{1}(\xi _{a}),\text{ }\forall m\in
\{1,...,n-1\},a\in \{1,...,N\},
\end{equation}%
so that by our previous theorem we have that $t_{m}(\lambda )$ are
eigenvalues of the transfer matrices $T_{m}^{\left( K\right) }(\lambda )$,
associated to the same eigenvector $|t\rangle $.

We derive now the reverse statement. That is, let $t_{1}(\lambda )$
be eigenvalue of the transfer matrix $T_{1}^{\left( K\right) }(\lambda )$
then we show that there exists a polynomial $\varphi _{t}(\lambda )$
satisfying with the $t_{m}(\lambda )$ the functional equation. The polynomial $\varphi _{t}(\lambda )$ is here characterized by imposing the next set of conditions:%
\begin{equation}
\frac{\varphi _{t}(\xi _{a}-\eta )}{\varphi _{t}(\xi
_{a})}=\frac{t_{1}(\xi _{a})}{\alpha _{1}(\xi _{a})}.  \label{Discrete-Ch-Q}
\end{equation}%
The fact that this characterizes uniquely a polynomial of the form $\left(\ref{Phi-form}\right) $ is an essential point in the derivation of the
functional equation starting from the SoV characterization of the spectrum
and we have detailed it in Appendix B. Here, we prove that this
characterization of $\varphi _{t}(\lambda )$ implies the validity of the functional
equation. The l.h.s. of the functional equation is a polynomial in $\lambda $ of maximal degree $(n+1)N$ so to show that it is identically zero we have to
prove it in $(n+1)N$ distinct points, being zero the leading coefficient by the choice of $\bar{\alpha}$ to be an
eigenvalue of $K$. We use the following $(n+1)N$ points $\xi
_{a}+k_{a}\eta $, for any $a\in \{1,...,N\}$ and $k_{a}\in \left\{
-1,0,...,n-1\right\} $. Indeed, for $\lambda =\xi _{a}-\eta $ it holds:%
\begin{equation}
\alpha _{r}(\xi _{a}-\eta )=0\text{ \ for any }1\leq r\leq n,\text{ as well
as }\, {\rm{det}}\,M^{(K)}(\xi _{a}-\eta )=0,
\end{equation}%
from which the functional equation is satisfied for any $a\in \{1,...,%
N\}$ and in the remaining $nN$ points the functional
equation reduces to the $nN$ equations $\left( \ref{Ch-1-Q-n}%
\right) $-$\left( \ref{Ch-2-Q-n}\right) $. Now, being the fusion equations satisfied by the transfer
matrix eigenvalues, these equations are all equivalent to the discrete characterization $\left( \ref{Discrete-Ch-Q}\right) $ implying our statement.

Finally, we show that the SoV characterization of the transfer matrix
eigenvectors is equivalent to that presented in this theorem, up to an overall normalization. Indeed, multiplying the eigenvector $|t\rangle $ by the non-zero product of the $\varphi _{t}^{n-1}(\xi _{a})$ over all the $a\in \{1,...,N\}$ it holds: 
\begin{equation}
\prod_{a=1}^{N}\varphi _{t}^{n-1}(\xi _{a})\prod_{a=1}^{N%
}t_{1}^{h_{a}}(\xi _{a})\overset{\left( \ref{Discrete-Ch-Q}\right) }{=}%
\prod_{a=1}^{N}\alpha _{1}^{h_{a}}(\xi _{a})\varphi
_{t}^{h_{a}}(\xi _{a}-\eta )\varphi _{t}^{n-1-h_{a}}(\xi _{a}).
\end{equation}
\end{proof}

\subsection{Reconstruction of the $Q$-operator and the Baxter $T$-$Q$ equation}

The previous characterization of the transfer matrix spectrum indeed allows
to reconstruct the $Q$-operator in terms of the elements of the monodromy
matrix and more precisely in terms of the fundamental transfer matrix, as it
is stated in the following:

\begin{corollary}
Let us assume that $K$ is a $n\times n$ diagonalizable matrix with simple spectrum and let us denote with $\mathsf{k}_{j}$ the corresponding eigenvalues. Let us take\footnote{Note that we can fix for example $\xi _{N+1}=\xi _{h}-\eta $ for any fixed $h\in \{1,\ldots ,N\}$.} $\xi _{N+1}\neq\xi _{i\leq N}$ and let exist an $i\in \{1,\ldots ,n\}$ such that $\mathsf{k}_{i}\neq 0$, then, for almost any
values of $\{\xi _{i\leq N}\}$ and of $\{\mathsf{k}_{j\leq n}\}$,
a $Q$-operator is given by the following polynomial family of commuting operators of maximal degree \textsf{$N$}:%
\begin{equation}
Q_{_{i}}(\lambda )=\frac{{\rm{det}}_{N}[C_{i,\xi _{N%
+1}}^{(T_{1}^{\left( K\right) })}+\Delta _{\xi _{N+1}}(\lambda )]}{%
{\rm{det}}_{N}[C_{i,\xi _{N+1}}^{(T_{1}^{\left( K\right)
})}]}\prod_{c=1}^{N}\frac{\lambda -\xi _{c}}{\xi _{N%
+1}-\xi _{c}},
\end{equation}%
where we have defined:%
\begin{equation}
\lbrack C_{i,\xi _{N+1}}^{(T_{1}^{\left( K\right) })}]_{rs}=-\delta
_{rs}\,\frac{T_{1}^{\left( K\right) }(\xi _{r})}{\mathsf{k}_{i}a(\xi _{r})}%
+\prod_{\substack{ c=1  \\ c\neq s}}^{N+1}\frac{\xi _{r}-\xi
_{c}-\eta }{\xi _{s}-\xi _{c}}\qquad \forall r,s\in \{1,\ldots ,N\},
\end{equation}%
and the central matrix of rank one:%
\begin{equation}
\lbrack \Delta _{\xi _{N+1}}(\lambda )]_{ab}=\frac{\lambda -\xi _{%
N+1}}{\xi _{b}-\lambda }\frac{\prod_{c=1}^{N}(\xi _{c}-\xi
_{a}+\eta )}{\prod_{c=1,c\neq b}^{N+1}(\xi _{c}-\xi _{b})}\qquad
\forall a,b\in \{1,\ldots ,N\}.  \label{Rank-one}
\end{equation}%
Indeed, it satisfies with the transfer matrices the
quantum spectral curve at operator level:%
\begin{equation}
\sum_{b=0}^{n}\alpha _{b}^{(i)}(\lambda )Q_{i}(\lambda -b\eta
)T_{n-b}^{\left( K\right) }(\lambda -b\eta )=0,
\end{equation}%
where we have defined $T_{0}^{\left( K\right) }(\lambda )\equiv 1$ and the $
\alpha _{b}^{(i)}(\lambda )$ are the polynomial coefficients defined in the
previous theorem once we fix $\bar{\alpha}\equiv \mathsf{k}_{i}$, moreover $%
Q_{_{i}}(\xi _{a})$ are invertible operators for any $a\in \{1,\ldots ,$%
\textsf{$N$}$\}$.
\end{corollary}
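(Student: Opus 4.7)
The plan is to verify that the explicit determinant formula for $Q_i(\lambda)$ acts diagonally in the transfer matrix eigenbasis with the correct spectrum dictated by Theorem \ref{main-Th-F-Eq}, and then read off the three claimed properties. Under the hypothesis that $K$ is diagonalizable with simple spectrum, Theorem \ref{main-Th} ensures that $T_1^{(K)}(\lambda)$ is itself diagonalizable with simple spectrum for generic inhomogeneities, so its eigenvectors $\{|t\rangle\}$ form a basis. For each eigenvalue $t_1(\lambda)$ and the choice $\bal=\mathsf{k}_i\neq 0$, Theorem \ref{main-Th-F-Eq} attaches a unique polynomial $\varphi_t^{(i)}(\lambda)=\prod_a(\lambda-\lambda_a)$ of degree $\leq N$ with $\lambda_a\neq\xi_b$, satisfying in particular the discrete Baxter relations
\[
\mathsf{k}_i\, a(\xi_a)\, \varphi_t^{(i)}(\xi_a-\eta) = t_1(\xi_a)\, \varphi_t^{(i)}(\xi_a), \qquad a=1,\ldots,N.
\]

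The main computational step will be to show that on each eigenvector $|t\rangle$ the determinant formula produces the eigenvalue $\varphi_t^{(i)}(\lambda)/\varphi_t^{(i)}(\xi_{N+1})$. The entries of $C_{i,\xi_{N+1}}^{(T_1^{(K)})}|_{t}$ are exactly $[C|_t]_{rs} = L_s(\xi_r-\eta) - \delta_{rs}\,t_1(\xi_r)/(\mathsf{k}_i a(\xi_r))$, where $L_s(\mu) = \prod_{c\neq s}^{N+1}(\mu-\xi_c)/(\xi_s-\xi_c)$ is the Lagrange basis polynomial at the $N+1$ distinct nodes $\xi_1,\ldots,\xi_{N+1}$. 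Writing any polynomial $\varphi$ of degree $\leq N$ as $\varphi(\lambda)=\sum_{s=1}^{N+1} y_s L_s(\lambda)$ with $y_s=\varphi(\xi_s)$, the $N$ Baxter-type relations above translate into the linear system $[C|_t]\,(y_1,\ldots,y_N)^T = -y_{N+1}\, \vec{L}$, with $\vec{L}_r = L_{N+1}(\xi_r-\eta)$. Solving for $(y_1,\ldots,y_N)^T$ via $[C|_t]^{-1}$ and reinserting into the Lagrange formula gives
\[
\varphi_t^{(i)}(\lambda)/\varphi_t^{(i)}(\xi_{N+1}) = L_{N+1}(\lambda) - \vec{L}(\lambda)^T [C|_t]^{-1} \vec{L},
\]
where now $\vec{L}(\lambda)_s = L_s(\lambda)$ for $s=1,\ldots,N$.

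To match this to the explicit formula I would factor the rank-one matrix as $[\Delta_{\xi_{N+1}}(\lambda)]_{ab} = u_a\, w_b(\lambda)$, with $u_a = \prod_{c=1}^{N}(\xi_c-\xi_a+\eta)$ and $w_b(\lambda)$ read off from the definition. Two elementary identities are the crux: $u_a = (-1)^N\, \vec{L}_a\, \prod_c(\xi_{N+1}-\xi_c)$ and $w_b(\lambda)\prod_c(\lambda-\xi_c) = (-1)^{N-1} L_b(\lambda)$. Then the matrix determinant lemma $\det(C+u w^T)=\det(C)(1 + w^T C^{-1} u)$ together with the overall Lagrange factor $\prod_c(\lambda-\xi_c)/\prod_c(\xi_{N+1}-\xi_c)=L_{N+1}(\lambda)$ converts $\det[C|_t+\Delta(\lambda)]/\det[C|_t] \cdot L_{N+1}(\lambda)$ into precisely $L_{N+1}(\lambda) - \vec{L}(\lambda)^T [C|_t]^{-1} \vec{L}$, the signs cancelling as required. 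At $\lambda=\xi_{N+1}$ the rank-one correction vanishes so that $Q_i(\xi_{N+1})=\mathbf{1}$; at $\lambda=\xi_b$ the simple pole of $w_b(\lambda)$ is cancelled by the factor $(\lambda-\xi_b)$ in $\prod_c(\lambda-\xi_c)$, so $Q_i(\lambda)$ is a polynomial in $\lambda$ of degree at most $N$.

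The three remaining claims then follow directly. Commutativity of the $Q_i(\lambda)$ among themselves and with all $T_m^{(K)}(\mu)$ is a consequence of simultaneous diagonalizability in the basis $\{|t\rangle\}$; the operator quantum spectral curve equation reduces, on each eigenvector, to the scalar functional equation of Theorem \ref{main-Th-F-Eq} already satisfied by $\varphi_t^{(i)}$ and the $t_m$; and invertibility of $Q_i(\xi_a)$ for $a=1,\ldots,N$ follows because $\varphi_t^{(i)}(\xi_a)\neq 0$, by the condition $\lambda_a\neq\xi_b$ imposed on the roots of $\varphi_t^{(i)}$. The main obstacle I expect is in the sign- and normalization-bookkeeping of the identification of $\Delta$ with the Lagrange data; genericity of $\{\xi_a\}$ and of the eigenvalues $\{\mathsf{k}_j\}$ enters precisely to ensure that $\det C_{i,\xi_{N+1}}^{(T_1^{(K)})}$ is nonzero on every transfer matrix eigenspace, which is what makes the rank-one update and the inversion of the determinant ratio everywhere well-defined.
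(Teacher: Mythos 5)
Your proposal is correct and follows essentially the same route as the paper: the operator $Q_i(\lambda)$ is characterized by its diagonal action $Q_i(\lambda)|t\rangle=|t\rangle\,\varphi_t^{(i)}(\lambda)$ on the (simple) eigenbasis of $T_1^{(K)}$, with the determinant formula obtained from the Cramer's-rule solution of the discrete Baxter system via the rank-one update with $\Delta_{\xi_{N+1}}(\lambda)$, exactly as in Appendix B. The only difference is that you carry out explicitly the matrix-determinant-lemma bookkeeping that the paper states without detail, and your identities and signs check out.
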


\begin{proof}
This result is a direct consequence of the SoV characterization of the
spectrum and of the proof of the previous theorem. As shown in Appendix B, for
any $t_{1}(\lambda )$ eigenvalue of the transfer matrix $T_{1}^{\left(
K\right) }(\lambda )$ we can associate the polynomial $\varphi _{t}(\lambda
) $ of the form $\left( \ref{Phi-form}\right) $ solution of the quantum
spectral curve equation with the transfer matrix eigenvalues. In that proof,
we have shown that, up to an irrelevant overall nonzero normalization, $%
\varphi _{t}(\lambda )$ admits the following representation:%
\begin{equation}
\varphi _{t}^{(i)}(\lambda )=\frac{{\rm{det}}_{N}[C_{i,\xi _{N%
+1}}^{(t_{1})}+\Delta _{\xi _{N+1}}(\lambda )]}{{\rm{det}}_{%
N}[C_{i,\xi _{N+1}}^{(t_{1})}]}\prod_{c=1}^{N}\frac{%
\lambda -\xi _{c}}{\xi _{N+1}-\xi _{c}},
\end{equation}%
in terms of the above defined matrices where we have just replaced the
transfer matrix $T_{1}^{\left( K\right) }(\xi _{a})$ with the eigenvalues $%
t_{1}(\xi _{a})$. In Proposition 2.5 of \cite{MaiN18}, we have also shown that for almost any value of the
parameters $\{\xi _{i\leq N}\}$ and $\{\mathsf{k}_{j\leq n}\}$ such
that $K$ is diagonalizable and with simple spectrum then the transfer
matrix $T_{1}^{\left( K\right) }(\lambda )$ is diagonalizable and with
simple spectrum. This implies that we can uniquely define the polynomial
operator family $Q_{_{i}}(\lambda )$ by its action on the eigenbasis of the
transfer matrix imposing:%
\begin{equation}
Q_{_{i}}(\lambda )|t\rangle =|t\rangle \varphi _{t}^{(i)}(\lambda ),
\end{equation}%
for any $t_{1}(\lambda )$ eigenvalue of the transfer matrix $T_{1}^{\left(
K\right) }(\lambda )$ and $|t\rangle $ the uniquely (up to normalization)
associated eigenstate. Then, by definition this operator family satisfies
with the transfer matrices the quantum spectral curve equation, admits
the announced representation in terms of the transfer matrix $T_{1}^{\left(
K\right) }(\lambda )$ and the spectrum of its zeros never intersect the set
of the $\{\xi _{i\leq N}\}$, which completes our proof.
\end{proof}

\rem The quantum spectral curve equation here derived follows in a quite constructive way in our SoV framework once one applies some intuitions inherited from the classical case. The fused transfer matrices are by definition the “shifted” quantum analogue of the classical spectral invariants. So, it is natural to look for them (and for their eigenvalues) to satisfy a “shifted” quantum analogous of the classical spectral curve equation. Then, we are left just with the determination of the shifts in the argument of these transfer matrices and the computation of the corresponding coefficients of the quantum spectral curve equations. As we have shown in the above theorem, these unknowns are indeed completely fixed by the known fusion equations and asymptotics of the transfer matrices.

\section{Algebraic Bethe ansatz like rewriting of the spectrum}

We can show that the previous SoV representation of the transfer
matrix eigenvectors can be written in an Algebraic Bethe Ansatz
form. Let us first observe that we can find one common eigenvector of the transfer matrices $T_{m}^{\left( K\right) }(\lambda )$ which correspond to the constant
solution of the quantum spectral curve equation:

\begin{lemma}
Let $K$ be a $n\times n$ w-simple matrix and let us denote with $K_{J}$ its
Jordan form with $K=W_{K}K_{J}W_{K}^{-1}$, then:%
\begin{equation}
|t_{0}\rangle =\Gamma _{W}|0\rangle \text{ \ where }|0\rangle
=\bigotimes_{a=1}^{N}\left( 
\begin{array}{c}
1 \\ 
0 \\ 
\vdots \\ 
0%
\end{array}%
\right) _{a}\text{\ with \ }\Gamma _{W}=\bigotimes_{a=1}^{N}W_{K,a}\label{reference-V}
\end{equation}%
is a common eigenvector of the transfer matrices $T_{m}^{\left( K\right)
}(\lambda )$:%
\begin{align}
T_{1}^{\left( K\right) }(\lambda )|t_{0}\rangle & =|t_{0}\rangle
t_{1,0}(\lambda )\text{ with }t_{1,0}(\lambda )=\mathsf{k}_{1}\prod_{a=1}^{%
N}(\lambda -\xi _{a}+\eta )+(trK-\mathsf{k}_{1})\prod_{a=1}^{%
N}(\lambda -\xi _{a}), \\
T_{m}^{\left( K\right) }(\lambda )|t_{0}\rangle & =|t_{0}\rangle
t_{m,0}(\lambda )\text{ with }  \notag \\
t_{m+1,0}(\lambda )& =\prod_{b=1}^{N}\prod_{r=1}^{m}(\lambda -\xi
_{b}-r\eta )\times \left[ T_{m+1,\mathbf{h}=\mathbf{0}}^{(K,\infty
)}(\lambda )+\sum_{a=1}^{N}g_{a,\mathbf{h}=\mathbf{0}}^{\left(
m+1\right) }(\lambda )t_{1,0}(\xi _{a})t_{m,0}(\xi _{a}-\eta )\right] ,
\end{align}%
where:%
\begin{equation}
K\left( 
\begin{array}{c}
1 \\ 
0 \\ 
\vdots \\ 
0%
\end{array}%
\right) =\mathsf{k}_{1}\left( 
\begin{array}{c}
1 \\ 
0 \\ 
\vdots \\ 
0%
\end{array}%
\right) \text{ \ with }\mathsf{k}_{1}\neq 0
\end{equation}%
and where the $t_{m,0}(\lambda )$ satisfy the quantum spectral curve with
constant $\varphi _{t}(\lambda )$:%
\begin{equation}
\sum_{b=0}^{n}\alpha _{b}(\lambda )t_{n-b,0}(\lambda -b\eta )=0
\end{equation}%
for the choice $\bar{\alpha}=\mathsf{k}_{1}.$
\end{lemma}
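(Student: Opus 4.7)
The plan is to reduce everything to a standard highest-weight computation via the $gl_n$ invariance of $R$, then bootstrap to all fused transfer matrices through the interpolation formula of Corollary 2.1, and finally check that the constant-$\varphi_t$ branch of the quantum spectral curve is realized.

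\textbf{Step 1: Gauge reduction to the Jordan form.} First I would remove the similarity transformation $W_K$. Since $R_{a,b}$ commutes with $W_a W_b$ for any $W \in \mathrm{End}(\mathbb{C}^n)$ (this is the $gl_n$ symmetry), one checks $\Gamma_W^{-1} R_{a,b}\Gamma_W = W_{K,a}\,R_{a,b}\,W_{K,a}^{-1}$ since the only factor of $\Gamma_W$ that fails to commute with $R_{a,b}$ is $W_{K,b}$, and one uses the identity to move it to the auxiliary space. Telescoping and using cyclicity of the auxiliary trace yields
\begin{equation*}
\Gamma_W^{-1}\, T^{(K)}(\lambda)\, \Gamma_W \;=\; T^{(K_J)}(\lambda),
\end{equation*}
and more generally $\Gamma_W^{-1}\, T_m^{(K)}(\lambda)\, \Gamma_W = T_m^{(K_J)}(\lambda)$ for all $m$. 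Thus it suffices to prove the lemma with $K$ replaced by its Jordan form $K_J$ and $|t_0\rangle$ replaced by $|0\rangle$.

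\textbf{Step 2: Action on the reference state.} I would then compute $M_a(\lambda)|0\rangle$ using the well-known highest-weight property of the $gl_n$ monodromy matrix, which follows inductively in the number of sites from $R_{a,b}(\lambda)(e_k \otimes e_1) = \lambda\, e_k\otimes e_1 + \eta\, e_1\otimes e_k$. Writing $M_a = \sum_{i,j} E_{ij}^{(a)}\otimes M_{ij}(\lambda)$, this gives the triangular structure
\begin{equation*}
M_{ij}(\lambda)|0\rangle = 0 \ \text{for}\ i>j,\quad M_{11}(\lambda)|0\rangle = a(\lambda)|0\rangle,\quad M_{ii}(\lambda)|0\rangle = d(\lambda)|0\rangle \ \text{for}\ i\geq 2,
\end{equation*}
with $a(\lambda) = \prod_b (\lambda - \xi_b + \eta)$ and $d(\lambda) = \prod_b (\lambda - \xi_b)$. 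Since $K_J$ is upper triangular with diagonal entries $(\mathsf{k}_1,\ldots,\mathsf{k}_n)$ (listing Jordan-block eigenvalues with multiplicity, with $\mathsf{k}_1$ the eigenvalue of the first block), the trace $T^{(K_J)}(\lambda) = \sum_{k\leq l} (K_J)_{kl}\, M_{lk}(\lambda)$ only picks up the diagonal, giving
\begin{equation*}
T^{(K_J)}(\lambda)|0\rangle = \bigl[\mathsf{k}_1\, a(\lambda) + (\mathrm{tr}\, K - \mathsf{k}_1)\, d(\lambda)\bigr]|0\rangle = t_{1,0}(\lambda)|0\rangle.
\end{equation*}
This is the base case. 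The only mild subtlety is to verify that lower-diagonal Jordan entries of $K_J$, which are zero, do not contribute; being explicitly $0$, this is immediate.

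\textbf{Step 3: Higher fused transfer matrices and the spectral curve.} For $m\geq 2$ I proceed by induction in $m$ using the operator interpolation formula of Corollary 2.1, applied to $T_{m+1}^{(K)}(\lambda)$ expressed in terms of $T_m^{(K)}(\xi_a-\eta)$ and $T_1^{(K)}(\xi_a)$. Since by the induction hypothesis $|t_0\rangle$ is a common eigenvector of both with eigenvalues $t_{m,0}(\xi_a-\eta)$ and $t_{1,0}(\xi_a)$, applying the interpolation formula operator by operator yields immediately $T_{m+1}^{(K)}(\lambda)|t_0\rangle = t_{m+1,0}(\lambda)|t_0\rangle$ with the recursion stated in the lemma. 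Finally, to get the constant-$\varphi_t$ branch of the quantum spectral curve, I use the discrete characterization \eqref{Discrete-Ch-Q} from Theorem \ref{main-Th-F-Eq}: a constant $\varphi_t$ corresponds to $t_{1,0}(\xi_a) = \alpha_1(\xi_a) = \mathsf{k}_1\, a(\xi_a)$. But the second summand of $t_{1,0}(\xi_a)$ contains the factor $\prod_b(\xi_a-\xi_b)$, which vanishes (the $b=a$ factor is zero), so $t_{1,0}(\xi_a) = \mathsf{k}_1\, a(\xi_a)$ automatically; thus $\varphi_t \equiv \mathrm{const}$ together with $\bar\alpha = \mathsf{k}_1$ solves the spectral curve with the $t_{m,0}$.

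The main obstacle is the highest-weight computation in Step 2: verifying carefully that $M_{ij}(\lambda)|0\rangle = 0$ for $i>j$ while the diagonal yields precisely $a(\lambda)$ or $d(\lambda)$. This requires tracking how each application of $R_{a,b}$ affects both the auxiliary and site-$b$ components and confirming that any term that moves an excitation away from $|0\rangle$ is irreversible in the subsequent action. Everything else is then mechanical, using the $gl_n$ gauge trick and the operator interpolation identity established in Section 2.
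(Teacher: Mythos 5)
Your proposal is correct and follows essentially the same route as the paper: reduce to the Jordan-form twist via the $gl_n$ invariance $\Gamma_W^{-1}T_m^{(K)}\Gamma_W=T_m^{(K_J)}$, use the highest-weight/triangularity property of the monodromy matrix on $|0\rangle$ (the paper states this as $A_i^{(I)}(\lambda)|0\rangle=|0\rangle\prod_a(\lambda-\xi_a+\delta_{i,1}\eta)$, $C_i^{(I)}(\lambda)|0\rangle=0$), bootstrap to the fused transfer matrices through the interpolation formula, and deduce constancy of $\varphi_t$ from the key identity $t_{1,0}(\xi_a)=\mathsf{k}_1 a(\xi_a)=\alpha_1(\xi_a)$ combined with the discrete characterization \eqref{Discrete-Ch-Q}. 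You merely supply more explicit detail than the paper does (the $R$-matrix computation behind the triangular structure, and why only the diagonal of the upper-triangular $K_J$ survives in the trace), which is a faithful expansion rather than a different argument.
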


\begin{proof}
Note that the vector $|0\rangle $ is
eigenvector of the transfer matrix $T_{1}^{\left( K_{J}\right) }(\lambda )$ with eigenvalue $t_{1,0}(\lambda )$. In fact, this is proven by showing that:%
\begin{equation}
A_{i}^{(I)}(\lambda )|0\rangle =|0\rangle \prod_{a=1}^{N}(\lambda
-\xi _{a}+\delta _{i,1}\eta ),\text{ \ }C_{i}^{(I)}(\lambda )|0\rangle =0,%
\text{ \ }i\in \{1,...,n\},
\end{equation}%
from which it easily follows that the vector $|0\rangle $ is eigenvector of all
the others transfer matrices $T_{m}^{\left( K_{J}\right) }(\lambda )$ with
eigenvalues $t_{m,0}(\lambda )$. Then, by the similarity relation:%
\begin{equation}
T_{1}^{\left( K\right) }(\lambda )=\Gamma _{W}T_{1}^{\left( K_{J}\right)
}(\lambda )\Gamma _{W}^{-1},
\end{equation}
we get our statement about the original transfer matrices. Note that these eigenvalues $t_{m,0}(\lambda )$ have to satisfy the quantum spectral curve as a consequence of the previous theorem.
We have just to observe now that for the choice $\bar{\alpha}=\mathsf{k}_{1}$
it holds:%
\begin{equation}
t_{1,0}(\xi _{a})=\alpha _{1}(\xi _{a})\text{ \ for any }a\in \{1,...,%
N\},
\end{equation}%
so that it follows that the associated $\varphi _{t}(\lambda )$ satisfies
the equations:%
\begin{equation}
\varphi _{t}(\xi _{a})=\varphi _{t}(\xi _{a}-\eta )\text{ \ \ for any }a\in
\{1,...,N\}
\end{equation}%
and so $\varphi _{t}(\lambda )$ is constant. Indeed, defined:%
\begin{equation}
\bar{\varphi}_{t}(\lambda )=\varphi _{t}(\lambda )-\varphi _{t}(\lambda
-\eta )
\end{equation}%
this is a degree $N-1$ polynomial in $\lambda $ which is zero in $%
N$ different points so that it is identically zero.
\end{proof}

Let us now define
\begin{equation}
\mathbb{B}^{(K)}\left( \lambda \right) =\prod_{a=1}^{N%
}\prod_{b=1}^{n-1}(\lambda -Y_{a}^{\left( b\right) }),\label{Be-bold}
\end{equation}
which is diagonal in the SoV basis and it is characterized by:%
\begin{equation}
\langle h_{1},...,h_{N}|Y_{a}^{\left( b\right) }=(\xi _{a}-\eta
\theta (b+h_{a}+1-n))\langle h_{1},...,h_{N}|,
\end{equation}%
from which it follows:%
\begin{equation}
\langle h_{1},...,h_{N}|\mathbb{B}^{(K)}\left( \lambda \right) =%
\text{ }b_{h_{1},...,h_{N}}(\lambda )\langle h_{1},...,h_{N%
}|,
\end{equation}%
where:%
\begin{equation}
b_{h_{1},...,h_{N}}(\lambda )=\prod_{a=1}^{N}(\lambda -\xi
_{a})^{n-1-h_{a}}(\lambda-\xi _{a}+\eta )^{h_{a}},
\end{equation}%
then the next corollary follows:

\begin{lemma}
The following Algebraic Bethe Ansatz type formulation holds:%
\begin{equation}
|t\rangle =\prod_{a=1}^{\mathsf{M}}\mathbb{B}^{(K)}(\lambda
_{a})|t_{0}\rangle \text{\ \ with }\mathsf{M}\leq N\text{ and }%
\lambda _{a}\neq \xi _{n}\text{ }\forall (a,n)\in \{1,...,\mathsf{M}\}\times
\{1,...,N\},
\end{equation}
for the eigenvector associated to the generic eigenvalue $t_{1}(\lambda )\in \Sigma _{T_{1}}$. Here the $\lambda _{a}$ are the roots of the polynomial $\varphi
_{t}(\lambda )$ satisfying with the $t_{m}(\lambda )$ the quantum
spectral curve functional equation.
\end{lemma}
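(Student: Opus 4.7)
The plan is to compute the SoV wave function of the right-hand side vector $\prod_{a=1}^{\mathsf{M}}\mathbb{B}^{(K)}(\lambda_a)|t_0\rangle$ and to match it, up to an overall $h$-independent normalization, with the factorized expression for $\langle h_1,\ldots,h_N|t\rangle$ provided by Theorem \ref{main-Th-F-Eq}. Since Theorem \ref{main-Th} guarantees that the transfer matrix eigenvector associated to a given eigenvalue $t_1(\lambda)$ is unique up to normalization and since the $\langle h_1,\ldots,h_N|$ form a basis of $\mathcal H$, this matching of SoV components automatically identifies $\prod_{a=1}^{\mathsf{M}}\mathbb{B}^{(K)}(\lambda_a)|t_0\rangle$ with a nonzero multiple of $|t\rangle$.

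First, I would exploit the diagonal action of $\mathbb{B}^{(K)}(\lambda)$ in the left SoV basis displayed just above \eqref{Be-bold} to write
\begin{equation*}
\langle h_1,\ldots,h_N|\prod_{a=1}^{\mathsf{M}}\mathbb{B}^{(K)}(\lambda_a) = \prod_{a=1}^{\mathsf{M}} b_{h_1,\ldots,h_N}(\lambda_a)\,\langle h_1,\ldots,h_N|.
\end{equation*}
Exchanging the order of the two products and using $\prod_{a=1}^{\mathsf{M}}(\lambda_a - z) = (-1)^{\mathsf{M}}\varphi_t(z)$, I would then rewrite the scalar prefactor in closed form,
\begin{equation*}
\prod_{a=1}^{\mathsf{M}} b_{h_1,\ldots,h_N}(\lambda_a) = (-1)^{\mathsf{M}N(n-1)} \prod_{b=1}^{N}\varphi_t^{\,n-1-h_b}(\xi_b)\,\varphi_t^{\,h_b}(\xi_b-\eta),
\end{equation*}
which already reproduces exactly the $\varphi_t$-dependent factors appearing in the SoV wave function of $|t\rangle$.

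Second, I would evaluate $\langle h_1,\ldots,h_N|t_0\rangle$ by using the very definition \eqref{SoV-L-gln-basis} of the SoV covectors together with the eigenvalue relation $T_1^{(K)}(\xi_n)|t_0\rangle = t_{1,0}(\xi_n)|t_0\rangle$ established in the preceding lemma. The key observation is that, upon the choice $\bar\alpha=\mathsf{k}_1$ attached to $|t_0\rangle$, the second term in the explicit formula for $t_{1,0}(\lambda)$ carries the factor $(\xi_n-\xi_n)=0$ at each inhomogeneity, so that $t_{1,0}(\xi_n)=\mathsf{k}_1\prod_{a=1}^{N}(\xi_n+\eta-\xi_a)=\alpha_1(\xi_n)$. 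Iterating, this gives
\begin{equation*}
\langle h_1,\ldots,h_N|t_0\rangle = \langle S|t_0\rangle\,\prod_{n=1}^{N}\alpha_1^{\,h_n}(\xi_n),
\end{equation*}
with $\langle S|t_0\rangle$ an $h$-independent scalar, generically nonzero under the choices of $\langle S|$ prescribed in Proposition 3.1.

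Combining the two steps and gathering all $h$-independent factors into a single constant $C$ yields
\begin{equation*}
\langle h_1,\ldots,h_N|\prod_{a=1}^{\mathsf{M}}\mathbb{B}^{(K)}(\lambda_a)|t_0\rangle = C\prod_{n=1}^{N}\alpha_1^{\,h_n}(\xi_n)\,\varphi_t^{\,h_n}(\xi_n-\eta)\,\varphi_t^{\,n-1-h_n}(\xi_n),
\end{equation*}
which is precisely the SoV wave function of $|t\rangle$ in Theorem \ref{main-Th-F-Eq}, up to the overall scalar $C$. This concludes the identification. The only routine bookkeeping consists in tracking the global sign $(-1)^{\mathsf{M}N(n-1)}$ and in ensuring the genericity condition $\langle S|t_0\rangle\neq 0$; I do not expect any substantive obstacle, since the whole proof reduces to evaluating an explicit diagonal operator on an explicit eigenvector and matching the resulting product with a prescribed factorized formula.
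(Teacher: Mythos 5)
Your proposal is correct and follows essentially the same route as the paper: act with the diagonal operator $\mathbb{B}^{(K)}(\lambda_a)$ on the SoV covectors, use $T_1^{(K)}(\xi_n)|t_0\rangle=\alpha_1(\xi_n)|t_0\rangle$ to get $\langle h_1,\ldots,h_N|t_0\rangle\propto\prod_n\alpha_1^{h_n}(\xi_n)$, repackage the products over the roots $\lambda_j$ into $\varphi_t(\xi_b)$ and $\varphi_t(\xi_b-\eta)$, and match with the separate wave function of Theorem \ref{main-Th-F-Eq} up to an overall constant. Your tracking of the sign $(-1)^{\mathsf{M}N(n-1)}$ and of the nonvanishing of $\langle S|t_0\rangle$ (guaranteed by $\prod_j w_1^{(j)}\neq 0$) is slightly more explicit than the paper's, but the argument is the same.
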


\begin{proof}
The following chain of identities holds:%
\begin{eqnarray}
\langle h_{1},...,h_{N}|\prod_{a=1}^{\mathsf{M}}\mathbb{B}%
^{(K)}(\lambda _{a})|t_{0}\rangle &=&\prod_{j=1}^{\mathsf{M}}b_{h_{1},...,h_{%
N}}(\lambda _{j})\text{ }\langle h_{1},...,h_{N%
}|t_{0}\rangle \\
&=&\prod_{j=1}^{\mathsf{M}}\prod_{a=1}^{N}(\lambda _{j}-\xi
_{a})^{n-1-h_{a}}(\lambda _{j}-\xi _{a}+\eta )^{h_{a}}\prod_{a=1}^{N%
}\alpha _{1}^{h_{a}}(\xi _{a}) \\
&=&\prod_{a=1}^{N}\alpha _{1}^{h_{a}}(\xi _{a})\varphi _{t}(\xi
_{a})^{n-1-h_{a}}\varphi _{t}(\xi _{a}-\eta )^{h_{a}},
\end{eqnarray}%
where we have used that:%
\begin{equation}
\langle h_{1},...,h_{N}|t_{0}\rangle =\prod_{a=1}^{N%
}\alpha _{1}^{h_{a}}(\xi _{a})
\end{equation}%
which coincides with the last SoV characterization of the same transfer
matrix eigenvector.
\end{proof}

\section{Conclusion}

In this article we have shown how to solve higher rank $gl_n$ quantum integrable lattice models using our new SoV approach \cite{MaiN18}. The key ingredient is provided by the commutative Bethe algebra of conserved charges with structure constants following from the integrable Yang-Baxter algebra and encoded in the fusion relations for the transfer matrices.  As in our previous analysis by SoV approach of quantum integrable models, the general idea is to completely characterize the spectrum, and eventually the dynamics, first in the inhomogeneous case. The corresponding results for the homogeneous models have then to emerge by taking the homogeneous limit, namely, the limit where all inhomogeneity parameters are set equal to some particular value, such that one recover the Hamiltonians of the homogeneous models.  It should be noted that the characterization of the transfer matrix spectrum by the quantum spectral curve equation has a smooth homogeneous limit. So starting from the complete characterization of the transfer matrix spectrum in the inhomogeneous model  one can get their homogeneous limit. However, one still has to prove that in this homogeneous limit the characterization remains complete, i.e., that the full spectrum is described by the corresponding quantum spectral curve equation. One way to prove it can be by direct construction of the SoV basis in the homogeneous models. This requires the proof of the existence of a set of commuting conserved charges with common non-degenerate spectrum. Then, a construction of an SoV basis is made possible in our approach. In particular, we can extract these conserved charges from the full set of fused transfer matrices if one can prove that they still have a common simple spectrum in the homogeneous limit. Once this key feature is achieved we have to find yet the separate relations characterizing the spectrum. They should once again be derived from the fusion relations, the quantum determinant centrality condition and the characterization of the higher fused transfer matrices in terms of the fundamental transfer matrix. The analysis of these interesting points is currently under study and their resolution could represent an important steps towards the proof of the completeness of the spectrum characterization for the homogeneous integrable quantum models.
 
About the homogeneous limit of the transfer matrix eigenvectors, it is worth recalling that up to a scalar factor our $\mathbb{B}$-operator \rf{Be-bold} should coincide \cite{MaiN18,RyaV18} with the Sklyanin's $B$-operator for an appropriate choice of our SoV basis. The fact that Sklyanin's $B$-operator admits a generic writing in terms of the elements of the monodromy matrix which is independent w.r.t. the inhomogeneities makes this observation important. Indeed, we have shown by SoV that any transfer matrix eigenvector can be rewritten in an ABA form in terms of this $\mathbb{B}$-operator computed in the zeroes of the $\varphi$-functions acting over the reference vector $|t_{0}\rangle$ \rf{reference-V}. This type of ABA rewriting of the transfer matrix eigenvectors is well adapted to take smoothly the homogeneous limit. Indeed, the form of the Sklyanin's $B$-operator and of the reference vector are unchanged under this limit while the $\varphi$-function is now solution of the quantum spectral curve equation in the homogeneous limit, which is well defined too. However, while in the inhomogeneous model these vectors are known to be nonzero and independent transfer matrix eigenvectors, as a consequence of the existence of the SoV basis, one has still to prove that the same is true in this homogeneous limit. Once this is shown they are transfer matrix eigenvectors in the homogeneous limit associated the corresponding transfer matrix eigenvalues satisfying with the $\varphi$-function the quantum spectral curve equation. The construction of the SoV basis directly in the homogeneous limit can be one possible way to prove this statement. An other way can be the analysis and the computation of scalar product and norms of separate vectors  (that include all transfer matrix eigenvectors) first in the inhomogeneous models and then in the homogeneous limit. In the SoV framework, for the rank one related models, this type of analysis has been already developed with our collaborators N. Kitanine and V. Terras \cite{KitMNT16,KitMNT17,KitMNT18}. The extension of these works to the higher rank case is the fundamental first step toward the determination of the dynamics of these integrable quantum models in the SoV framework.

\section*{Acknowledgements}
J. M. M. and G. N.  are supported by CNRS and ENS de Lyon.


\appendix

\section{Appendix A}

This appendix is dedicated to complete the proof of the Theorem \ref{main-Th}%
. In the first two subsections, we first introduce some partial results and
some tools to compute the action of transfer matrices on the elements of the
covector SoV basis then we use them in the third subsection to complete the
proof of the Theorem \ref{main-Th}.

\subsection{Transfer matrix action on inner covectors of the SoV basis}

Let us start introducing some notations for the SoV covector basis $(\ref%
{SoV-L-gln-basis})$ for our current aims, we denote:%
\begin{equation}
\langle h_{1},...,h_{N}|=\langle N_{1}^{\{h\}},...,N_{n}^{\{h\}}|
\end{equation}%
where%
\begin{equation}
N_{j}^{\{h\}}=\sum_{l=1}^{N}\delta _{h_{l},j-1}\text{ \ }\forall j\in
\{1,...,n\}
\end{equation}%
and in the following we suppress the index $\{h\}$ for brevity. Moreover, we
introduce the further notations:%
\begin{equation}
N_{m}^{-}=\sum_{a=0}^{n-m-1}(n-m-a)N_{1+a},\text{ \ \ }N_{m}^{+}=%
\sum_{a=1}^{m}aN_{n-m+a}
\end{equation}%
for any $1\leq m\leq n$, and we use the shorted notation%
\begin{equation}
\langle h_{1},...,h_{N}|=\langle (N_{m}^{-},N_{m}^{+})|,
\end{equation}%
for the generic element of the SoV covector basis when all the information
that we need to know are contained in this two numbers. Then, the following
lemma holds:

\begin{lemma}
Under the same assumption of the Theorem \ref{main-Th}, the following
identities hold:%
\begin{equation}
\langle (N_{m}^{-},N_{m}^{+})|T_{m}^{(K)}(\lambda )|t\rangle =t_{m}(\lambda
)\langle (N_{m}^{-},N_{m}^{+})|t\rangle ,\forall m\in \{1,...,n-1\}
\label{Main-ID-m}
\end{equation}%
if either $N_{m}^{-}=0$ or $N_{m}^{+}=0$.
\end{lemma}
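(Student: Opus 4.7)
The strategy is to reduce the right-action of $T_{m}^{(K)}(\lambda )$ on any such covector to a polynomial action of the commuting family $\{T_{1}^{(K)}(\xi _{b})\}_{b=1}^{N}$, which is precisely the family generating the SoV basis via the elementary shift $h_{b}\mapsto h_{b}+1$. The two assumptions $N_{m}^{+}=0$ and $N_{m}^{-}=0$ are chosen precisely so that all the shifts produced by this reduction remain within the admissible range $\{0,\ldots ,n-1\}^{\otimes N}$, so that no out-of-basis configuration ever appears. The factorized form of the wavefunction $\langle h_{1},\ldots ,h_{N}|t\rangle =\prod _{a}t_{1}^{h_{a}}(\xi _{a})$ will then convert every operator identity produced in this way into the analogous scalar identity among the $t_{1}(\xi _{a})$'s, which by the recursive definition of $t_{m}(\lambda )$ in Theorem~\ref{main-Th} reproduces exactly $t_{m}(\lambda )\,\langle h|t\rangle $.

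The first step is to establish, by Corollary~2.1 applied recursively in $\lambda $ together with iterated application of the fusion identity $T_{k+1}^{(K)}(\xi _{a})=T_{1}^{(K)}(\xi _{a})\,T_{k}^{(K)}(\xi _{a}-\eta )$ and Lagrange interpolation in $\lambda $ of each shifted argument $T_{k}^{(K)}(\xi _{a}-\eta )$ in terms of $\{T_{k}^{(K)}(\xi _{b})\}_{b=1}^{N}$, an expansion
\begin{equation*}
T_{m}^{(K)}(\lambda )=\sum _{\vec{k}:\,|\vec{k}|\le m}c_{\vec{k}}(\lambda )\,\prod _{b=1}^{N}\bigl(T_{1}^{(K)}(\xi _{b})\bigr)^{k_{b}},
\end{equation*}
whose scalar coefficients $c_{\vec{k}}(\lambda )$ are completely determined by the fusion equations, the central zeroes and the asymptotic behavior stated in Propositions~2.1--2.2. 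The crucial bound here, to be proved by induction on $m$, is $|\vec{k}|=\sum _{b}k_{b}\le m$, and hence $k_{b}\le m$ for every individual $b$.

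In the case $N_{m}^{+}=0$, i.e.\ $h_{l}\le n-m-1$ for every $l$, the shift $h_{l}\mapsto h_{l}+k_{l}$ produced by each monomial $\prod _{b}(T_{1}^{(K)}(\xi _{b}))^{k_{b}}$ satisfies $h_{l}+k_{l}\le n-1$, so the resulting covectors are still genuine elements of the SoV basis. Plugging this expansion into the matrix element, using $\langle h'|t\rangle =\prod _{a}t_{1}(\xi _{a})^{h'_{a}}$ and factoring out $\langle h|t\rangle $, then gives
\begin{equation*}
\langle h|T_{m}^{(K)}(\lambda )|t\rangle =\Bigl(\sum _{\vec{k}}c_{\vec{k}}(\lambda )\prod _{b}t_{1}(\xi _{b})^{k_{b}}\Bigr)\langle h|t\rangle =t_{m}(\lambda )\,\langle h|t\rangle ,
\end{equation*}
the last equality being the scalar analog of the operator expansion, which by construction of $t_{m}(\lambda )$ in Theorem~\ref{main-Th} is indeed $t_{m}(\lambda )$. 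The dual case $N_{m}^{-}=0$, i.e.\ $h_{l}\ge n-m$ for every $l$, is handled in a similar spirit, but one first trades the excess powers of $T_{1}^{(K)}(\xi _{l})$ against complementary shifted transfer matrices through the centrality of the quantum determinant $T_{n}^{(K)}(\xi _{l})=q{\rm{-det}}\,M^{(K)}(\xi _{l})$ and the fusion identity, factorizing $\langle h|=\langle h^{\min }|\cdot \prod _{l}(T_{1}^{(K)}(\xi _{l}))^{h_{l}-(n-m)}$ with $h^{\min }_{l}=n-m$; the remaining computation is then of exactly the same type as in the previous paragraph, but in the complementary range of indices.

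The main obstacle is the bookkeeping in the first step: one has to verify inductively that iterating fusion plus Lagrange interpolation of the shifted transfer matrices never enlarges the total degree past~$m$, so that the stated range $|\vec{k}|\le m$ really holds, and to check that the resulting coefficient algebra matches exactly the scalar recursion defining $t_{m}(\lambda )$. The dual case carries the additional subtlety that the quantum determinant factors traded off must remain invertible, which is however guaranteed by the same genericity assumption on the inhomogeneities that underlies the validity of the SoV basis itself.
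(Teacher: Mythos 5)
Your treatment of the case $N_{m}^{+}=0$ is essentially the paper's own argument: using the fusion hierarchy, the central zeroes, the asymptotics and Lagrange interpolation, one writes $T_{m}^{(K)}(\lambda )$ as a linear combination of products of at most $m$ operators $T_{1}^{(K)}(\xi _{a_{i}})$, with the very same scalar coefficients appearing in the corresponding expansion of $t_{m}(\lambda )$ in terms of the $t_{1}(\xi _{a_{i}})$; since each $T_{1}^{(K)}(\xi _{b})$ raises $h_{b}$ by one and $h_{b}\leq n-m-1$, every intermediate covector stays inside the SoV basis and each factor may be replaced by the scalar $t_{1}(\xi _{b})$. This half is sound and matches the paper.

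The case $N_{m}^{-}=0$ is where your proposal has a genuine gap. The factorization $\langle h|=\langle h^{\min }|\prod _{l}(T_{1}^{(K)}(\xi _{l}))^{h_{l}-(n-m)}$ is correct but does not help: if you now insert the expansion of $T_{m}^{(K)}(\lambda )$ into raising operators $T_{1}^{(K)}(\xi _{b})$ that you established in the first step, the exponents $h_{b}+k_{b}$ can reach $n-1+m$ and the covectors leave the basis --- precisely the situation the hypothesis is designed to exclude, and whose resolution is the content of the long ``loops'' analysis in the remainder of Appendix A, not of this lemma. What is actually required (and what the paper does) is a dual mechanism, not ``the same computation in the complementary range of indices'': first, the centrality of the quantum determinant together with $T_{1}^{(K)}(\xi _{a})T_{n-1}^{(K)}(\xi _{a}-\eta )=q{\rm -det}\,M^{(K)}(\xi _{a})$ shows that, for $h_{a}\geq 1$, the operator $T_{n-1}^{(K)}(\xi _{a}-\eta )$ acts on $\langle h|$ as the lowering $h_{a}\mapsto h_{a}-1$ times the central scalar $q{\rm -det}\,M^{(K)}(\xi _{a})$, which, combined with $t_{1}(\xi _{a})t_{n-1}(\xi _{a}-\eta )=q{\rm -det}\,M^{(K)}(\xi _{a})$ and interpolation in $\lambda $, settles the case $m=n-1$ on covectors with $N_{1}=0$; then a downward induction on $m$ uses the covector identity $\langle h|T_{m-1}^{(K)}(\xi _{a}-\eta )=\langle h^{\prime }|T_{m}^{(K)}(\xi _{a})$, where $h^{\prime }$ has $h_{a}$ lowered by one, together with $t_{m}(\xi _{a})=t_{1}(\xi _{a})t_{m-1}(\xi _{a}-\eta )$. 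Your phrase about ``trading excess powers against complementary shifted transfer matrices'' gestures at this, but you never establish the required dual expansion (into at most $n-m$ lowering operators $T_{n-1}^{(K)}(\xi _{b}-\eta )$) or the downward induction that replaces it, and the expansion you did prove cannot be reused here. As a minor point, no operator inverses are needed anywhere: the quantum determinant at the points $\xi _{a}$ is a nonzero central scalar, so its ``invertibility'' is just non-vanishing for generic inhomogeneities and invertible $K$.
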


\begin{proof}
Let us start observing that for $h_{a}\leq n-2$ and $h_{b}\in \{0,...,n-1\}$%
, for any $b\in \{1,...,N\}\backslash a$, it holds:%
\begin{align}
\langle h_{1},...,h_{N}|T_{1}^{(K)}(\xi _{a})|t\rangle & =\langle
h_{1},...,h_{a}+1,...,h_{N}|t\rangle  \notag \\
& =t_{1}(\xi _{a})\langle h_{1},...,h_{a},...,h_{N}|t\rangle ,
\label{n-Id-step1}
\end{align}%
as a direct consequence of the definition of the covector SoV basis and of
the state $|t\rangle $.
Then, the above identity implies the following ones:%
\begin{equation}
\langle N_{1},...,N_{n-1},N_{n}=0|T_{1}^{(K)}(\xi _{a})|t\rangle =t_{1}(\xi
_{a})\langle N_{1},...,N_{n-1},N_{n}=0|t\rangle, \text{\ }\forall a\in
\{1,...,N\},
\end{equation}%
and so, being $t_{1}(\lambda )$ a polynomial of degree $N$ with known asymptotics which coincides by definition with the central one of $T_{1}^{(K)}(\lambda )$, it holds also:%
\begin{equation}
\langle N_{1},...,N_{n-1},N_{n}=0|T_{1}^{(K)}(\lambda )|t\rangle
=t_{1}(\lambda )\langle N_{1},...,N_{n-1},N_{n}=0|t\rangle \text{ \ }\forall
\lambda \in \mathbb{C}.  \label{Eigen-ID-1}
\end{equation}%
Let us remark now that the interpolation formulae $\left( \ref{T-Func-form-m}%
\right) $, for the higher transfer matrices $T_{m}^{\left( K\right)
}(\lambda )$, and the formulae $\left( \ref{Func-form-m}\right) $, for the
higher functions $t_{m}(\lambda )$, can be rewritten as it follows:%
\begin{align}
T_{m}^{\left( K\right) }(\lambda )
&=\prod_{b=1}^{N}\prod_{r=1}^{m-1}(\lambda -\xi _{b}-r\eta )\left[ T_{m,%
\mathbf{h}=\mathbf{0}}^{(K,\infty )}(\lambda )+\sum_{1\leq a_{1}\leq \cdots
\leq a_{m}\leq N}r_{a_{1},...,a_{m}}(\lambda )\prod_{i=1}^{m}T_{1}^{\left(
K\right) }(\xi _{a_{i}})\right] , \\
t_{m}(\lambda ) &=\prod_{b=1}^{N}\prod_{r=1}^{m-1}(\lambda -\xi _{b}-r\eta ) 
\left[ T_{m,\mathbf{h}=\mathbf{0}}^{(K,\infty )}(\lambda )+\sum_{1\leq
a_{1}\leq \cdots \leq a_{m}\leq N}r_{a_{1},...,a_{m}}(\lambda
)\prod_{i=1}^{m}t_{1}(\xi _{a_{i}})\right] ,
\end{align}%
where $r_{a_{1},...,a_{m}}(\lambda )$ are some computable by induction
polynomials of degree one in $\lambda $, whose explicit values are not
required for the following. From these formulae it follows that:%
\begin{equation}
\langle N_{1},...,N_{n-m},N_{n-m+1}=0,...,N_{n}=0|T_{m}^{(K)}(\lambda
)|t\rangle =t_{m}(\lambda )\langle
N_{1},...,N_{n-m},N_{n-m+1}=0,...,N_{n}=0|t\rangle  \label{Eigen-Gen--}
\end{equation}%
for any $\lambda \in \mathbb{C}$ and $m\in \{1,...,n-1\}$.

Similarly, let $1\leq h_{a}$ and $h_{b}\in
\{0,...,n-1\}$ for any $b\in \{1,...,N\}\backslash a$, then it holds:%
\begin{align}
\langle h_{1},...,h_{N}|T_{n-1}^{(K)}(\xi _{a}-\eta )|t\rangle & =q\text{-det%
}M^{(K)}(\xi _{a})\langle h_{1},...,h_{a}-1,...,h_{N}|t\rangle  \notag \\
& =\frac{q\rm{-det}M^{(K)}(\xi _{a})}{t_{1}(\xi _{a})}\langle
h_{1},...,h_{a},...,h_{N}|t\rangle  \notag \\
& =t_{n-1}(\xi _{a}-\eta )\langle h_{1},...,h_{a},...,h_{N}|t\rangle .
\label{n-Id-step2}
\end{align}
Then, the above identity implies that it holds:%
\begin{equation}
\langle N_{1}=0,N_{2},...,N_{n}|T_{n-1}^{(K)}(\xi _{a}-\eta )|t\rangle
=t_{n-1}(\xi _{a}-\eta )\langle N_{1}=0,N_{2},...,N_{n}|t\rangle \text{ \ }%
\forall a\in \{1,...,N\}
\end{equation}%
and so also:%
\begin{equation}
\langle N_{1}=0,N_{2},...,N_{n}|T_{n-1}^{(K)}(\lambda )|t\rangle
=t_{n-1}(\lambda )\langle N_{1}=0,N_{2},...,N_{n}|t\rangle \text{ \ }\forall
\lambda \in \mathbb{C}.  \label{Eigen-ID-n-1}
\end{equation}
Finally, it is easy to prove by induction that it holds:%
\begin{equation}
\langle N_{1}=0,...,N_{n-m}=0,N_{n-m+1},...,N_{n}|T_{m}^{(K)}(\lambda
)|t\rangle =t_{m}(\lambda )\langle
N_{1}=0,...,N_{n-m}=0,N_{n-m+1},...,N_{n}|t\rangle .  \label{Eigen-Gen-+}
\end{equation}%
Indeed, let us assume that it holds for $m\leq n-1$ and let us prove it for $%
m-1$, for any $a\in \{1,...,N\}$ we have that:%
\begin{align}
&\langle N_{1}
=0,...,N_{n-m}=0,N_{n+1-m}=0,N_{n-m+2},...,N_{n}|T_{m-1}^{(K)}(\xi _{a}-\eta
)|t\rangle  \notag \\
& =\langle N_{1}=0,...,N_{n-m}=0,N_{n+1-m}=\delta
_{n-m+1}^{h_{a}},N_{n-m+2}-\delta _{n-m+1}^{h_{a}}+\delta
_{n-m+2}^{h_{a}},...,N_{n}-\delta _{n-1}^{h_{a}}|T_{m}^{(K)}(\xi
_{a})|t\rangle  \notag \\
& =t_{m}(\xi _{a})\langle N_{1}=0,...,N_{n-m}=0,N_{n+1-m}=\delta
_{n-m+1}^{h_{a}},N_{n-m+2}-\delta _{n-m+1}^{h_{a}}+\delta
_{n-m+2}^{h_{a}},...,N_{n}-\delta _{n-1}^{h_{a}}|t\rangle  \notag \\
& =t_{m-1}(\xi _{a}-\eta )\langle
N_{1}=0,...,N_{n-m}=0,N_{n+1-m}=0,N_{n-m+2},...,N_{n}|t\rangle ,
\end{align}%
from which our statement holds.
\end{proof}

\subsection{Transfer matrix action on not inner covectors of the SoV basis}

So to prove the Theorem \ref{main-Th}, we are left with the proof of the
above identities in the case both $N_{m}^{-}$ and $N_{m}^{+}$ are nonzero.
In order to prove this we have to show that starting from matrix elements of
the type%
\begin{equation}
\langle \bar{N}_{1},...,\bar{N}_{n}|T_{m}^{(K)}(\lambda )|t\rangle ,
\label{FTm}
\end{equation}%
and%
\begin{equation}
t_{m}(\lambda )\langle \bar{N}_{1},...,\bar{N}_{n}|t\rangle  \label{Ftm}
\end{equation}%
with $N_{m}^{-}(\{\bar{N}_{1},...,\bar{N}_{n}\})=\bar{N}_{m}^{-}$ and $%
N_{m}^{+}(\{\bar{N}_{1},...,\bar{N}_{n}\})=\bar{N}_{m}^{+}$ both nonzero, we
can develop simultaneously these matrix elements leading to linear
combinations of the type:%
\begin{equation}
\langle \bar{N}_{1},...,\bar{N}_{n}|T_{m}^{(K)}(\lambda )|t\rangle
=\sum_{r=1}^{n-1}\sum_{a=1}^{N}\sum_{h=0}^{1}\sum_{\{N_{1},...,N_{n}\}\in
S_{r,a,h}^{(m,\{\bar{N}\})}}C_{\{N_{1},...,N_{n}\}}^{(m,\{\bar{N}%
\},r,a,h)}(\lambda )\langle N_{1},...,N_{n}|T_{r}^{(K)}(\xi
_{a}^{(h)})|t\rangle ,
\end{equation}%
and%
\begin{equation}
t_{m}(\lambda )\langle \bar{N}_{1},...,\bar{N}_{n}|t\rangle
=\sum_{r=1}^{n-1}\sum_{a=1}^{N}\sum_{h=0}^{1}\sum_{\{N_{1},...,N_{n}\}\in
S_{r,a,h}^{(m,\{\bar{N}\})}}C_{\{N_{1},...,N_{n}\}}^{(m,\{\bar{N}%
\},r,a,h)}(\lambda )t_{r}(\xi _{a}^{(h)})\langle N_{1},...,N_{n}|t\rangle ,
\end{equation}%
but now with covectors satisfying the conditions:%
\begin{equation}
N_{m}^{-}(\{N_{1},...,N_{n}\})N_{m}^{+}(\{N_{1},...,N_{n}\})=0\text{ \ \ }%
\forall \{N_{1},...,N_{n}\}\in S_{r,a,h}^{(m,\{\bar{N}\})},
\end{equation}%
where for any fixed $m,\{\bar{N}\},r,a,h$ the $S_{r,a,h}^{(m,\{\bar{N}\})}$%
is some given set of compatible $n$-tuples with the definitions of the
integers $N_{i}$. Indeed, if proven such developments imply the theorem by
the previous lemma. The fact that the coefficients $C_{\{N_{1},...,N_{n}%
\}}^{(m,\{\bar{N}\},r,a,h)}(\lambda )$ of the above developments are the
same, will follow from the fact that we make exactly the same type of
operations on the two type of matrix elements, as described in the next.

\subsubsection{Interpolation expansions and fusion properties}

First, we introduce the following rules to use the interpolation formulae%
\begin{equation}
T_{m}^{\left( K\right) }(\lambda )=\prod_{b=1}^{N}\prod_{r=1}^{m-1}(\lambda
-\xi _{b}-r\eta )\left[ T_{m,\mathbf{\Delta }^{\left( m\right) }}^{(K,\infty
)}(\lambda )+\sum_{a=1}^{N}g_{a,\mathbf{\Delta }^{\left( m\right) }}^{\left(
m\right) }(\lambda )T_{m}^{\left( K\right) }(\xi _{a}^{(\Delta _{a}^{\left(
m\right) })})\right] ,  \label{Transfer-Interp-1}
\end{equation}%
and%
\begin{equation}
t_{m}(\lambda )=\prod_{b=1}^{N}\prod_{r=1}^{m-1}(\lambda -\xi _{b}-r\eta ) 
\left[ T_{m,\mathbf{\Delta }^{\left( m\right) }}^{(K,\infty )}(\lambda
)+\sum_{a=1}^{N}g_{a,\mathbf{\Delta }^{\left( m\right) }}^{\left( m\right)
}(\lambda )t_{m}(\xi _{a}^{(\Delta _{a}^{\left( m\right) })})\right] ,
\label{eigenv-transfer-Interp-1}
\end{equation}%
where the $N$-tupla $\mathbf{\Delta }^{\left( m\right) }=\{\Delta
_{1}^{\left( m\right) },....,\Delta _{N}^{\left( m\right) }\}$ is chosen
according to the covector $\langle h_{1},...,h_{N}|$ as it follows:%
\begin{equation}
\Delta _{a}^{\left( m\right) }=\left\{ 
\begin{array}{l}
0\text{ if }0\leq h_{a}\leq n-1-m \\ 
1\text{ if }n-m\leq h_{a}\leq n-1%
\end{array}%
\right. .  \label{Point-Interp}
\end{equation}%
Second, we use the fusion equations to rewrite 
\begin{equation}
\langle \bar{N}_{1},...,\bar{N}_{n}|T_{m}^{(K)}(\xi _{a}^{(\Delta
_{a}^{\left( m\right) })})|t\rangle ,\label{Original-T-zero}
\end{equation}%
and%
\begin{equation}
t_{m}(\xi _{a}^{(\Delta _{a}^{\left( m\right) })})\langle \bar{N}_{1},...,%
\bar{N}_{n}|t\rangle ,\label{Original-t-zero}
\end{equation}%
at it follows. If $0\leq h_{a}\leq n-1-m$, then they admit the following
rewriting:%
\begin{equation}
\langle N_{1},...,N_{n}|T_{m-1}^{(K)}(\xi _{a}^{(1)})|t\rangle ,
\label{type-minus-R-T}
\end{equation}%
and%
\begin{equation}
t_{m-1}(\xi _{a}^{(1)})\langle N_{1},...,N_{n}|t\rangle ,
\label{type-minus-R-t}
\end{equation}%
where:%
\begin{eqnarray}
N_{1} &=&\bar{N}_{1}-\delta _{0}^{h_{a}},\text{ \ }N_{j}=\bar{N}_{j}+\delta
_{j-2}^{h_{a}}-\delta _{j-1}^{h_{a}}\text{ \ for }2\leq j\leq n-m \\
N_{n+1-m} &=&\bar{N}_{n+1-m}+\delta _{n-m-1}^{h_{a}},\text{ \ }N_{f}=\bar{N}%
_{f}\text{ \ for }n+2-m\leq f\leq n.\label{Step-1-values}
\end{eqnarray}%
Now denoted with $N_{m-1}^{-}$ and $N_{m-1}^{+}$ the integers associated to
the above covectors, it holds:%
\begin{equation}
N_{m-1}^{-}=\bar{N}_{m-1}^{-}-1,\text{ \ }N_{m-1}^{+}=\bar{N}_{m-1}^{+},
\end{equation}%
where $\bar{N}_{m-1}^{-}$ and $\bar{N}_{m-1}^{+}$ are the integers
associated to the covector $\langle \bar{N}_{1},...,\bar{N}_{n}|$. While, if
\ $n-m\leq h_{a}\leq n-1$, then we have the following rewriting:%
\begin{equation}
\langle N_{1},...,N_{n}|T_{m+1}^{(K)}(\xi _{a})|t\rangle ,
\label{type-plus-R-T}
\end{equation}%
and%
\begin{equation}
t_{m+1}(\xi _{a})\langle N_{1},...,N_{n}|t\rangle ,  \label{type-plus-R-t}
\end{equation}%
where:%
\begin{eqnarray}
N_{j} &=&\bar{N}_{j}\text{ \ \ for }1\leq j\leq n-1-m,\text{ \ }N_{n-m}=\bar{%
N}_{n-m}+\delta _{n-m}^{h_{a}} \\
N_{f} &=&\bar{N}_{f}-\delta _{f-1}^{h_{a}}+\delta _{f}^{h_{a}}\text{ for }%
n+1-m\leq f\leq n-1,\text{ \ }N_{n}=\bar{N}_{n}-\delta _{n-1}^{h_{a}},
\end{eqnarray}%
and now it holds:%
\begin{equation}
N_{m+1}^{-}=\bar{N}_{m+1}^{-},\text{ \ }N_{m-1}^{+}=\bar{N}_{m-1}^{+}-1.
\end{equation}

The following lemma will be used in the proof of the Theorem \ref{main-Th}:

\begin{lemma}
Let us define $N_{m}^{T}\equiv N_{m}^{-}+N_{m}^{+}$, then it holds%
\begin{equation}
N\leq N_{m}^{T}\equiv N_{m}^{-}+N_{m}^{+}\leq N\times \max \{m,n-m\}
\label{Gen-Inq-}
\end{equation}%
with the condition 
\begin{equation}
N_{m}^{T}=N\text{ \ if and only if \ }N_{n-m}+N_{n+1-m}=N,
\label{Smaller-cond}
\end{equation}%
and in this last case%
\begin{equation}
\langle N_{1},...,N_{n}|T_{m}^{(K)}(\lambda )|t\rangle =t_{m}(\lambda
)\langle N_{1},...,N_{n}|t\rangle .  \label{Id-inter-a}
\end{equation}
\end{lemma}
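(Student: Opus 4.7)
The plan is to handle the two parts of the lemma separately. For the inequalities \rf{Gen-Inq-} and the equality characterization \rf{Smaller-cond}, I would rewrite
\[
N_m^T = \sum_{j=1}^{n-m} (n-m+1-j)\, N_j + \sum_{j=n-m+1}^{n} (j-n+m)\, N_j,
\]
and observe that the coefficient of each $N_j$ is a positive integer with minimum value $1$, attained only at $j = n-m$ and $j = n+1-m$, and maximum value $\max(m, n-m)$, attained at $j = 1$ or $j = n$. Since $\sum_{j=1}^n N_j = N$, both inequalities follow immediately, and the equality $N_m^T = N$ forces $N_j = 0$ for every $j \notin \{n-m, n+1-m\}$, proving \rf{Smaller-cond}.

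For the eigenvalue identity \rf{Id-inter-a}, the support condition $h_l \in \{n-m-1, n-m\}$ for all $l$ places the covector at the boundary of the two regimes covered by the previous lemma (namely $N_m^- = 0$ or $N_m^+ = 0$), since generically both quantities are strictly positive. I would start from the interpolation formula \rf{Transfer-Interp-1} with $\mathbf{\Delta}^{(m)}$ chosen according to \rf{Point-Interp}, so that the interpolation point is $\xi_a$ when $h_a = n-m-1$ and $\xi_a - \eta$ when $h_a = n-m$. To evaluate $\langle N_1,\dots,N_n|T_m^{(K)}(\xi_a^{(\Delta_a)})|t\rangle$ in each case, I would combine the SoV shift property $\langle h|T_1^{(K)}(\xi_a) = \langle h_1,\dots,h_a+1,\dots,h_N|$ (valid when $h_a \leq n-2$) with the fusion identity $T_{m+1}^{(K)}(\xi_a) = T_1^{(K)}(\xi_a) T_m^{(K)}(\xi_a - \eta)$. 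Concretely, for $h_a = n-m-1$, writing $T_m^{(K)}(\xi_a) = T_1^{(K)}(\xi_a) T_{m-1}^{(K)}(\xi_a - \eta)$ produces $\langle \tilde h|T_{m-1}^{(K)}(\xi_a-\eta)|t\rangle$ with $\tilde h$ still supported in $\{n-m-1, n-m\}$, giving $\tilde N_{m-1}^+ = 0$; for $h_a = n-m$, writing dually $\langle h|T_m^{(K)}(\xi_a-\eta) = \langle \hat h|T_{m+1}^{(K)}(\xi_a)$ with $\hat h_a = h_a -1$ gives $\hat N_{m+1}^- = 0$. Both modified configurations fall under the previous lemma, producing respectively the eigenvalues $t_{m-1}(\xi_a-\eta)$ and $t_{m+1}(\xi_a)$. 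Using the numerical fusion identities $t_m(\xi_a) = t_1(\xi_a) t_{m-1}(\xi_a-\eta)$ and $t_{m+1}(\xi_a) = t_1(\xi_a) t_m(\xi_a-\eta)$ together with the ratios $\langle \tilde h|t\rangle/\langle h|t\rangle = t_1(\xi_a)$ (respectively its inverse for $\hat h$), the factors collapse to $t_m(\xi_a^{(\Delta_a)}) \langle h|t\rangle$, and reassembling via the matching interpolation formula \rf{eigenv-transfer-Interp-1} for $t_m(\lambda)$ yields \rf{Id-inter-a}.

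The main obstacle lies in the second part: one must carefully verify that the shifted covectors $\tilde h$ and $\hat h$ fall into the applicability range of the previous lemma at level $m \pm 1$, which amounts to checking that the coefficients entering the definitions of $N_{m-1}^+$ and $N_{m+1}^-$ vanish identically on the support $\{n-m-1, n-m\}$. A subsidiary technical point is that the division by $t_1(\xi_a)$ arising in the $\hat h$ case is legitimate, which follows from the fusion identity $t_1(\xi_a)\, t_{n-1}(\xi_a - \eta) = q\text{-det}\,M^{(K)}(\xi_a)$ together with the explicit non-vanishing of the quantum determinant at the inhomogeneities under the genericity assumption.
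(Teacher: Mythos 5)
Your proposal is correct and follows essentially the same route as the paper's own proof: the weighted-sum rewriting of $N_m^T$ (with minimal coefficient $1$ attained only at $j\in\{n-m,n-m+1\}$) gives the inequalities and the equality characterization, and the eigenvalue identity is obtained exactly as in the paper by combining the interpolation formula \rf{Transfer-Interp-1} with the choice \rf{Point-Interp}, the fusion shifts to levels $m\pm1$, and the observation that the resulting covectors satisfy $N_{m-1}^{+}=0$ and $N_{m+1}^{-}=0$ so that the previous lemma applies. If anything, your write-up is more explicit than the paper's terse justification, and correctly flags the two points (applicability of the previous lemma after the shift, and non-vanishing of $t_1(\xi_a)$ via the quantum determinant) that the paper leaves implicit.
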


\begin{proof}
By the definitions of $N_{m}^{\pm }$ it is clear that for any SoV covector
must hold the inequalities (\ref{Gen-Inq-}) and that the two identities in (%
\ref{Smaller-cond}) are equivalent. Then under the condition $N_{m}^{T}=N$
the following identities hold:%
\begin{align}
N_{m-1}^{-}& =2N_{n-m}+N_{n+1-m}\neq 0,\text{ }N_{m-1}^{+}=0, \\
N_{m+1}^{-}& =0,\text{ }N_{m-1}^{+}=N_{n-m}+2N_{n+1-m}\neq 0,
\end{align}%
so that using the standard interpolation formula and fusion identities we
prove the above relation (\ref{Id-inter-a}).
\end{proof}

\subsubsection{Generation of loops}

Let us observe that if, as assumed, $\bar{N}_{m}^{-}\neq 0$ and $\bar{N}%
_{m}^{+}\neq 0$ then it holds also $N_{m-1}^{-}\neq 0$ and $N_{m+1}^{+}\neq
0 $ while it may hold $\bar{N}_{m-1}^{+}=0$ and $\bar{N}_{m+1}^{-}=0$. If
these last identities holds we have proven the identity (\ref{Main-ID-m})
for the covector considered. Indeed, we have that the Laurent polynomial
developments of both (\ref{FTm}) and (\ref{Ftm}) coincide as the matrix
elements (\ref{type-minus-R-T}) and (\ref{type-plus-R-T}), respectively,
coincide with (\ref{type-minus-R-t}) and (\ref{type-plus-R-t}), as we can
apply for them the identity (\ref{Main-ID-m}) for $m-1$ and $m+1$ being $%
\bar{N}_{m-1}^{+}=0$ and $\bar{N}_{m+1}^{-}=0$.

If $\bar{N}_{m-1}^{+}\neq 0$ then the terms (\ref{type-minus-R-T}) and (\ref%
{type-minus-R-t}) have to be developed respectively by using the
interpolation formula (\ref{Transfer-Interp-1}) and (\ref%
{eigenv-transfer-Interp-1}) with the choice of the points (\ref{Point-Interp}%
). So that, for all the $b$ such that in the covector $\langle
N_{1},...,N_{n}|$ of (\ref{type-minus-R-T}) it holds $n+1-m\leq h_{b}\leq n-1
$, we are lead to the matrix elements:%
\begin{equation}
\langle (N_{m}^{-}=\bar{N}_{m}^{-}-1,N_{m}^{+}\leq \bar{N}%
_{m}^{+})|T_{m}^{(K)}(\xi _{b})|t\rangle ,\label{type-minus-plus-R-T}
\end{equation}%
and%
\begin{equation}
t_{m}(\xi _{b})\langle (N_{m}^{-}=\bar{N}_{m}^{-}-1,N_{m}^{+}\leq \bar{N}%
_{m}^{+})|t\rangle ,\label{type-minus-plus-R-t}
\end{equation}%
where\footnote{This value of $N_{m}^{+}$ in the second step covectors, appearing on the left of \rf{type-minus-plus-R-T}-\rf{type-minus-plus-R-t}, is computed by using the line \rf{Step-1-values}. We have to observe that the value of $N_{m}^{+}$ is now smaller of one unit of the one in the first step covectors, appearing on the left of \rf{type-minus-R-T}-\rf{type-minus-R-t}. Moreover, we have to take in consideration that the value of $h_{a}$ in the first step covectors is bigger of one unit of the value in the original covector, appearing on the left of \rf{Original-T-zero}-\rf{Original-t-zero}. Similar remarks apply for the values of $N_{m}^{-}$ after \rf{+|-|Step-t}. } $N_{m}^{+}=\bar{N}_{m}^{+}+\delta _{n-m}^{h_{a}}-1$. We have done in
this way one loop (we call it $L_{-1,+1}^{\left( m\right) }$) to matrix
elements containing $T_{m}^{(K)}(\lambda )$ and $t_{m}(\lambda )$ over new
covectors, where we have reduced of one unit the value of the $N_{m}^{-}$
w.r.t. the original value $\bar{N}_{m}^{-}$ and reduced of at least one unit
the total number:%
\begin{equation}
N_{m}^{-}+N_{m}^{+}\leq \bar{N}_{m}^{-}+\bar{N}_{m}^{+}-1.
\end{equation}%
However, for all the $b$ such that in the covector $\langle N_{1},...,N_{n}|$
of (\ref{type-minus-R-T}) it holds $0\leq h_{b}\leq n-m$, we are lead to the
matrix elements:%
\begin{equation}
\langle (N_{m-2}^{-}=\bar{N}_{m-2}^{-}-2,N_{m-2}^{+}=\bar{N}%
_{m-2}^{+})|T_{m-2}^{(K)}(\xi _{b}-\eta )|t\rangle ,
\end{equation}%
and%
\begin{equation}
t_{m-2}(\xi _{b}-\eta )\langle (N_{m-2}^{-}=\bar{N}_{m-2}^{-}-2,N_{m-2}^{+}=%
\bar{N}_{m-2}^{+})|t\rangle .
\end{equation}%
Similarly, if $\bar{N}_{m+1}^{-}\neq 0$ then the terms (\ref{type-plus-R-T})
and (\ref{type-plus-R-t}) have to be developed by using respectively the
interpolation formula (\ref{Transfer-Interp-1}) and (\ref%
{eigenv-transfer-Interp-1}) with the choice of the points (\ref{Point-Interp}%
). So that, for all the $b$ such that in the covector $\langle
N_{1},...,N_{n}|$ of (\ref{type-plus-R-T}) it holds $0\leq h_{b}\leq n-2-m$,
we are lead to the matrix elements:%
\begin{equation}
\langle (N_{m}^{-}\leq \bar{N}_{m}^{-},N_{m}^{+}=\bar{N}%
_{m}^{+}-1)|T_{m}^{(K)}(\xi _{b}-\eta )|t\rangle ,
\end{equation}%
and%
\begin{equation}
t_{m}(\xi _{b}-\eta )\langle (N_{m}^{-}\leq \bar{N}_{m}^{-},N_{m}^{+}=\bar{N}%
_{m}^{+}-1)|t\rangle ,\label{+|-|Step-t}
\end{equation}%
where $N_{m}^{-}=\bar{N}_{m}^{-}+\delta _{n-m+1}^{h_{a}}-1$. We have done in
this way one loop (we call it $L_{1+,1-}^{\left( m\right) }$) to matrix
elements containing $T_{m}^{(K)}(\lambda )$ and $t_{m}(\lambda )$ over new
covectors where we have reduced of one unit the value of the $N_{m}^{+}$
w.r.t. the original value $\bar{N}_{m}^{+}$ and reduced of at least one unit
the total number:%
\begin{equation}
N_{m}^{-}+N_{m}^{+}\leq \bar{N}_{m}^{-}+\bar{N}_{m}^{+}-1.
\end{equation}%
However, for all the $b$ such that in the covector $\langle N_{1},...,N_{n}|$
of (\ref{type-plus-R-T}) it holds $n-1-m\leq h_{b}\leq n-1$, we are lead to
the matrix elements:%
\begin{equation}
\langle (N_{m+2}^{-}=\bar{N}_{m+2}^{-},N_{m+2}^{+}=\bar{N}%
_{m+2}^{+}-2)|T_{m+2}^{(K)}(\xi _{b})|t\rangle ,
\end{equation}%
and%
\begin{equation}
t_{m+2}(\xi _{b})\langle (N_{m+2}^{-}=\bar{N}_{m+2}^{-},N_{m+2}^{+}=\bar{N}%
_{m+2}^{+}-2)|t\rangle .
\end{equation}%
Let us observe that from $\bar{N}_{m}^{-}\neq 0$ and $\bar{N}_{m}^{+}\neq 0$
then it holds also $N_{m-2}^{-}=\bar{N}_{m-2}^{-}-2\neq 0$ and $N_{m+2}^{+}=%
\bar{N}_{m+2}^{+}-2\neq 0$ while it may hold that $\bar{N}_{m-2}^{+}=0$ and $%
\bar{N}_{m+2}^{-}=0$. If these last identities holds then the matrix
elements containing $T_{m\pm 2}^{(K)}(\lambda )$ and $t_{m\pm 2}(\lambda )$
are known to coincide and we are left with the computation of those of $%
T_{m}^{(K)}(\lambda )$ and $t_{m}(\lambda )$ after one cycle $%
L_{1+,1-}^{\left( m\right) }$ and $L_{1-,1+}^{\left( m\right) }$. If instead 
$N_{m\mp 2}^{\pm }\neq 0$, we have to repeat for $T_{m\mp 2}^{(K)}(\lambda )$
and $t_{m\pm 2}(\lambda )$ the same procedure developed for $%
T_{m}^{(K)}(\lambda )$ and $t_{m}(\lambda )$.

In the boundary cases, i.e. for $m=1$ and $m=n-1$, we have that to compute%
\begin{equation}
\langle \bar{N}_{1},...,\bar{N}_{n}|T_{1}^{(K)}(\lambda )|t\rangle \text{
and }\langle \bar{N}_{1},...,\bar{N}_{n}|T_{n-1}^{(K)}(\lambda )|t\rangle
\end{equation}%
and%
\begin{equation}
t_{1}(\lambda )\langle \bar{N}_{1},...,\bar{N}_{n}|t\rangle \text{ and }%
t_{n-1}(\lambda )\langle \bar{N}_{1},...,\bar{N}_{n}|t\rangle
\end{equation}%
for any $\lambda \in \mathbb{C}$, we have just to be able to compute matrix
elements of the type:%
\begin{equation}
\langle \bar{N}_{1},...,\bar{N}_{n-1},\bar{N}_{n}|T_{1}^{(K)}(\xi _{a}-\eta
)|t\rangle =\langle \bar{N}_{1},...,\bar{N}_{n-1}+1,\bar{N}%
_{n}-1|T_{2}^{(K)}(\xi _{a})|t\rangle ,
\end{equation}%
and%
\begin{equation}
t_{1}(\xi _{a}-\eta )\langle \bar{N}_{1},...,\bar{N}_{n-1},\bar{N}%
_{n}|t\rangle =t_{2}(\xi _{a})\langle \bar{N}_{1},...,\bar{N}_{n-1}+1,\bar{N}%
_{n}-1|t\rangle ,
\end{equation}%
for $h_{a}=n-1$ and, respectively,%
\begin{equation}
\langle \bar{N}_{1},\bar{N}_{2},...,\bar{N}_{n}|T_{n-1}^{(K)}(\xi
_{a})|t\rangle =\langle \bar{N}_{1}-1,\bar{N}_{2}+1,...,\bar{N}%
_{n}|T_{n-2}^{(K)}(\xi _{a}-\eta )|t\rangle ,
\end{equation}%
and%
\begin{equation}
t_{n-1}(\xi _{a})\langle \bar{N}_{1},\bar{N}_{2},...,\bar{N}_{n}|t\rangle
=t_{n-2}(\xi _{a}-\eta )\langle \bar{N}_{1}-1,\bar{N}_{2}+1,...,\bar{N}%
_{n}|t\rangle ,
\end{equation}%
for $h_{a}=0$. Here, for $\bar{N}_{2}^{-}=0$ (for $\bar{N}_{n-2}^{+}=0$) the
matrix elements containing $T_{2}^{(K)}(\xi _{a})$ and $t_{2}(\xi _{a})$ ($%
T_{n-2}^{(K)}(\xi _{a}-\eta )$ and $t_{n-2}(\xi _{a}-\eta )$) are known to
coincide otherwise we have to expand them in the usual way by using the
interpolation formula and this produces a loop $L_{+1,-1}^{\left( 1\right) }$
($L_{-1,+1}^{\left( n-1\right) }$).

Following several times the above procedure, of interpolation expansions and
use of fusions, we can also generate $a$-steps loops of down or up types.
Here we denote with $L_{-a,+a}^{\left( m\right) }$ the $a$-steps loop
obtained as $a$ consecutive down and then $a$ consecutive up, and with $%
L_{+a,-a}^{\left( m\right) }$ the $a$-steps loop obtained as $a$ consecutive
up and then $a$ consecutive down. In the following lemma we show that these
produce the same minimal type of shifts of the one loop

\begin{lemma}
\label{multiple-Loop}Let us assume that starting from the matrix elements $(%
\ref{FTm})$ and $(\ref{Ftm})$ we develop a $x$-steps loop of down type $%
L_{-x,+x}^{\left( m\right) }$ then we are lead to matrix elements of the
form:%
\begin{equation}
\langle (N_{m}^{-}\leq \bar{N}_{m}^{-}-1,N_{m}^{+}\leq \bar{N}%
_{m}^{+})|T_{m}^{(K)}(\xi _{a})|t\rangle ,
\end{equation}%
and%
\begin{equation}
t_{m}(\xi _{a})\langle (N_{m}^{-}\leq \bar{N}_{m}^{-}-1,N_{m}^{+}\leq \bar{N}%
_{m}^{+})|t\rangle .
\end{equation}%
While if we develop an $x$-step loop of up type $L_{+x,-x}^{\left( m\right)
} $ then we are lead to matrix elements of the form:%
\begin{equation}
\langle (N_{m}^{-}\leq \bar{N}_{m}^{-},N_{m}^{+}=\bar{N}%
_{m}^{+}-1)|T_{m}^{(K)}(\xi _{a}-\eta )|t\rangle ,
\end{equation}%
and%
\begin{equation}
t_{m}(\xi _{a}-\eta )\langle (N_{m}^{-}\leq \bar{N}_{m}^{-},N_{m}^{+}=\bar{N}%
_{m}^{+}-1)|t\rangle ,
\end{equation}%
so that in both the case it holds:%
\begin{equation}
N_{m}^{-}+N_{m}^{+}\leq \bar{N}_{m}^{-}+\bar{N}_{m}^{+}-1.
\end{equation}
\end{lemma}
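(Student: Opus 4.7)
The proof will proceed by induction on $x$. The base case $x=1$ is exactly what was established in the discussion immediately preceding the lemma: the loops $L_{-1,+1}^{(m)}$ and $L_{+1,-1}^{(m)}$ were shown to produce matrix elements at level $m$ with $N_m^- + N_m^+$ strictly reduced by at least one, and with the spectral argument $\xi_a$ (respectively $\xi_a - \eta$) as claimed.

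For the inductive step applied to $L_{-x,+x}^{(m)}$ I would start from $\langle \bar N_1,\ldots,\bar N_{\mathsf N}|T_m^{(K)}(\lambda)|t\rangle$ and apply the interpolation formula \eqref{Transfer-Interp-1} with the points \eqref{Point-Interp}. Each term for which $h_a$ falls in the down range $0 \leq h_a \leq n-1-m$ is rewritten via the fusion identity as $\langle N_1,\ldots,N_{\mathsf N}|T_{m-1}^{(K)}(\xi_a^{(1)})|t\rangle$ with the modified covector satisfying $N_{m-1}^- = \bar N_{m-1}^- - 1$ and $N_{m-1}^+ = \bar N_{m-1}^+$. One then applies the inductive hypothesis to the ensuing $(x-1)$-step down-up loop at level $m-1$, which produces matrix elements $\langle N'_1,\ldots,N'_{\mathsf N}|T_{m-1}^{(K)}(\xi_b)|t\rangle$ with the controlled decrease $N'^{-}_{m-1} \leq \bar N_{m-1}^- - 2$ and $N'^{+}_{m-1} \leq \bar N_{m-1}^+$. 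A concluding up step via the fusion identity run in reverse then returns us to level $m$, and direct bookkeeping on the raw components shows that the level-$m$ matrix element satisfies $N_m^- \leq \bar N_m^- - 1$ and $N_m^+ \leq \bar N_m^+$. The treatment of $L_{+x,-x}^{(m)}$ is entirely symmetric, interchanging the roles of $N_m^-$ and $N_m^+$ and replacing $\xi_a$ by $\xi_a - \eta$ in the final argument. Crucially, exactly the same sequence of interpolation expansions and fusion substitutions is applied to $t_m(\lambda)\langle \bar N_1,\ldots,\bar N_{\mathsf N}|t\rangle$, which justifies the common coefficients $C^{(m,\{\bar N\},r,a,h)}$ already invoked in the preceding construction.

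The main obstacle is the combinatorial bookkeeping on the raw components $N_1,\ldots,N_n$ through a chain of $2x$ down/up steps. Each down step at index $a$ shifts $N_1$, $N_{h_a+1}$, $N_{h_a+2}$ and $N_{n+1-m}$ according to the explicit rule displayed in the construction preceding the lemma, while each up step performs the mirror-image shift on the opposite end of the index range; one must compose these rules along the path while tracking the weighted sums $N_m^\pm$ and verifying that no cancellation between composed shifts pushes $N_m^-$ (resp. $N_m^+$) back up to $\bar N_m^-$ (resp. $\bar N_m^+$). Additional care is required for the boundary values $m=1$ and $m=n-1$, where the loop is initiated not by the generic interpolation expansion but by the special fusion rewritings $\langle\ldots|T_1^{(K)}(\xi_a-\eta)|t\rangle = \langle\ldots|T_2^{(K)}(\xi_a)|t\rangle$ with $h_a=n-1$, and symmetrically for $m=n-1$ with $h_a=0$; these reduce to the generic case after the first step, so the induction goes through unchanged from there.
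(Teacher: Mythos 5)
Your base case is fine, but the inductive step does not close as formulated. The inductive hypothesis you invoke for the inner $(x-1)$-step loop at level $m-1$ controls only the weighted sums $N_{m-1}^{\pm}$ of the intermediate covectors, whereas the conclusion you need after the final up step concerns $N_{m}^{\pm}$, and these are genuinely different linear functionals of the raw occupation numbers: one has $N_{m-1}^{-}=N_{m}^{-}+\sum_{j=1}^{n-m+1}N_{j}$ and $N_{m}^{+}=N_{m-1}^{+}+\sum_{j=n-m+1}^{n}N_{j}$, so two intermediate covectors with identical $(N_{m-1}^{-},N_{m-1}^{+})$ can differ in these partial sums and hence lead, after the same final fusion step, to different values of $(N_{m}^{-},N_{m}^{+})$. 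The ``direct bookkeeping on the raw components'' that you defer to the end is therefore not a routine verification that follows from the inductive hypothesis --- it is precisely the content of the lemma, and to make the induction work you would have to strengthen the hypothesis to a statement about the individual components $N_{1},\dots,N_{n}$ (or about the whole family $N_{m'}^{\pm}$ for all $m'$ visited by the loop), which essentially forces you back to tracking the full trajectory.

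This is in fact what the paper does, and it does so without any induction: for $L_{+x,-x}^{(m)}$ it records the cumulative effect of all $2x$ steps on each raw component $N_{j}$ in terms of the values $t_{(b,+)}=h_{a}^{(b,+)}$ and $t_{(b,-)}=h_{a}^{((x-1,+),(b,-))}$ taken by the fused site at each step, and only then evaluates $N_{m}^{\pm}$ of the final covector. The key inequalities come from the admissibility constraints $n-(m+b)\leq t_{(b,+)}\leq n-1$ and $0\leq t_{(b,-)}\leq n-1-(m+b)$: the first up step forces $\sum_{s=n-m}^{n-1}\delta_{t_{(0,+)},s}=1$, giving $N_{m}^{+}\leq\bar{N}_{m}^{+}-1$, while $\sum_{s=0}^{n-m}\sum_{b=0}^{x-1}\delta_{t_{(b,+)},s}\leq x$ balanced against the $x$ down steps gives $N_{m}^{-}\leq\bar{N}_{m}^{-}$. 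If you want to salvage your inductive scheme you must carry these component-level data through the induction; as written, the step from the bound on $N_{m-1}^{\pm}$ to the bound on $N_{m}^{\pm}$ is a gap, not a bookkeeping exercise.
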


\begin{proof}
Let us assume that the original state $\langle \bar{N}_{1},\bar{N}_{2},...,%
\bar{N}_{n}|$ is such that starting from the matrix elements%
\begin{equation}
\langle \bar{N}_{1},\bar{N}_{2},...,\bar{N}_{n}|T_{m}^{(K)}(\xi _{a}-\eta
)|t\rangle ,
\end{equation}%
and%
\begin{equation}
t_{m}(\xi _{a}-\eta )\langle \bar{N}_{1},\bar{N}_{2},...,\bar{N}%
_{n}|t\rangle ,
\end{equation}%
we can implement a loop process of the type $L_{+x,-x}^{\left( m\right) }$,
then the consecutive expansions up and down produce matrix elements
associated to covectors with the following modifications on the index:%
\begin{align}
N_{n}& =\bar{N}_{n}-\sum_{b=0}^{x-1}\delta _{t_{(b,+)},n-1},  \notag \\
N_{n-r}& =\bar{N}_{n-r}-\sum_{b=0}^{x-1}(\delta _{t_{(b,+)},n-r-1}-\delta
_{t_{(b,+)},n-r}),\text{\ for any }r\in \{1,...,m-1\},  \notag \\
N_{n-m}& =\bar{N}_{n-m}-\sum_{b=1}^{x-1}(\delta _{t_{(b,+)},n-m-1}-\delta
_{t_{(b,+)},n-m})+\delta _{t_{(0,+)},n-m}+\delta _{t_{(1,-)},n-m-2},  \notag
\\
N_{n-(m+r)}& =\bar{N}_{n-(m+r)}-\sum_{b=1+r}^{x-1}(\delta
_{t_{(b,+)},n-(m+r)-1}-\delta _{t_{(b,+)},n-(m+r)})+\delta
_{t_{(r,+)},n-(m+r)}  \notag \\
& +\delta _{t_{(r+1,-)},n-(m+r+2)}-\sum_{b=1}^{r}(\delta
_{t_{(b,-)},n-(m+r+1)}-\delta _{t_{(b,-)},n-(m+r+2)}),\, \forall r\left. \in
\right. \{1,...,x-2\},  \notag \\
N_{n-(m+x-1)}& =\bar{N}_{n-(m+x-1)}+\delta _{t_{(r,+)},n-(m+x-1)}+\delta
_{t_{(x,-)},n-(m+x+1)} \notag \\
&-\sum_{b=1}^{x-1}(\delta _{t_{(b,-)},n-(m+x)}-\delta
_{t_{(b,-)},n-(m+x+1)}),  \notag \\
N_{n-(m+x+s)}& =\bar{N}_{n-(m+x+s)} -\sum_{b=1}^{x}(\delta
_{t_{(b,-)},n-(m+x+s+1)}-\delta _{t_{(b,-)},n-(m+x+s+2)}),\, \notag \\
&\forall s\left.
\in \right. \{0,...,n-(m+x+2)\},  \notag \\
N_{1}& =\bar{N}_{1}-\sum_{b=1}^{x}\delta _{t_{(b,-)},0},
\end{align}%
where we have used the following notations:

a) in the step $b+1$ up produced by fusion on matrix elements of the type:%
\begin{equation}
\langle h_{1}^{(b,+)},h_{2}^{(b,+)},...,h_{N}^{(b,+)}|T_{m+b}^{(K)}(\xi
_{a})|t\rangle ,\text{ \ \ \ \ }t_{m+b}(\xi _{a})\langle
h_{1}^{(b,+)},h_{2}^{(b,+)},...,h_{N}^{(b,+)}|t\rangle ,
\end{equation}%
then we have denoted $t_{(b,+)}=h_{a}^{(b,+)}$, for any $b\in \{0,...,x-1\}$.

b) in the step $b$ down produced by fusion on matrix elements of the type:%
\begin{equation}
\left. 
\begin{array}{c}
\langle
h_{1}^{((x-1,+),(b,-))},h_{2}^{((x-1,+),(b,-))},...,h_{N}^{((x-1,+),(b,-))}|T_{m+x+1-b}^{(K)}(\xi _{a}-\eta )|t\rangle ,
\\ 
t_{m+x+1-b}(\xi _{a}-\eta )\langle
h_{1}^{((x-1,+),(b,-))},h_{2}^{((x-1,+),(b,-))},...,h_{N}^{((x-1,+),(b,-))}|t\rangle ,%
\end{array}%
\right.
\end{equation}%
then we have denoted $t_{(b,-)}=h_{a}^{((x-1,+),(b,-))}$, for any $b\in
\{1,...,x\}$.

In writing the above contributions to the $N_{i}$ of the covector obtained
by the full process $L_{+x,-x}^{\left( m\right) }$, we have considered that
according to the expansion rule (\ref{Point-Interp}) and the fusion rule we
can have an up/down step if and only if:%
\begin{equation}
\left. 
\begin{array}{c}
0\leq t_{(b,-)}\leq n-1-(m+b),\text{ \ }\forall b\in \{1,...,x\}, \\ 
n-(m+b)\leq t_{(b,+)}\leq n-1,\text{ \ }\forall b\in \{0,...,x-1\}.%
\end{array}%
\right.
\end{equation}%
Now from the above formulae it follows for the final state the following
rules:%
\begin{eqnarray}
N_{m}^{+} &=&\bar{N}_{m}^{+}-\sum_{s=n-m}^{n-1}\sum_{b=0}^{x-1}\delta
_{t_{(b,+)},s}=\bar{N}_{m}^{+}-1-\sum_{s=n-m}^{n-1}\sum_{b=1}^{x-1}\delta
_{t_{(b,+)},s},  \notag \\
&\leq &\bar{N}_{m}^{+}-1 \\
N_{m}^{-} &=&\bar{N}_{m}^{-}-\sum_{s=0}^{n-m-2}\sum_{b=1}^{x}\delta
_{t_{(b,-)},s}+\sum_{s=0}^{n-m}\sum_{b=0}^{x-1}\delta _{t_{(b,+)},s}  \notag
\\
&=&\bar{N}_{m}^{-}-x+\sum_{s=0}^{n-m}\sum_{b=0}^{x-1}\delta
_{t_{(b,+)},s}\leq \bar{N}_{m}^{-},
\end{eqnarray}%
being:%
\begin{eqnarray}
\sum_{s=n-m}^{n-1}\delta _{t_{(0,+)},s} &=&1,\text{ \ \ }\sum_{s=0}^{n-m}%
\sum_{b=0}^{x-1}\delta _{t_{(b,+)},s}\leq x, \\
\sum_{s=0}^{n-m-2}\delta _{t_{(b,-)},s} &=&1,\text{ for any }b\in
\{1,...,x\}.
\end{eqnarray}
\end{proof}

\subsection{Proof of Theorem \ref{main-Th}}

We have now the tools required to prove our Theorem \ref{main-Th}

\begin{proof}[Proof of Theorem \protect\ref{main-Th}: second part]
Let us use the above rules to prove the identities%
\begin{equation}
\langle \bar{N}_{1},...,\bar{N}_{n}|T_{1}^{(K)}(\xi _{a}-\eta )|t\rangle
=t_{1}(\xi _{a}-\eta )\langle \bar{N}_{1},...,\bar{N}_{n}|t\rangle ,
\label{Id-Fund}
\end{equation}%
for $h_{a}=n-1$, from which the identity (\ref{Main-ID-m}) holds for $m=1$
and so for any $m\leq n-1$.

We proceed doing a first loop of type $L_{+,-}^{\left( 1\right) }$ on both%
\begin{equation}
\langle \bar{N}_{1},...,\bar{N}_{n}|T_{1}^{(K)}(\xi _{a}-\eta )|t\rangle 
\text{ and }t_{1}(\xi _{a}-\eta )\langle \bar{N}_{1},...,\bar{N}%
_{n}|t\rangle ,  \label{Origin}
\end{equation}%
this leads to the same linear combination of $N$ matrix elements, those of
loop type:%
\begin{equation}
\langle N_{1}^{+}=\bar{N}_{1}^{+}-1,N_{1}^{-}\leq \bar{N}%
_{1}^{-}|T_{1}^{(K)}(\xi _{b}-\eta )|t\rangle \text{ and }t_{1}(\xi
_{b}-\eta )\langle N_{1}^{+}=\bar{N}_{1}^{+}-1,N_{1}^{-}\leq \bar{N}%
_{1}^{-}|t\rangle ,  \label{Loop-1-1}
\end{equation}%
and the others of the type:%
\begin{equation}
\langle N_{3}^{+}=\bar{N}_{3}^{+}-2,N_{3}^{-}=\bar{N}_{3}^{-}|T_{3}^{(K)}(%
\xi _{b})|t\rangle \text{ and }t_{3}(\xi _{b})\langle N_{3}^{+}=\bar{N}%
_{3}^{+}-2,N_{3}^{-}=\bar{N}_{3}^{-}|t\rangle .  \label{Loop-1-3}
\end{equation}

Now if $\bar{N}_{1}^{+}=1$, then the matrix elements in (\ref{Loop-1-1})
coincides. If $\bar{N}_{3}^{-}=0$, then the matrix elements in (\ref%
{Loop-1-3}) coincides. If both these identities holds then all these $N$
matrix elements are proven to coincide which implies (\ref{Id-Fund}).
Otherwise we make a second loop for the matrix elements in (\ref{Loop-1-3})
if $\bar{N}_{3}^{-}\geq 1$. This leads to write each one of our original
matrix elements (\ref{Origin}) as the same linear combination of at most $%
N^{3}$ matrix elements, those of type (\ref{Loop-1-1}) that we haven't
further developed, new elements of the type\footnote{%
Indeed, as it is shown in the Lemma \ref{multiple-Loop}, the rules for a
loop with multiple steps up and down produce the same minimal type of shifts
of the one loop.} (\ref{Loop-1-1}) generated by a 2 loop $L_{2+,2-}^{\left(
1\right) }$, those of loop $L_{-,+}^{\left( 3\right) }$ and $L_{+,-}^{\left(
3\right) }$ which can be written in a common notation as it follows: 
\begin{equation}
\left. 
\begin{array}{c}
\langle N_{3}^{+}\leq \bar{N}_{3}^{+}-2-r_{3},N_{3}^{-}\leq \bar{N}%
_{3}^{-}-s_{3}|T_{3}^{(K)}(\xi _{b}^{(h^{(3)})})|t\rangle \text{,} \\ 
t_{3}(\xi _{b}^{(h^{(3)})})\langle N_{3}^{+}\leq \bar{N}%
_{3}^{+}-2-r_{3},N_{3}^{-}\leq \bar{N}_{3}^{-}-s_{3}|t\rangle ,%
\end{array}%
\right.   \label{Loop-2-3}
\end{equation}
with $h^{(3)}=0$ if the loop is $L_{-,+}^{\left( 3\right) }$ and $h^{(3)}=1$
if the loop is $L_{+,-}^{\left( 3\right) }$ with $r_{3}+s_{3}=1$ and the
others of the type:%
\begin{equation}
\langle N_{5}^{+}=\bar{N}_{5}^{+}-4,N_{5}^{-}=\bar{N}_{5}^{-}|T_{5}^{(K)}(%
\xi _{b})|t\rangle \text{ and }t_{5}(\xi _{b})\langle N_{5}^{+}=\bar{N}%
_{5}^{+}-4,N_{5}^{-}=\bar{N}_{5}^{-}|t\rangle .  \label{Loop-2-5}
\end{equation}%
So after this process we are left just with the matrix elements of the type (%
\ref{Loop-1-1}), (\ref{Loop-2-3}) and (\ref{Loop-2-5}).

Now we have to repeat the same type of considerations done in the first
loop. In particular, for the matrix elements in (\ref{Loop-2-5}) we do a
further loop if the conditions $\bar{N}_{5}^{-}\geq 1$ is satisfied while
for $\bar{N}_{5}^{-}=0$, no further development is required as the matrix
elements in (\ref{Loop-2-5}) coincide. If this third loop is required this
produces for each one of our original matrix elements (\ref{Origin}) the
development in the same linear combination of at most $N^{5}$ matrix
elements, those of loop type (\ref{Loop-1-1}) and (\ref{Loop-2-3}), which we
haven't developed, those of the type (\ref{Loop-2-3}) which are generated by
the development of (\ref{Loop-2-5}) and correspond to a two steps up and
down loop $L_{+2,-2}^{\left( 3\right) }$, the one loop type $L_{-,+}^{\left(
5\right) }$ and $L_{+,-}^{\left( 5\right) }$ which can be written in a
common notation as it follows:%
\begin{equation}
\left. 
\begin{array}{c}
\langle N_{5}^{+}\leq \bar{N}_{5}^{+}-4-r_{5},N_{5}^{-}\leq \bar{N}%
_{5}^{-}-s_{5}|T_{3}^{(K)}(\xi _{b}^{(h^{(5)})})|t\rangle \text{,} \\ 
t_{3}(\xi _{b}^{(h^{(5)})})\langle N_{5}^{+}\leq \bar{N}%
_{5}^{+}-4-r_{5},N_{5}^{-}\leq \bar{N}_{5}^{-}-s_{5}|t\rangle ,%
\end{array}%
\right. 
\end{equation}%
$h^{(5)}=0$ if the loop is $L_{-,+}^{\left( 5\right) }$ and $h^{(5)}=1$ if
the loop is $L_{+,-}^{\left( 5\right) }$ with $r_{5}+s_{5}=1$ and the new
ones%
\begin{equation}
\langle N_{7}^{+}=\bar{N}_{7}^{+}-6,N_{7}^{-}=\bar{N}_{7}^{-}|T_{7}^{(K)}(%
\xi _{b})|t\rangle \text{ and }t_{7}(\xi _{b})\langle N_{7}^{+}=\bar{N}%
_{7}^{+}-6,N_{7}^{-}=\bar{N}_{7}^{-}|t\rangle .
\end{equation}%
This process is continued up to $x$ loops, where $x$ is the smaller integer
such that $\bar{N}_{1+2x}^{-}\geq 1$ and $\bar{N}_{1+2(x+1)}^{-}=0$ or $%
x=\{\left( n-1\right) /2$ for $n$ odd, $\left( n-2\right) /2$ for $n$ even$\}
$, in this way producing for each one of our original matrix elements (\ref%
{Origin}) the development in the same linear combination of at most $N^{1+2x}
$ matrix elements, of the type:%
\begin{equation}
\left. 
\begin{array}{c}
\langle N_{1+2a}^{+}\leq \bar{N}_{1+2a}^{+}-2a-r_{1+2a},N_{1+2a}^{-}\leq 
\bar{N}_{1+2a}^{-}-s_{1+2a}|T_{1+2a}^{(K)}(\xi _{b}^{(h^{(1+2a)})})|t\rangle 
\text{,} \\ 
t_{1+2a}(\xi _{b}^{(h^{(1+2a)})})\langle N_{1+2a}^{+}\leq \bar{N}%
_{1+2a}^{+}-2a-r_{1+2a},N_{1+2a}^{-}\leq \bar{N}_{1+2a}^{-}-s_{1+2a}|t%
\rangle ,%
\end{array}%
\right.   \label{General-1}
\end{equation}%
with $r_{1+2a}+s_{1+2a}=1$ for $1\leq a\leq x-1$ and the remaining one of
the type:%
\begin{equation}
\left. 
\begin{array}{c}
\langle N_{1+2x}^{+}=\bar{N}_{1+2x}^{+}-2x,N_{1+2x}^{-}=\bar{N}%
_{1+2x}^{-}|T_{1+2x}^{(K)}(\xi _{b})|t\rangle \text{,} \\ 
t_{1+2x}(\xi _{b})\langle N_{1+2x}^{+}=\bar{N}_{1+2x}^{+}-2x,N_{1+2x}^{-}=%
\bar{N}_{1+2x}^{-}|t\rangle .%
\end{array}%
\right.   \label{Border-term}
\end{equation}%
From this point any further loop development of (\ref{Border-term}) does not
generate new type of terms but instead it generates matrix elements of the
type:%
\begin{equation}
\left. 
\begin{array}{c}
\langle N_{2x-1}^{+}\leq \bar{N}_{2x-1}^{+}+1-2x-r_{2x-1},N_{2x-1}^{-}\leq 
\bar{N}_{2x-1}^{-}-s_{2x-1}|T_{2x-1}^{(K)}(\xi _{b}^{(h^{(2x-1)})})|t\rangle 
\text{,} \\ 
t_{2x-1}(\xi _{b}^{(h^{(2x-1)})})\langle N_{2x-1}^{+}\leq \bar{N}%
_{2x-1}^{+}+1-2x-r_{2x-1},N_{2x-1}^{-}\leq \bar{N}_{2x-1}^{-}-s_{2x-1}|t%
\rangle ,%
\end{array}%
\right.   \label{Before-Border}
\end{equation}%
and%
\begin{equation}
\left. 
\begin{array}{c}
\langle N_{1+2x}^{+}\leq \bar{N}_{1+2x}^{+}-2x,N_{1+2x}^{-}=\bar{N}%
_{1+2x}^{-}-1|T_{1+2x}^{(K)}(\xi _{b})|t\rangle \text{,} \\ 
t_{1+2x}(\xi _{b})\langle N_{1+2x}^{+}\leq \bar{N}%
_{1+2x}^{+}-2x,N_{1+2x}^{-}=\bar{N}_{1+2x}^{-}-1|t\rangle .%
\end{array}%
\right.   \label{1+Border-term}
\end{equation}%
Starting from the terms (\ref{Border-term}) making at most $\bar{N}%
_{1+2(x+1)}^{-}$ loops we arrive at matrix elements of the type (\ref%
{Before-Border}) and to
\begin{equation}
\left. 
\begin{array}{c}
\langle N_{1+2x}^{+}\leq \bar{N}%
_{1+2x}^{+}-2x,N_{1+2x}^{-}=0|T_{1+2x}^{(K)}(\xi _{b})|t\rangle \text{,} \\ 
t_{1+2x}(\xi _{b})\langle N_{1+2x}^{+}\leq \bar{N}%
_{1+2x}^{+}-2x,N_{1+2x}^{-}=0|t\rangle ,%
\end{array}%
\right.   \label{Final-step}
\end{equation}%
and so for them the coincidence of the matrix elements in (\ref{Final-step})
is known and we are left only with matrix elements of the type (\ref%
{General-1}) for $1\leq a\leq x-1$. Now we make a one loop developments for
the matrix elements (\ref{General-1}) with (\ref{General-1}) for $a=x-1$,
this produces matrix elements of the type (\ref{General-1}) with $a=x-2$,
loops term of the type%
\begin{equation}
\langle N_{1+2(x-1)}^{T}\leq \bar{N}_{1+2(x-1)}^{T}-2x-1|T_{1+2(x-1)}^{(K)}(%
\xi _{b})|t\rangle \text{, }t_{1+2(x-1)}(\xi _{b})\langle
N_{1+2(x-1)}^{T}\leq \bar{N}_{1+2(x-1)}^{T}-2x-1|t\rangle ,
\label{New-Bef-Bord}
\end{equation}%
and new border terms of the type (\ref{1+Border-term}) with $%
N_{1+2x}^{-}\leq \bar{N}_{1+2x}^{-}-1$, either this integer is zero or we
repeat for these border terms the same procedure explained above ending up
with only matrix elements of the type (\ref{General-1}) for $1\leq a\leq x-2$
while for $a=x-1$, we are left with all terms of the type (\ref{New-Bef-Bord}%
) so that we have reduced of one unit the value of $N_{1+2(x-1)}^{T}$ with
respect to the value in (\ref{Before-Border}) for which it was holding%
\begin{equation}
N_{1+2(x-1)}^{T}\leq \bar{N}_{2x-1}^{+}+1-2x-r_{2x-1}+\bar{N}%
_{2x-1}^{-}-s_{2x-1}=\bar{N}_{2x-1}^{T}-2x.
\end{equation}%
We can repeat this procedure now and after at most a total of $\bar{N}%
_{2x-1}^{T}-2x-N$, we generate matrix elements with both $a=x-1$ and $a=x$
for which the identity between operator and scalar matrix elements is known.
So on we are lead to develop matrix elements with $a=x-2$ reducing, at any
step of the above procedure, at least of one unit the value of $%
N_{1+2(x-2)}^{T}$ up to arrive to matrix elements for which the identity of
the operator and scalar terms is known. At this point we are left with $a=x-3
$ and so on up to arrive to be left only with matrix elements of the type (%
\ref{General-1}) for $a=1$ for which we have $N_{1}^{+}\leq \bar{N}_{1}^{+}-1
$, here if we want we can do induction and prove the theorem.

However, it is interesting to get a bound on the maximal number of loops to
implement in order to be reduced to linear combinations of matrix elements
for which the identity (\ref{Main-ID-m}) applies. From the above analysis,
to get the reduction of at least one unit in $N_{1}^{+}$, we have to
implement before $x$ loops, to generate all the types of allowed matrix
elements, then we have to do less than%
\begin{equation}
N_{1+2x}^{-}\prod_{a=1}^{x-1}(\bar{N}_{2a+1}^{T}-(2a+1)-N),
\end{equation}%
loops. This means that using the above procedure, we have to do less than%
\begin{equation}
\bar{N}_{T}\equiv \bar{N}_{1}^{+}(x+N_{1+2x}^{-}\prod_{a=1}^{x-1}(\bar{N}%
_{2a+1}^{T}-(2a+1)-N)),
\end{equation}%
loops in order to generate only matrix elements for which the identity (\ref%
{Main-ID-m}) applies. This means that starting from the original matrix
elements (\ref{FTm}) and (\ref{Ftm}), by implementing all these loops, they
will be rewritten as the linear combination of less than $N^{1+2\bar{N}_{T}}$
matrix elements for which the identity (\ref{Main-ID-m}) applies with the
same coefficients, namely for any $i=1, \dots,N$ we have: 
\begin{equation}
\langle \bar{N}_{1},...,\bar{N}_{n}|T_{1}^{(K)}(\xi _{i}-\eta )|t\rangle
=\sum_{r=1}^{n-1}\sum_{a=1}^{N}\sum_{h=0}^{1}\sum_{\{N_{1},...,N_{n}\}\in
S_{r,a,h}^{(m,\{\bar{N}\})}}C_{\{N_{1},...,N_{n}\}}^{(1,\{\bar{N}%
\},r,a,h,i)}\langle N_{1},...,N_{n}|T_{r}^{(K)}(\xi _{a}^{(h)})|t\rangle ,
\end{equation}%
and%
\begin{equation}
t_{1}(\xi _{i}-\eta )\langle \bar{N}_{1},...,\bar{N}_{n}|t\rangle
=\sum_{r=1}^{n-1}\sum_{a=1}^{N}\sum_{h=0}^{1}\sum_{\{N_{1},...,N_{n}\}\in
S_{r,a,h}^{(m,\{\bar{N}\})}}C_{\{N_{1},...,N_{n}\}}^{(1,\{\bar{N}%
\},r,a,h,i)}t_{r}(\xi _{a}^{(h)})\langle N_{1},...,N_{n}|t\rangle ,
\end{equation}%
from which the proof of the Theorem directly follows.
\end{proof}
It is interesting to remark here that all the formulae derived in this Appendix could have been obtained exactly in the same manner for the products of operators themselves appearing in any of the considered average value between the co-vector $\langle S|$ and the vector $|t \rangle$. Hence a direct consequence of the above relation is the equation \eqref{ThTa} with the property that the coefficients $C_{\mathbf{h} \mathbf{h'}}^{(a)}$ are obtained from the fusion relations and asymptotic behavior of the transfer matrices.

\section{Appendix B}

This appendix is devoted to complete the proof of the Theorem \ref%
{main-Th-F-Eq}. In a first subsection, we first recall for self-consistence
some elementary properties of the algebraic functions that we will use. Then
we present in the second subsection the proof the Theorem \ref{main-Th-F-Eq}.

\subsection{Some elementary properties of algebraic functions}

Let us present a couple of elementary lemmas on algebraic functions. First,
let us recall that a function $f(x_{1},...,x_{N})$ of $N$ variables:%
\begin{equation}
f:(x_{1},...,x_{N})\in \mathbb{C}^{N}\rightarrow f(x_{1},...,x_{N})\in 
\mathbb{C}
\end{equation}%
is by definition algebraic iff there exist $M+1$ polynomials $%
a_{m}(x_{1},...,x_{N})$ such that it holds:%
\begin{equation}
P(y|x_{1},...,x_{N})=0\text{ for }y\equiv f(x_{1},...,x_{N}),
\end{equation}\label{Algbr-F}
where we have defined the polynomial $P(y|x_{1},...,x_{N})$ as it follows:%
\begin{equation}
P(y|x_{1},...,x_{N})=\sum_{m=0}^{M}y^{m}a_{m}(x_{1},...,x_{N}).
\label{Poly-M}
\end{equation}%
Moreover, we say that $f(x_{1},...,x_{N})$ is an algebraic function of order 
$M$ if the polynomial $a_{M}(x_{1},...,x_{N})$ is not identically zero.
Clearly, to any set of polynomial $a_{m}(x_{1},...,x_{N})$ for $m\in
\{0,...,M\}$, under the condition $a_{M}(x_{1},...,x_{N})$ not identically
zero, are associated $M$ algebraic functions $p_{i}(x_{1},...,x_{N})$ of
order $M$ such that:%
\begin{equation}
P(y|x_{1},...,x_{N})=a_{M}(x_{1},...,x_{N})\prod_{m=1}^{M}\left(
y-p_{m}(x_{1},...,x_{N})\right) .
\end{equation}

\begin{lemma}
\label{Lemma1}Let us fix $P(y|x_{1},...,x_{N})$ a degree $M$ polynomial in $%
y $ of the form \eqref{Poly-M}, then the $M$ associated algebraic functions $%
p_{m}(x_{1},...,x_{N})$ are all nonzero almost for any $(x_{1},...,x_{N})\in 
\mathbb{C}^{N}$ if (and only if) there exists at least a point $(\bar{x}%
_{1},...,\bar{x}_{N})\in \mathbb{C}^{N}$ such that:%
\begin{equation}
p_{m}(\bar{x}_{1},...,\bar{x}_{N})\neq 0\text{ \ }\forall m\in \{1,...,M\}.
\label{Inequa-AF1}
\end{equation}
\end{lemma}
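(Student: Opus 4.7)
My plan is to reduce the statement to an elementary fact about polynomials via Vieta's formula. Since $p_1(x_1,\dots,x_N),\dots,p_M(x_1,\dots,x_N)$ are by construction the $M$ roots of the polynomial $P(y|x_1,\dots,x_N)=\sum_{m=0}^M y^m a_m(x_1,\dots,x_N)$ viewed as a polynomial in $y$, their product is
\begin{equation}
\prod_{m=1}^{M} p_m(x_1,\dots,x_N) \;=\; (-1)^{M}\,\frac{a_{0}(x_1,\dots,x_N)}{a_{M}(x_1,\dots,x_N)}.
\end{equation}
Hence, a single $p_m$ can vanish at $(x_1,\dots,x_N)$ only where the polynomial $a_0(x_1,\dots,x_N)$ vanishes (excluding the exceptional set where $a_M$ vanishes, which is already excluded by the order-$M$ assumption).

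Now I would argue the two directions as follows. The reverse implication is immediate: if the $p_m$ are all nonzero outside a subset of $\mathbb{C}^N$ of measure zero, then in particular one such point $(\bar x_1,\dots,\bar x_N)$ exists. For the forward implication, assume that such a point $(\bar x_1,\dots,\bar x_N)$ exists. Then $\prod_{m=1}^{M} p_m(\bar x_1,\dots,\bar x_N)\neq 0$, which together with the Vieta identity above forces $a_0(\bar x_1,\dots,\bar x_N)\neq 0$. In particular the polynomial $a_0$ is not identically zero. The zero locus of a nonzero polynomial in $N$ complex variables is a proper algebraic subvariety of $\mathbb{C}^N$, hence of Lebesgue measure zero and nowhere dense. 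Combined with the (already proper) algebraic subvariety $\{a_M=0\}$, this gives a subset $Z\subset \mathbb{C}^N$ of measure zero outside of which $\prod_{m=1}^{M} p_m(x_1,\dots,x_N)\neq 0$. Therefore each individual $p_m(x_1,\dots,x_N)$ is nonzero for almost every $(x_1,\dots,x_N)\in\mathbb{C}^N$, which is the claim.

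I do not anticipate a serious obstacle here: the statement is essentially an elementary consequence of Vieta and the zero-set structure of polynomials. The only points requiring slight care are (i) ensuring that the algebraic functions $p_m$ are well-defined (single-valued up to labelling, since they may have branch points) outside a proper analytic subset where $a_M$ vanishes, so that the Vieta relation makes sense pointwise almost everywhere, and (ii) being clear about what "almost any" means; both are standard and can be handled by restricting to the Zariski-open complement of $\{a_M\,a_0=0\}$. I would make these two clarifications explicit in the write-up.
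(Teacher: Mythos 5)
Your proof is correct and follows essentially the same route as the paper's: both rest on the Vieta identity $a_{0}=(-1)^{M}a_{M}\prod_{m=1}^{M}p_{m}$, deduce from the hypothesis that $a_{0}$ is not identically zero, and conclude that the product (hence each factor) is nonzero outside the measure-zero vanishing locus of the polynomial $a_{0}\,a_{M}$. Your added remarks on single-valuedness of the $p_{m}$ and the meaning of ``almost any'' are sensible clarifications but do not alter the argument.
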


\begin{proof}
The proof is an immediate consequence of the following identity:%
\begin{equation}
a_{0}(x_{1},...,x_{N})=\left( -1\right)
^{M}a_{M}(x_{1},...,x_{N})\prod_{m=1}^{M}p_{m}(x_{1},...,x_{N}),
\end{equation}%
of the fact that by assumption $a_{0}(x_{1},...,x_{N})$ and $%
a_{M}(x_{1},...,x_{N})$ are polynomials and $a_{M}(x_{1},...,x_{N})$ is not
identically zero. Indeed, the inequality \eqref{Inequa-AF1} implies that it
holds:%
\begin{equation}
a_{0}(\bar{x}_{1},...,\bar{x}_{N})\neq 0,
\end{equation}%
so that, being it a polynomial, it is not identically zero. So the product%
\begin{equation}
\prod_{m=1}^{M}p_{m}(x_{1},...,x_{N})\neq 0
\end{equation}%
for almost any $(x_{1},...,x_{N})\in \mathbb{C}^{N}$ and the same has to be
true for any one of the algebraic functions $p_{m}(x_{1},...,x_{N})$.
\end{proof}

It is interesting to point out that any algebraic function $%
f(x_{1},...,x_{N})$ of degree $M$ can be naturally interpreted as the
eigenvalue of a $N$-parameters polynomial family of $M\times M$ square
matrices. Indeed, let us denote with $P(y|x_{1},...,x_{N})$ a degree $M$
polynomial in $y$ of the form \eqref{Poly-M} such that $f(x_{1},...,x_{N})$ is one of its roots \rf{Algbr-F}. Then,  $f(x_{1},...,x_{N})$ is an eigenvalue of a family of square matrices 
$\mathsf{P}(x_{1},...,x_{N})$ with characteristic polynomial coinciding with 
$P(y|x_{1},...,x_{N})$. Note that we have the following representation of $%
\mathsf{P}(x_{1},...,x_{N})$ in terms of the Frobenius companion matrix:%
\begin{equation}
\mathsf{P}(x_{1},...,x_{N})=V_{M}\mathsf{C}(x_{1},...,x_{N})V_{M}^{-1},
\end{equation}%
where $V_{M}$ is any invertible $M\times M$ square matrix and%
\begin{equation}
\mathsf{C}(x_{1},...,x_{N})=\left( 
\begin{array}{ccccc}
0 & 1 & 0 & \cdots & 0 \\ 
0 & 0 & 1 & \ddots & \vdots \\ 
\vdots & \vdots & \ddots & \ddots & \vdots \\ 
0 & 0 & \cdots & 0 & 1 \\ 
-\mathsf{a}_{0}(\{x_{i}\}) & -\mathsf{a}_{1}(\{x_{i}\}) & \cdots & -\mathsf{a%
}_{M-2}(\{x_{i}\}) & -\mathsf{a}_{M-1}(\{x_{i}\})%
\end{array}%
\right) _{M\times M},  \label{Cm-Fro}
\end{equation}%
and $\mathsf{a}_{j}(x_{1},...,x_{N})$ are the following rational functions:%
\begin{equation}
\mathsf{a}%
_{j}(x_{1},...,x_{N})=a_{j}(x_{1},...,x_{N})/a_{M}(x_{1},...,x_{N}).
\end{equation}%
Indeed, the characteristic polynomial associated to $\mathsf{P}%
(x_{1},...,x_{N})$ is%
\begin{equation}
-P(y|x_{1},...,x_{N})/a_{M}(x_{1},...,x_{N})=\prod_{m=1}^{M}\left(
y-p_{m}(x_{1},...,x_{N})\right) ,
\end{equation}%
which has the same algebraic function roots of $P(y|x_{1},...,x_{N})$. This
observation allows to prove easily the following:

\begin{lemma}
\label{Lemma2}Let $p(x_{1},...,x_{N})$ and $q(x_{1},...,x_{N})$ be any
couple of algebraic functions of order $M$ and $R$, respectively, then their
sum and product are as well algebraic functions now of order at most $M+R$.
\end{lemma}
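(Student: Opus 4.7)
The plan is to exhibit, for each of $p+q$ and $pq$, an explicit nontrivial polynomial in $y$ whose coefficients are polynomials in $x_1,\dots,x_N$ and which vanishes at the corresponding function. I would use the companion-matrix picture introduced just before the statement: $p$ is an eigenvalue of a matrix $\mathsf{P}(x_1,\dots,x_N)$ of size $M$ whose characteristic polynomial is proportional to $P(y|x_1,\dots,x_N)$, and $q$ is an eigenvalue of a matrix $\mathsf{Q}(x_1,\dots,x_N)$ of size $R$ of the form \eqref{Cm-Fro}.

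The central construction is to realize $p+q$ and $pq$ as eigenvalues of single explicit operators. The Kronecker sum $\mathsf{S}=\mathsf{P}\otimes I_R+I_M\otimes\mathsf{Q}$ on $\mathbb{C}^M\otimes\mathbb{C}^R$ satisfies $\mathsf{S}(u\otimes v)=(p+q)(u\otimes v)$ whenever $\mathsf{P}u=pu$ and $\mathsf{Q}v=qv$, while the Kronecker product $\mathsf{T}=\mathsf{P}\otimes\mathsf{Q}$ satisfies $\mathsf{T}(u\otimes v)=pq\,(u\otimes v)$. Taking the characteristic polynomials of $\mathsf{S}$ and $\mathsf{T}$ and clearing denominators by multiplying through by the nonzero polynomial $a_M^{\,R}\,b_R^{\,M}$ produces relations of the form \eqref{Poly-M} vanishing at $y=p+q$ and $y=pq$ respectively; the leading $y$-coefficient equals $a_M^{\,R}\,b_R^{\,M}$, which by Lemma~\ref{Lemma1} applied at a generic point is not identically zero. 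This already proves algebraicity of both combinations.

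The delicate remaining step is to sharpen the order bound to the stated value. The direct Kronecker construction lives on a space of dimension $MR$, so it yields order at most $MR$, whereas the statement asks for $M+R$. The plan would be to cut down to a smaller invariant subspace: for each of $\mathsf{S}$ and $\mathsf{T}$, restrict to the smallest subspace of $\mathbb{C}^M\otimes\mathbb{C}^R$ containing $u\otimes v$ and stable under the relevant action, and then bound its dimension via a cyclic-subspace argument paired with Cayley--Hamilton applied separately to $\mathsf{P}\otimes I_R$ and $I_M\otimes\mathsf{Q}$. This is the step I expect to require the most care, since generically the minimal polynomials of $p+q$ and $pq$ can have degree as large as $MR$ (already visible for algebraic constants such as $p=\sqrt[3]{2}$, $q=\sqrt{2}$), so reaching $M+R$ will either require exploiting extra structure in $(P,Q)$ not spelled out in the statement, or else the bound is meant as $M\cdot R$ and is a typographic slip; this point deserves clarification before committing to a final write-up. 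In any case, Lemma~\ref{Lemma1} combined with the polynomial construction above guarantees that the polynomial obtained from any invariant subspace is nontrivial in $y$, so the order is well-defined and controlled by the dimension of the subspace selected.
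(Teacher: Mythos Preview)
Your approach is essentially identical to the paper's: the proof there also uses the Kronecker sum $\mathsf{A}_{p+q}=\mathsf{P}\otimes I_R+I_M\otimes\mathsf{Q}$ and the Kronecker product $\mathsf{A}_{p\ast q}=\mathsf{P}\otimes\mathsf{Q}$, then records that their spectra are $\{p_m+q_r\}$ and $\{p_m q_r\}$ respectively, and declares the statement a direct consequence.

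Your suspicion about the bound is well-founded. The paper's own argument produces $MR\times MR$ matrices, so its proof only yields order at most $M\cdot R$, not $M+R$; the paper makes no attempt to reduce the dimension and simply stops at the spectral description. Your counterexample with algebraic constants already shows that $M+R$ cannot hold in general, so the ``$M+R$'' in the statement is a typographic slip for $M\cdot R$ (or ``$MR$''). You should not try to prove the $M+R$ bound---it is false---and there is no hidden extra structure being invoked; the weaker $MR$ bound is all that is used downstream (in the subsequent corollary and in Appendix~B, only algebraicity of polynomial combinations of eigenvalues is needed, never a sharp degree estimate).
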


\begin{proof}
Let us denote with $P(y|x_{1},...,x_{N})$ a degree $M$ polynomial in $y$ of
the form \eqref{Poly-M} of which $p(x_{1},...,x_{N})$ is an algebraic
function root. Similarly, let us denote with $Q(y|x_{1},...,x_{N})$ a degree 
$R$ polynomial in $y$ of the form%
\begin{equation}
Q(y|x_{1},...,x_{N})=%
\sum_{r=0}^{R}y^{r}b_{r}(x_{1},...,x_{N})=b_{R}(x_{1},...,x_{N})%
\prod_{r=1}^{R}\left( y-q_{r}(x_{1},...,x_{N})\right) ,
\end{equation}%
where $b_{m}(x_{1},...,x_{N})$ are polynomials and $b_{R}(x_{1},...,x_{N})$
is not identically zero, of which $q(x_{1},...,x_{N})$ is an algebraic
function root. Then, let us denote with $\mathsf{C}_{P}(x_{1},...,x_{N})$
and $\mathsf{C}_{Q}(x_{1},...,x_{N})$ the $M\times M$ and $R\times R$
families of Frobenius companion matrices associated respectively to the
polynomials $P(y|x_{1},...,x_{N})$ and $Q(y|x_{1},...,x_{N})$ so that $%
p(x_{1},...,x_{N})$ and $q(x_{1},...,x_{N})$ are eigenvalues of the matrices 
$\mathsf{C}_{P}(x_{1},...,x_{N})$ and $\mathsf{C}_{Q}(x_{1},...,x_{N})$,
respectively. Let us introduce the matrices:%
\begin{align}
\mathsf{A}_{p+q}(x_{1},...,x_{N})& =V_{M}\mathsf{C}_P%
(x_{1},...,x_{N})V_{M}^{-1}\bigotimes I_{R}+I_{M}\bigotimes V_{R}\mathsf{C}_Q%
(x_{1},...,x_{N})V_{R}^{-1}, \\
\mathsf{A}_{p\ast q}(x_{1},...,x_{N})& =V_{M}\mathsf{C}_P%
(x_{1},...,x_{N})V_{M}^{-1}\bigotimes V_{R}\mathsf{C}_Q%
(x_{1},...,x_{N})V_{R}^{-1},
\end{align}%
where $V_{M}$ and $V_{R}$ are any invertible $M\times M$ and $R\times R$
square matrix, while we have denoted with $I_{M}$ and $I_{R}$ the identity $%
M\times M$ and $R\times R$ square matrices. Then, denoted with $\Sigma _{%
\mathsf{A}_{p+q}(x_{1},...,x_{N})}$ and $\Sigma _{\mathsf{A}_{p\ast
q}(x_{1},...,x_{N})}$ the set of the eigenvalues of $\mathsf{A}%
_{p+q}(x_{1},...,x_{N})$ and $\mathsf{A}_{p\ast q}(x_{1},...,x_{N})$, our
statement is a direct consequence of the following spectral properties:%
\begin{align}
\Sigma _{\mathsf{A}_{p+q}(x_{1},...,x_{N})}&
=\{f(x_{1},...,x_{N}):f(x_{1},...,x_{N})=p_{m}(x_{1},...,x_{N})+q_{r}(x_{1},...,x_{N})\},
\\
\Sigma _{\mathsf{A}_{p\ast q}(x_{1},...,x_{N})}&
=\{f(x_{1},...,x_{N}):f(x_{1},...,x_{N})=p_{m}(x_{1},...,x_{N})q_{r}(x_{1},...,x_{N})\}.
\end{align}
for all the $(r,m)\in \{1,...,M\}\times \{1,...,R\}$.
\end{proof}

The previous lemma implies the following

\begin{corollary}
Let us take a polynomial of the form%
\begin{equation}
T(y_{1},...,y_{S}|x_{1},...,x_{N})=\sum_{r_{1},...,r_{S}=0}^{R}%
\prod_{a=1}^{S}y_{a}^{r_{a}}\alpha _{r}(x_{1},...,x_{N}),
\end{equation}%
and $S$ algebraic functions $f_{m}(x_{1},...,x_{N})$, then%
\begin{equation}
t(x_{1},...,x_{N})\equiv
T(f_{1}(x_{1},...,x_{N}),...,f_{S}(x_{1},...,x_{N})|x_{1},...,x_{N})
\end{equation}%
is itself an algebraic function.
\end{corollary}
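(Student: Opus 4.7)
My plan is to obtain the corollary by a short induction on the complexity of the polynomial $T$, using only Lemma \ref{Lemma2} together with the obvious fact that any polynomial in $(x_{1},\ldots,x_{N})$ is itself an algebraic function (of order $M=1$, with defining polynomial $y-\alpha(x_{1},\ldots,x_{N})=0$). Thus each coefficient $\alpha_{r}(x_{1},\ldots,x_{N})$ of $T$ is algebraic, and so is each $f_{m}(x_{1},\ldots,x_{N})$ by hypothesis.

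The first step is to show that any finite product of algebraic functions is algebraic. This follows by induction on the number of factors, the base case being trivial and the inductive step being precisely the product statement of Lemma \ref{Lemma2}. In particular, each monomial $\prod_{a=1}^{S}f_{a}^{r_{a}}(x_{1},\ldots,x_{N})$ is algebraic. Multiplying by the polynomial (hence algebraic) coefficient $\alpha_{r}(x_{1},\ldots,x_{N})$ and using again the product statement of Lemma \ref{Lemma2}, I conclude that each term $\alpha_{r}(x_{1},\ldots,x_{N})\prod_{a=1}^{S}f_{a}^{r_{a}}(x_{1},\ldots,x_{N})$ is algebraic.

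The second step is to invoke the sum statement of Lemma \ref{Lemma2}, iterated over the finite index set $\{0,\ldots,R\}^{S}$, to conclude that the full sum
\begin{equation}
t(x_{1},\ldots,x_{N})=\sum_{r_{1},\ldots,r_{S}=0}^{R}\alpha_{r}(x_{1},\ldots,x_{N})\prod_{a=1}^{S}f_{a}^{r_{a}}(x_{1},\ldots,x_{N})
\end{equation}
is itself algebraic. Since both the outer sum and the inner products range over finite sets, only finitely many applications of Lemma \ref{Lemma2} are required, which also yields (if desired) an explicit a priori bound on the order of the resulting algebraic function in terms of $R$, $S$ and the orders of the $f_{m}$.

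I do not expect any genuine obstacle: Lemma \ref{Lemma2} already encapsulates the nontrivial content (via the tensor-product construction of companion matrices whose spectrum realizes sums and products of the input eigenvalues), and the corollary is just its closure under finite polynomial combinations. The only point deserving a brief mention is that the class of algebraic functions is closed under multiplication by polynomials, which follows at once by viewing a polynomial as an algebraic function of order one and applying the product part of Lemma \ref{Lemma2}.
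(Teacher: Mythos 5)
Your proof is correct and follows exactly the route the paper intends: the paper states this corollary without proof as an immediate consequence of Lemma \ref{Lemma2}, and your finite iteration of the sum and product statements (after observing that polynomials are algebraic of order one) is precisely the argument being left implicit.
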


\subsection{Proof of Theorem \protect\ref{main-Th-F-Eq}}

The proof of the Theorem \ref{main-Th-F-Eq} is now developed using as main
ingredient the properties of the algebraic functions.

\begin{proof}[Proof of Theorem \protect\ref{main-Th-F-Eq}]
Let us now come to the proof of the existence of the polynomial $\varphi
_{t}(\lambda )$ of maximal degree $N$ satisfying the TQ equation, once we
assume that $t_{1}(\lambda )$ is a transfer matrix eigenvalue. We have to
satisfy the system:%
\begin{equation}
t_{1}(\xi _{b})\,\varphi _{t}^{(i)}(\xi _{b})=\mathsf{k}_{i}a(\xi
_{b})\,\varphi _{t}^{(i)}(\xi _{b}-\eta )\qquad \forall \,b\in \{1,\ldots
,N\},\,i\in\{1,\ldots, n\}  \label{2Nconditions}
\end{equation}%
and the conditions%
\begin{equation}
\varphi _{t}^{(i)}(\xi _{b})\neq 0\text{ \ \ }\forall b\in \{1,...,N\}.
\end{equation}%
Moreover, saying that $\varphi _{t}^{(i)}(\lambda )$ is a polynomial of
maximal degree $N$ is equivalent to say that $\varphi _{t}^{(i)}(\lambda )$
can be written in the following form: 
\begin{equation}
\varphi _{t}^{(i)}(\lambda )=\sum_{a=1}^{{N}+1}\prod_{\substack{ b=1  \\ %
b\neq a}}^{{N}+1}\frac{\lambda -\xi _{b}}{\xi _{a}-\xi _{b}}\,\varphi
_{t}^{(i)}(\xi _{a}).  \label{Interpolation-Q}
\end{equation}%
In \eqref{Interpolation-Q}, $\xi _{N+1}$ is an arbitrary complex number,
different from $\xi _{1},\ldots ,\xi _{N}$, which can be chosen at our
convenience. Hence the system \eqref{2Nconditions} is equivalent to a
homogeneous linear system of $N$ equations for the $N+1$ unknowns $\varphi
_{t}^{(i)}(\xi _{1}),\ldots ,$ $\varphi _{t}^{(i)}(\xi _{N+1})$, which can
alternatively be thought of as an inhomogeneous linear system for the $N$
unknowns $\varphi _{t}^{(i)}(\xi _{1}),\ldots ,\varphi _{t}^{(i)}(\xi _{N})$
in terms of the $(N+1)$-th one $\varphi _{t}^{(i)}(\xi _{N+1})$: 
\begin{equation}
\sum_{b=1}^{N}[C_{i,\xi _{N+1}}^{(t)}]_{ab}\,\varphi _{t}^{(i)}(\xi
_{b})=-\prod_{\ell =1}^{N}\frac{\xi _{a}-\xi _{\ell }-\eta }{\xi _{N+1}-\xi
_{\ell }}\,\varphi _{t}^{(i)}(\xi _{\mathsf{N}+1}).  \label{system2}
\end{equation}%
The elements of the matrix $C_{1,\xi _{N+1}}^{(t)}$ of this linear system
are 
\begin{equation}
\lbrack C_{i,\xi _{N+1}}^{(t)}]_{rs}=-\delta _{rs}\,\frac{t_{1}(\xi _{r})}{%
\mathsf{k}_{i}a(\xi _{r})}+\prod_{\substack{ c=1  \\ c\neq s}}^{N+1}\frac{%
\xi _{r}-\xi _{c}-\eta }{\xi _{s}-\xi _{c}}\qquad \forall r,s\in \{1,\ldots
,N\}.  \label{mat-c}
\end{equation}%
If we can prove that $\rm{det}_{N}[C_{i,\xi _{N+1}}^{(t)}]$ is nonzero and
finite for any $\xi _{N+1}$, up to a finite number of values, then, for any
given choice of $\varphi _{t}^{(i)}(\xi _{{N}+1})\neq 0$, there exists one
and only one nontrivial solution $(\varphi _{t}^{(i)}(\xi _{1}),\ldots
,\varphi _{t}^{(i)}(\xi _{N}))$ of the system \eqref{system2}, which is
given by Cramer's rule:%
\begin{equation}
\varphi _{t}^{(i)}(\xi _{j})=\varphi _{t}^{(i)}(\xi _{N+1})\,\frac{\rm{det}%
_{N}[C_{i,\xi _{N+1}}^{(t|j)}]}{\rm{det}_{N}[C_{i,\xi _{N+1}}^{(t)}]}%
,\qquad \forall \,j\in \{1,\ldots ,{N}\},
\end{equation}%
with matrices $C_{i,\xi _{N+1}}^{(t|j)}$ defined as 
\begin{equation}
\lbrack C_{i,\xi _{N+1}}^{(t|j)}]_{rs}=(1-\delta _{s,j})[C_{i,\xi
_{N+1}}^{(t)}]_{rs}-\delta _{s,j}\prod_{\ell =1}^{{N}}\frac{\xi _{r}-\xi
_{\ell }-\eta }{\xi _{N+1}-\xi _{\ell }},  \label{mat-cj}
\end{equation}%
for all $r,s\in \{1,\ldots ,N\}$. Then our statement is a consequence of the
following proposition which ensures that all these determinants are nonzero
for almost any values of their parameters.
 
Now by using the rank one $N\times N$
matrix $\Delta _{\xi _{N+1}}(\lambda )$, defined in $\left( \ref{Rank-one}%
\right) $, the interpolation formula for $\varphi _{t}^{(i)}(\lambda )$
can be rewritten in a one determinant form:%
\begin{equation}
\varphi _{t}^{(i)}(\lambda )=\frac{{\rm{det}}_{N}[C_{i,\xi
_{N+1}}^{(t)}+\Delta _{\xi _{N+1}}(\lambda )]}{{\rm{det}}_{N}[C_{i,\xi
_{N+1}}^{(T_{1}^{\left( K\right) })}]}\prod_{c=1}^{N}\frac{\lambda -\xi _{c}%
}{\xi _{N+1}-\xi _{c}},
\end{equation}%
or equivalently:%
\begin{equation}
\varphi _{t}^{(i)}(\lambda )=\frac{{\rm{det}}_{N}[C_{i,\xi _{N+1}}^{(t)}+%
\bar{\Delta}_{\xi _{N+1}}(\lambda )]}{{\rm{det}}_{N}[C_{i,\xi
_{N+1}}^{(T_{1}^{\left( K\right) })}]}+\prod_{c=1}^{N}\frac{\lambda -\xi _{c}%
}{\xi _{N+1}-\xi _{c}}-1,
\end{equation}%
where we have defined:%
\begin{equation}
\lbrack \bar{\Delta}_{\xi _{N+1}}(\lambda )]_{ab}=-\prod_{c=1,c\neq b}^{{N}%
+1}\frac{\lambda -\xi _{c}}{\xi _{b}-\xi _{c}}\prod_{c=1}^{{N}}\frac{\xi
_{a}-\xi _{c}-\eta }{\xi _{N+1}-\xi _{c}}\qquad \forall a,b\in \{1,\ldots
,N\}.
\end{equation}
\end{proof}

Let us define the functions:
\begin{align}
f_{t }(\xi _{N+1},\{\xi _{j\leq N}\},\{\mathsf{k}_{i}\},\eta )&
=\prod_{a=1}^{N}\prod_{\substack{ b=1  \\ b\neq a}}^{N+1}(\xi _{a}-\xi _{b})%
{\rm{det}}_{N}[ C_{i,\xi _{N+1}}^{(t)}]_{rs}, \\
g_{t ,j}(\xi _{N+1},\{\xi _{j\leq N}\},\{\mathsf{k}_{i}\},\eta )&
=\prod_{a=1}^{N}\prod_{\substack{ b=1  \\ b\neq a}}^{N+1}(\xi _{a}-\xi _{b})%
{\rm{det}}_{N}[C_{i,\xi _{N+1}}^{(t|j)}]_{rs}],
\end{align}%
where the $\{\mathsf{k}_{1},...,\mathsf{k}_{n}\}$ are the eigenvalues of the
twist matrix $K$, then the following proposition holds:

\begin{proposition}
Let $t_{l}(\lambda )$ be the generic transfer matrix eigenvalue, with $%
l\in \{1,\ldots ,n^{{N}}\}$, then $f_{t_{l}}(\xi _{N+1},\{\xi _{j\leq
N}\},\{\mathsf{k}_{i}\})$ and $g_{t_{l},j}(\xi _{N+1},\{\xi _{j\leq
N}\},\{\mathsf{k}_{i}\})$, for any $\,j\in \{1,\ldots ,${$N$}$\}$, are
polynomial in $\xi _{N+1}$ which for any value of $\xi _{N+1}$, up to a
finite number of values, are nonzero for almost any values of $\{\xi
_{1},\ldots ,\xi _{N}\},$ $\{\mathsf{k}_{i}\}$ and $\eta $.
\end{proposition}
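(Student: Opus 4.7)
The plan is to split the proof into (i) polynomiality in $\xi_{N+1}$ and (ii) non-vanishing as algebraic functions of all parameters $(\xi_{N+1},\{\xi_j\},\{\mathsf{k}_i\},\eta)$, from which the stated generic non-vanishing follows immediately. For (i), observe that $\xi_{N+1}$ enters the entries of $C_{i,\xi_{N+1}}^{(t)}$ and $C_{i,\xi_{N+1}}^{(t|j)}$ only through the Lagrange-type products $\prod_{c\neq s}^{N+1}(\xi_r-\xi_c-\eta)/(\xi_s-\xi_c)$ in column $s$, whose only poles are at $\xi_{N+1}=\xi_s$, coming from the single linear factor $(\xi_s-\xi_{N+1})^{-1}$. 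The prefactor $\prod_{a=1}^N\prod_{b\neq a}^{N+1}(\xi_a-\xi_b)$ in the definition of $f_{t_l}$ and $g_{t_l,j}$ contains exactly the linear factors $\prod_{a=1}^N(\xi_a-\xi_{N+1})$ needed to clear these poles column by column; the resulting objects are therefore polynomials of degree at most $N$ in $\xi_{N+1}$.

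For (ii), the transfer matrix eigenvalues $t_l(\xi_r)$ are roots of the characteristic polynomial of $T_1^{(K)}(\xi_r)$, whose coefficients are polynomials in $(\{\xi_j\},\{\mathsf{k}_i\},\eta)$, hence each $t_l(\xi_r)$ is an algebraic function of these parameters. By Lemma~\ref{Lemma2} of Appendix~B and its Corollary, $f_{t_l}$ and $g_{t_l,j}$ are themselves algebraic functions of $(\xi_{N+1},\{\xi_j\},\{\mathsf{k}_i\},\eta)$, so by Lemma~\ref{Lemma1} it suffices to exhibit a single point in parameter space at which they are non-zero. Moreover, the $n^N$ eigenvalue branches $t_l$ are Galois-conjugate, being roots of a common characteristic polynomial with coefficients polynomial in the parameters; since $f_t$ and $g_{t,j}$ are polynomial expressions in $t(\xi_r)$, the Galois action carries $f_{t_l}$ to $f_{t_{l'}}$, so non-vanishing for one eigenvalue label automatically propagates to every other.

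To exhibit the required non-vanishing point I would examine the leading coefficient of $f_{t_l}$ as a polynomial in $\xi_{N+1}$. In the limit $\xi_{N+1}\to\infty$ the ratios $(\xi_r-\xi_{N+1}-\eta)/(\xi_s-\xi_{N+1})$ tend to $1$, and the matrix entries converge to $L^{(\eta)}_{rs}-\delta_{rs}\,t_l(\xi_r)/[\mathsf{k}_i a(\xi_r)]$, where $L^{(\eta)}_{rs}=\prod_{c=1,c\neq s}^N(\xi_r-\xi_c-\eta)/(\xi_s-\xi_c)$ is the change-of-basis matrix between the Lagrange bases at nodes $\{\xi_s\}$ and $\{\xi_r-\eta\}$ on polynomials of degree $\le N-1$. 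A Vandermonde computation $\det L^{(\eta)}=\prod_{r<s}((\xi_s-\eta)-(\xi_r-\eta))/\prod_{r<s}(\xi_s-\xi_r)=1$ gives $\det L^{(\eta)}=1$. Taking $t_l$ to be the reference eigenvalue $t_{1,0}$ of the lemma in Section~5, for which $t_{1,0}(\xi_r)=\mathsf{k}_1\,a(\xi_r)$, and an index $i$ with $\mathsf{k}_i\neq\mathsf{k}_1$, the diagonal piece becomes $(\mathsf{k}_1/\mathsf{k}_i)I$, so the leading coefficient of $f_{t_{1,0}}$ reduces, up to non-vanishing Vandermonde factors in the $\xi$'s, to the characteristic polynomial of $L^{(\eta)}$ evaluated at $\mathsf{k}_1/\mathsf{k}_i$, which is non-zero on a Zariski-dense open subset of parameter space.

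The main obstacle will be the passage from the explicit reference eigenvalue to a generic eigenvalue and the analogous analysis for $g_{t_l,j}$. For the eigenvalue issue, the Galois-equivariance argument above is the cleanest route, but one can alternatively reduce to a single non-triviality statement by working with the symmetric combination $\prod_l f_{t_l}$, which is a polynomial in the parameters (a resultant of the characteristic polynomial with the determinantal expression) whose non-triviality at the explicit point above entails non-vanishing for every eigenvalue label simultaneously. For $g_{t_l,j}$ the same $\xi_{N+1}\to\infty$ asymptotic yields a determinant obtained from the $f_{t_l}$ one by replacing column $j$ with the explicit vector of right-hand sides in the definition of $C_{i,\xi_{N+1}}^{(t|j)}$, and the Lagrange/Vandermonde structure is preserved so that a parallel argument produces a non-vanishing leading coefficient.
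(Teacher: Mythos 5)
Your reduction is the same as the paper's up to the last step: polynomiality in $\xi_{N+1}$ from cancellation of the $(\xi_a-\xi_{N+1})$ poles by the prefactor, algebraicity of $f_{t_l}$ and $g_{t_l,j}$ via Lemma~\ref{Lemma2} and its corollary, and then Lemma~\ref{Lemma1} to reduce everything to exhibiting a single parameter point where the functions do not vanish. Your $\xi_{N+1}\to\infty$ computation for the reference eigenvalue $t_{1,0}$ is also correct and rather elegant (the limiting matrix is the shift operator $p(\cdot)\mapsto p(\cdot-\eta)$ in the Lagrange basis, which is unipotent, so the leading coefficient is proportional to $(1-\mathsf{k}_1/\mathsf{k}_i)^{N}\neq 0$). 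The gap is in promoting this single-branch verification to the \emph{generic} eigenvalue, which is the actual content of the proposition. The Galois-transitivity argument fails for a concrete reason visible in the paper itself: the reference eigenvalue $t_{1,0}(\lambda)=\mathsf{k}_1\prod_a(\lambda-\xi_a+\eta)+(\operatorname{tr}K-\mathsf{k}_1)\prod_a(\lambda-\xi_a)$ is a \emph{polynomial} in the parameters $\{\xi_j\},\{\mathsf{k}_i\},\eta$, hence lies in the base field; the characteristic polynomial of the transfer matrix therefore has $(y-t_{1,0})$ as a rational linear factor and is reducible, so the Galois group does not act transitively on the branches, and $\{t_{1,0}\}$ is its own orbit. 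Non-vanishing of $f_{t_{1,0}}$ thus propagates to nothing. The fallback via $\prod_l f_{t_l}$ is circular: to show that product is nonzero at your test point you must show every factor $f_{t_l}$ is nonzero there, which is precisely what is missing for $l\neq$ the reference label.

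The paper closes this gap by a different degeneration: it sends the twist to $\mathsf{k}_i=\delta_{1,i}$, where the transfer matrix becomes similar to the diagonal monodromy entry $A_1(\lambda)$ and \emph{every} eigenvalue is explicitly $\prod_{a\le R}(\lambda-\xi_{\pi(a)})\prod_{a>R}(\lambda-\xi_{\pi(a)}+\eta)$ for some $R$ and permutation $\pi$; it then tunes the inhomogeneities to $\tilde\xi_{\pi(a+1)}=\xi_{N+1}-(N-a)\eta$ and performs an $\epsilon$-expansion that exhibits a block structure whose determinant reduces to an explicitly nonzero Cauchy determinant, uniformly over all branches (indeed over all tuples $\{l_1,\dots,l_N\}$ entering the symmetrized polynomial $\hat P_f$). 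To repair your proof you would need either a genuine transitivity/irreducibility statement restricted to the non-rational part of the spectrum together with a separate treatment of each rational branch, or, more realistically, an explicit evaluation of $f_{t_l}$ at a degenerate point where all $n^{N}$ eigenvalues are known in closed form --- which is exactly the route the paper takes.
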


\begin{proof}
First of all we have to remark that the transfer matrix of the fundamental
representation of $Y(gl_n)$\ associated to a twist matrix with eigenvalues $%
\{\mathsf{k}_{i\leq n}\}$ is a polynomial of degree $N$ in $\lambda $ and $%
\eta $, is a polynomial of degree 1 in the $\{\mathsf{k}_{i\leq n}\}$ and in
the inhomogeneities $\{\xi _{1},\ldots ,\xi _{N}\}$. Then, the transfer
matrix eigenvalues $t (\lambda |\{\xi _{1},\ldots ,\xi _{N}\},\{\mathsf{k}%
_{i}\},\eta )$ are degree $N$ polynomials in $\lambda $ and, for any fixed
value of $\lambda $, they are algebraic functions of maximal order $n^{N}$
in the variables $\{\xi _{1},\ldots ,\xi _{N}\}$, $\{\mathsf{k}_{i}\}$ and $%
\eta $. Indeed, they are the zeros of the characteristic polynomial of the
one parameter family of commuting transfer matrices of the following form:%
\begin{align}
P_{T}(y|\lambda ,\{\xi _{1},\ldots ,\xi _{N}\},\{\mathsf{k}_{i}\},\eta )&
=\sum_{b=0}^{n^{N}}y^{b}a_{b}(\lambda ,\{\xi _{1},\ldots ,\xi _{N}\},\{%
\mathsf{k}_{i}\},\eta ) \\
& =a_{n^{N}}(\lambda ,\{\xi _{1},\ldots ,\xi _{N}\},\{\mathsf{k}_{i}\},\eta
)\prod_{b=1}^{n^{N}}(y-t_{b}(\lambda |\{\xi _{1},\ldots ,\xi _{N}\},\{%
\mathsf{k}_{i}\},\eta )),
\end{align}%
where the $a_{b}(\lambda ,\{\xi _{j\leq N}\},\{\mathsf{k}_{i}\},\eta )$ are
polynomials in their variables. From the previous corollary it follows that
the functions $f_{t_{l}}(\xi _{N+1},\{\xi _{j\leq N}\},\{\mathsf{k}%
_{i}\},\eta )$ and $g_{t_{l},j}(\xi _{N+1},\{\xi _{j\leq N}\},\{\mathsf{k%
}_{i}\},\eta )$, for any $j\in \{1,\ldots ,${$N$}$\}$, are polynomials in
the $\xi _{N+1}$ and algebraic functions of their arguments $\{\xi _{j\leq
N}\},$ $\{\mathsf{k}_{i}\}$ and $\eta $, for any choice of the transfer
matrix eigenvalue $t_{l}(\lambda |\{\xi _{1},\ldots ,\xi _{N}\},\{%
\mathsf{k}_{i}\},\eta )$, i.e. for any $l\in \{1,\ldots ,n^{{N}}\}$.

Let us introduce the new functions:%
\begin{align}
f_{y_{1},....,y_{N}}(\xi _{N+1},\eta )& =\prod_{a=1}^{N}\prod_{\substack{ %
b=1  \\ b\neq a}}^{N+1}(\xi _{a}-\xi _{b}){\rm{det}}%
_{N}[c_{y_{1},....,y_{N}}(\xi _{N+1})], \\
g_{y_{1},....,y_{N},j}(\xi _{N+1},\eta )& =\prod_{a=1}^{N}\prod_{\substack{ %
b=1  \\ b\neq a}}^{N+1}(\xi _{a}-\xi _{b}){\rm{det}}%
_{N}[c_{y_{1},....,y_{N}}^{\left( j\right) }(\xi _{N+1})],
\end{align}%
where we have defined:%
\begin{align}
\lbrack c_{y_{1},....,y_{N}}(\xi _{N+1})]_{ab}& =-\delta _{ab}\,\frac{y_{a}}{%
\mathsf{k}_{1}a(\xi _{a})}+\prod_{\substack{ c=1  \\ c\neq b}}^{N+1}\frac{%
\xi _{a}-\xi _{c}-\eta }{\xi _{b}-\xi _{c}}\qquad \forall a,b\in \{1,\ldots
,N\}, \\
\lbrack c_{y_{1},....,y_{N}}^{(j)}(\xi _{N+1})]_{ab}& =(1-\delta
_{b,j})[c_{y_{1},....,y_{N}}(\xi _{N+1})]_{ab}-\delta _{b,j}\prod_{\ell =1}^{%
{N}}\frac{\xi _{a}-\xi _{\ell }-\eta }{\xi _{N+1}-\xi _{\ell }}.
\end{align}%
Clearly, it holds:%
\begin{align}
f_{t_{l}}(\xi _{N+1},\{\xi _{j\leq N}\},\{\mathsf{k}_{i}\},\eta )&
=f_{t_{l}(\xi _{1}),....,t_{l}(\xi _{N})}(\xi _{N+1},\eta ), \\
g_{t_{l},j}(\xi _{N+1},\{\xi _{j\leq N}\},\{\mathsf{k}_{i}\},\eta )&
=g_{t_{l}(\xi _{1}),....,t_{l}(\xi _{N}),j}(\xi _{N+1},\eta ),
\end{align}%
moreover, it is clear that the function $f_{t_{l}}(\xi _{N+1},\{\xi
_{j\leq N}\},\{\mathsf{k}_{i}\},\eta )$ and  $g_{t_{l},j}(\xi
_{N+1},\{\xi _{j\leq N}\},\{\mathsf{k}_{i}\},\eta )$, for any $j\in
\{1,\ldots ,${$N$}$\}$, are algebraic functions of maximal order $n^{N^{2}}$
in their arguments $\{\xi _{j\leq N}\},$ $\{\mathsf{k}_{i}\}$ and $\eta $.
Indeed, defined:%
\begin{align}
\hat{P}_{f}(y|\xi _{N+1},\{\xi _{j\leq N}\},\{\mathsf{k}_{i}\},\eta )&
=\prod_{l_{1}=1}^{n^{{N}}}\cdots \prod_{l_{N}=1}^{n^{{N}}}(y-f_{t
_{l_{1}}(\xi _{1}),....,t_{l_{\mathsf{N}}}(\xi _{N})}(\xi _{N+1},\eta )),
\\
\hat{P}_{g,j}(y|\xi _{N+1},\{\xi _{j\leq N}\},\{\mathsf{k}_{i}\},\eta )&
=\prod_{l_{1}=1}^{n^{{N}}}\cdots \prod_{l_{N}=1}^{n^{{N}}}(y-g_{t
_{l_{1}}(\xi _{1}),....,t_{l_{\mathsf{N}}}(\xi _{N}),j}(\xi _{N+1},\eta
)),
\end{align}%
for any fixed $\xi _{N+1}$ these are single value algebraic functions of
their arguments $\{\xi _{j\leq N}\},\{\mathsf{k}_{i}\}$ and $\eta $ so that
there exist nonzero polynomials:%
\begin{equation}
a_{n^{2{N}}}(\xi _{N+1},\{\xi _{j\leq N}\},\{\mathsf{k}_{i}\},\eta )\text{
and }a_{n^{2{N}}}^{\left( j\right) }(\xi _{N+1},\{\xi _{j\leq N}\},\{\mathsf{%
k}_{i}\},\eta ),
\end{equation}%
such that defined%
\begin{align}
P_{f}(y|\xi _{N+1},\{\xi _{j\leq N}\},\{\mathsf{k}_{i}\},\eta )& =a_{n^{2{N}%
}}(\xi _{N+1},\{\xi _{j\leq N}\},\{\mathsf{k}_{i}\},\eta )\hat{P}_{f}(y|\xi
_{N+1},\{\xi _{j\leq N}\},\{\mathsf{k}_{i}\},\eta ), \\
P_{g,j}(y|\xi _{N+1},\{\xi _{j\leq N}\},\{\mathsf{k}_{i}\},\eta )& =a_{n^{2{N%
}}}^{\left( j\right) }(\xi _{N+1},\{\xi _{j\leq N}\},\{\mathsf{k}_{i}\},\eta
)\hat{P}_{g,j}(y|\xi _{N+1},\{\xi _{j\leq N}\},\{\mathsf{k}_{i}\},\eta ),
\end{align}%
they are polynomials in all their variables and the $f_{t_{l_{1}}(\xi
_{1}),....,t_{l_{N}}(\xi _{N})}(\xi _{N+1},\eta )$ and $g_{t
_{l_{1}}(\xi _{1}),....,t_{l_{N}}(\xi _{N}),j}$ $(\xi _{N+1},\eta )$ are
the respective associated algebraic roots. Let us remark now that the
transfer matrix of the fundamental representation of $Y(gl_n)$\ associated
to a twist matrix satisfying the eigenvalues conditions $\mathsf{k}%
_{i}=\delta _{1,i}$ is similar to the diagonal entry $A_{1}(\lambda )$ of
the monodromy matrix. The spectrum of which is known and has the following
form%
\begin{equation}
\prod_{a=1}^{R}(\lambda -\xi _{\pi (a)})\prod_{a=1+R}^{N}(\lambda -\xi _{\pi
(a)}+\eta ),
\end{equation}%
over all the values of $R\leq N$ and $\pi \in S_{N}$ (permutations of $%
\{1,\ldots ,${$N$}$\}$). So that for any fixed $t_{l\leq n^{{N}%
}}(\lambda |\{\mathsf{k}_{i}\})$, eigenvalue of the transfer matrix, there
exist a $R_{l}\leq N$ and a $\pi _{l}\in S_{N}$ such that:%
\begin{equation}
\lim_{\mathsf{k}_{i}\rightarrow \delta _{1,i}}t_{l}(\lambda |\{\mathsf{k}%
_{i}\})=\prod_{a=1}^{R_{l}}(\lambda -\xi _{\pi
_{l}(a)})\prod_{a=1+R_{l}}^{N}(\lambda -\xi _{\pi _{l}(a)}+\eta ).
\label{E-spectrum}
\end{equation}
It is then simple to observe that%
\begin{eqnarray}
&&\left. f_{t_{l_{1}}(\xi _{1}),....,t_{l_{N}}(\xi _{N})}(\xi
_{N+1},\eta )\right\vert _{\mathsf{k}_{i}=\delta _{1,i}}\prod_{l=1}^{N}a(\xi
_{l}),  \label{Poly-f} \\
&&\left. g_{t_{l_{1}}(\xi _{1}),....,t_{l_{N}}(\xi _{N}),j}(\xi
_{N+1},\eta )\right\vert _{\mathsf{k}_{i}=\delta _{1,i}}\prod_{l=1}^{N}a(\xi
_{l}),  \label{Poly-gj}
\end{eqnarray}%
are polynomial of their parameters: $\xi _{N+1},$ $\{\xi _{j\leq N}\}$ and $%
\eta $. So that if we prove that they are nonzero in a point they are so
almost everywhere. From the equation (\ref{E-spectrum}) it follows that for
any $\{l_{1},...,l_{N}\}\in \{1,...,n^{N}\}^{\otimes N}$ there exist a $%
R_{\{l_{j}\}}\leq N$ and a $\pi _{\{l_{j}\}}\in S_{N}$ such that:%
\begin{eqnarray}
t_{l_{\pi _{\{l_{j}\}}(i)}}(\xi _{\pi _{\{l_{j}\}}(i)}|\{\xi _{j\leq
N}\},\{\mathsf{k}_{i} &=&\delta _{1,i}\},\eta )=0\text{\ \ if }i\leq
R_{\{l_{j}\}} \\
t_{l_{\pi _{\{l_{j}\}}(i)}}(\xi _{\pi _{\{l_{j}\}}(i)}|\{\xi _{j\leq
N}\},\{\mathsf{k}_{i} &=&\delta _{1,i}\},\eta )\neq 0\text{\ \ if }%
R_{\{l_{j}\}}+1\leq i.
\end{eqnarray}%
Then to compute%
\begin{equation}
\left. f_{t_{l_{1}}(\xi _{1}),....,t_{l_{N}}(\xi _{N})}(\xi
_{N+1},\eta )\right\vert _{\mathsf{k}_{i}=\delta _{1,i}},
\end{equation}%
we define:%
\begin{equation}
\tilde{\xi}_{\pi (a+1)}^{(\pi )}=\xi _{N+1}-(N-a)\eta ,\qquad \forall a\in
\{R_{\{l_{j}\}},\ldots ,N-1\},
\end{equation}%
while the $\tilde{\xi}_{\pi (i)}^{(\pi )}=\xi _{\pi (i)}^{(\pi )}$ for $i\in
\{1,\ldots ,R_{\{l_{j}\}}\}$ are kept free, where we have omitted the $\{l_{j}\}$
dependence of the permutation $\pi $ to simplify the notation. Let us remark
that it holds:%
\begin{equation}
a(\tilde{\xi}_{\pi (h)}^{(\pi )}|\{\tilde{\xi}_{j\leq N}^{(\pi )}\},\eta )=0,%
\text{ \ }\forall h\in \{R_{\{l_{j}\}}+1,\ldots ,N-1\}\text{ \ and }a(\tilde{%
\xi}_{\pi (N)}^{(\pi )}|\{\tilde{\xi}_{j\leq N}^{(\pi )}\},\eta )\neq 0,
\end{equation}%
then defined:%
\begin{equation}
\xi _{\pi (a)}^{(\pi )}(\epsilon )=\tilde{\xi}_{\pi (a)}^{(\pi )}+a\epsilon
,\qquad \forall a\in \{1,\ldots ,N\},
\end{equation}%
when $\epsilon $ approaches zero it holds:%
\begin{equation}
\,-\,\frac{t_{l_{\pi (h)}}(\xi _{\pi (h)}^{(\pi )}(\epsilon )|\{\xi
_{j\leq N}^{(\pi )}(\epsilon )\},\{\mathsf{k}_{i}=\delta _{i,1}\},\eta )}{%
a(\xi _{\pi (h)}^{(\pi )}(\epsilon )|\{\xi _{j\leq \mathsf{N}}^{(\pi
)}(\epsilon )\},\eta )}=\epsilon ^{-1}\hat{t}_{h}^{(\pi |\{l_{i}\})}(\xi
_{N+1},\eta )+\tilde{t}_{h}^{(\pi |\{l_{i}\})}(\xi _{N+1},\eta )+O(\epsilon
),
\end{equation}%
where 
\begin{equation}
(\hat{t}_{h}^{(\pi |\{l_{i}\})}(\xi _{N+1},\eta ),\tilde{t}_{h}^{(\pi
|\{l_{i}\})}(\xi _{N+1},\eta ))\neq (0,0)\text{ \ }h\in
\{R_{\{l_{j}\}}+1,\ldots ,N\}\text{, and }\hat{t}_{N}^{(\pi |\{l_{i}\})}(\xi
_{N+1},\eta )=0.
\end{equation}%
So defining:%
\begin{equation}
t_{h}^{(\pi |\{l_{i}\})}(\xi _{N+1},\eta )=\left\{ 
\begin{array}{l}
\hat{t}_{h}^{(\pi |\{l_{i}\})}(\xi _{N+1},\eta )\text{ if this is nonzero}
\\ 
\tilde{t}_{h}^{(\pi |\{l_{i}\})}(\xi _{N+1},\eta )\text{ otherwise}%
\end{array}%
\right. ,\text{ \ }\forall h\in \{R_{\{l_{j}\}}+1,\ldots ,N\},
\end{equation}%
it holds:%
\begin{equation}
t_{h}^{(\pi |\{l_{i}\})}(\xi _{N+1},\eta )\neq 0,\text{ \ }\forall h\in
\{R_{\{l_{j}\}}+1,\ldots ,N\}.
\end{equation}%
We can now observe that the following block structure emerges:%
\begin{align}
\lbrack c_{t_{l_{1}}(\xi _{1}^{(\pi )}(\epsilon )),....,t
_{l_{N}}(\xi _{N}^{(\pi )}(\epsilon ))}(\xi _{N+1})]_{\pi (a),\pi (b)}&
=x_{a,b}^{(\pi )}(\epsilon ),\text{ \ }\forall (a,b)\in \{1,\ldots
,R_{\{l_{j}\}}\}\times \{1,\ldots ,R_{\{l_{j}\}}\}, \\
\lbrack c_{t_{l_{1}}(\xi _{1}^{(\pi )}(\epsilon )),....,t
_{l_{N}}(\xi _{N}^{(\pi )}(\epsilon ))}(\xi _{N+1})]_{\pi (a),\pi (b)}&
=\delta _{a+1,b}x_{a,a+1}^{(\pi )}(\epsilon )+\epsilon (1-\delta
_{a+1,b})y_{a,b}^{(\pi )}(\epsilon ),  \notag \\
& \forall (a,b)\in \{1,\ldots ,R_{\{l_{j}\}}\}\times
\{R_{\{l_{j}\}}+1,\ldots ,N\}, \\
\lbrack c_{t_{l_{1}}(\xi _{1}^{(\pi )}(\epsilon )),....,t
_{l_{N}}(\xi _{N}^{(\pi )}(\epsilon ))}(\xi _{N+1})]_{\pi (a),\pi (b)}&
=\epsilon y_{a,b}^{(\pi )}(\epsilon ),\text{ \ }\forall (a,b)\in
\{R_{\{l_{j}\}}+1,\ldots ,N\}\times \{1,\ldots ,R_{\{l_{j}\}}\}, \\
\lbrack c_{t_{l_{1}}(\xi _{1}^{(\pi )}(\epsilon )),....,t
_{l_{N}}(\xi _{N}^{(\pi )}(\epsilon ))}(\xi _{N+1})]_{\pi (a),\pi (b)}&
=(\epsilon ^{-1}\hat{t}_{h}^{(\pi |\{l_{i}\})}(\xi _{N+1},\eta )+\tilde{t}%
_{h}^{(\pi |\{l_{i}\})}(\xi _{N+1},\eta )+O(\epsilon ))  \notag \\
& +\delta _{a+1,b}x_{a,a+1}^{(\pi )}(\epsilon )+\epsilon (1-\delta
_{a+1,b})y_{a,b}^{(\pi )}(\epsilon )  \notag \\
& \forall (a,b)\left. \in \right. \{R_{\{l_{j}\}}+1,\ldots ,N\}\times
\{R_{\{l_{j}\}}+1,\ldots ,N\},
\end{align}%
where it holds%
\begin{equation}
x_{a,b}^{(\pi )}(0)=\prod_{\substack{ c=1  \\ c\neq b}}^{N+1}\frac{\tilde{\xi%
}_{\pi (a)}^{(\pi )}-\tilde{\xi}_{\pi (c)}^{(\pi )}-\eta }{\tilde{\xi}_{\pi
(b)}^{(\pi )}-\tilde{\xi}_{\pi (c)}^{(\pi )}},\text{ \ }y_{a,b}^{(\pi
)}(0)=\prod_{\substack{ c=1  \\ c\neq b,a+1}}^{N+1}\frac{\tilde{\xi}_{\pi
(a)}^{(\pi )}-\tilde{\xi}_{\pi (c)}^{(\pi )}-\eta }{\tilde{\xi}_{\pi
(b)}^{(\pi )}-\tilde{\xi}_{\pi (c)}^{(\pi )}},
\end{equation}%
and clearly%
\begin{equation}
x_{a,a+1}^{(\pi )}(0)\neq 0\text{ \ and \ }y_{a,b}^{(\pi )}(0)\neq 0\text{ \ 
}\forall a,b\in \{1,\ldots ,N\}.
\end{equation}%
Let us denote with $S_{\{l_{i}\}}$ the nonnegative integer defining the
total number of $\hat{t}_{h}^{(\pi |\{l_{i}\})}(\xi _{N+1},\eta )\neq 0$,
then the next limit follows:%
\begin{align}
\lim_{\epsilon \rightarrow 0}\epsilon ^{S_{\{l_{i}\}}}\left. f_{t
_{l_{1}}(\xi _{1}^{(\pi )}(\epsilon )),....,t_{l_{N}}(\xi _{N}^{(\pi
)}(\epsilon ))}(\xi _{N+1},\eta )\right\vert _{\mathsf{k}_{i}=\delta
_{1,i}}& =\prod_{a=1}^{N}\prod_{\substack{ b=1  \\ b\neq a}}^{N+1}(\tilde{\xi%
}_{\pi (a)}^{(\pi )}-\tilde{\xi}_{\pi (b)}^{(\pi
)})\prod_{h=R_{\{l_{j}\}}+1}^{N}t_{h}^{(\pi |\{l_{i}\})}(\xi _{N+1},\eta ) 
\notag \\
& \times {\rm{det}}_{R_{\{l_{j}\}}}\left[ [c_{t_{l_{1}}(\tilde{\xi}%
_{1}^{(\pi )}),....,t_{l_{N}}(\tilde{\xi}_{N}^{(\pi )})}(\xi
_{N+1})]_{\pi (a),\pi (b)}\right] \left. \neq \right. 0
\end{align}%
being 
\begin{align}
{\rm{det}}_{R_{\{l_{j}\}}}\left[ [c_{t_{l_{1}}(\tilde{\xi}_{1}^{(\pi
)}),....,t_{l_{N}}(\tilde{\xi}_{N}^{(\pi )})}(\xi _{N+1})]_{\pi (a),\pi
(b)}\right] & =\prod_{a=1}^{R_{\{l_{j}\}}}\frac{\prod_{b=1}^{N+1}(\tilde{\xi}%
_{\pi (a)}^{(\pi )}-\tilde{\xi}_{\pi (b)}^{(\pi )}-\eta )}{\prod_{\substack{ %
b=1  \\ b\neq a}}^{N+1}(\tilde{\xi}_{\pi (a)}^{(\pi )}-\tilde{\xi}_{\pi
(b)}^{(\pi )})} \notag \\
& \times {\rm{det}}_{R_{\{l_{j}\}}}\left[ \frac{1}{\tilde{\xi}_{\pi
(a)}^{(\pi )}-\tilde{\xi}_{\pi (b)}^{(\pi )}-\eta }\right] \\
& =\prod_{a=1}^{R_{\{l_{j}\}}}\prod_{b=R_{\{l_{j}\}}+1}^{N+1}\frac{\tilde{\xi%
}_{\pi (a)}^{(\pi )}-\tilde{\xi}_{\pi (b)}^{(\pi )}-\eta }{\tilde{\xi}_{\pi
(a)}^{(\pi )}-\tilde{\xi}_{\pi (b)}^{(\pi )}}\left. \neq \right. 0.
\end{align}%
This result implies that $\left. f_{t_{l_{1}}(\xi _{1}^{(\pi )}(\epsilon
)),....,t_{l_{N}}(\xi _{N}^{(\pi )}(\epsilon ))}(\xi _{N+1},\eta
)\right\vert _{\mathsf{k}_{i}=\delta _{1,i}}$ is a nonzero Laurent
polynomial of $\epsilon $, so that the polynomial (\ref{Poly-f}) is not
identically zero and so:%
\begin{equation}
\left. f_{t_{l_{1}}(\xi _{1}),....,t_{l_{N}}(\xi _{\mathsf{N}})}(\xi
_{N+1},\eta )\right\vert _{\mathsf{k}_{i}=\delta _{1,i}}\neq 0
\end{equation}%
holds almost for any choice of the parameters $\xi _{N+1},$ $\{\xi _{j\leq
N}\}$ and $\eta $. Being this results true for any $\{l_{1},...,l_{N}\}\in
\{1,...,n^{N}\}^{\otimes N} $ the Lemma \ref{Lemma1} indeed implies that%
\begin{equation}
f_{t_{l_{1}}(\xi _{1}),....,t_{l_{N}}(\xi _{N})}(\xi _{N+1},\eta
)\neq 0
\end{equation}%
holds also for almost any choice of all the parameters $\xi _{N+1},$ $\{\xi
_{j\leq N}\},$ $\{\mathsf{k}_{i}\}$ and $\eta $ which completes our proof.

Let us compute now similarly the function%
\begin{equation}
\left. g_{t_{l_{1}}(\xi _{1}),....,t_{l_{N}}(\xi _{\mathsf{N}%
}),j}(\xi _{N+1},\eta )\right\vert _{\mathsf{k}_{i}=\delta _{1,i}}.
\end{equation}%
In order to do so let us remark that defined:%
\begin{equation}
\gamma (j)=N+1,\text{ \ }\gamma (a)=a\qquad \forall a\in \{1,\ldots
,N\}\backslash \{j\}
\end{equation}%
we have the identity:%
\begin{equation}
\rm{det}_{N}\left( [c_{y_{1},...,y_{j},...,y_{N}}^{(j)}(\xi
_{N+1})]_{ab}\right) =-\rm{det}_{N}\left( [\bar{c}_{\bar{y}_{1},...,\bar{y}%
_{j},...,\bar{y}_{N}}^{(j)}(\xi _{N+1})]_{ab}\right) ,
\end{equation}%
where we have defined:%
\begin{equation}
\bar{y}_{h}=y_{h}(1-\delta _{h,j}),
\end{equation}%
and%
\begin{equation}
\lbrack \bar{c}_{y_{1},....,y_{N}}^{(j)}(\xi _{N+1})]_{ab}=-\delta _{ab}\,%
\frac{\bar{y}_{a}}{\mathsf{k}_{1}a(\xi _{a})}+\prod _{\substack{ c=1  \\ %
c\neq \gamma (b)}}^{N+1}\frac{\xi _{a}-\xi _{c}-\eta }{\xi _{\gamma (b)}-\xi
_{c}}\qquad \forall a,b\in \{1,\ldots ,N\}.
\end{equation}%
So that we can compute $\left. g_{t_{l_{1}}(\xi _{1}),....,t
_{l_{N}}(\xi _{N}),j}(\xi _{N+1},\eta )\right\vert _{\mathsf{k}_{i}=\delta
_{1,i}}$ exactly along the same lines we have computed $\left. f_{t
_{l_{1}}(\xi _{1}),....,t_{l_{N}}(\xi _{\mathsf{N}})}(\xi _{N+1},\eta
)\right\vert _{\mathsf{k}_{i}=\delta _{1,i}}$ at least around some special
point that we are going to define. From the equation (\ref{E-spectrum})
it follows that for any $\{l_{1},...,\widehat{l_{j}},...,l_{N}\}\in
\{1,...,n^{N}\}^{\otimes (N-1)}$ there exist a $R_{\{l_{j}\}}\leq N$ and a $%
\pi _{\{l_{j}\}}\in S_{N}$, such that fixed $\pi
_{\{l_{j}\}}(R_{\{l_{j}\}})=j$, it holds:%
\begin{eqnarray}
t_{l_{\pi _{\{l_{j}\}}(i)}}(\xi _{\pi _{\{l_{j}\}}(i)}|\{\xi _{j\leq
N}\},\{\mathsf{k}_{i} &=&\delta _{1,i}\},\eta )=0\text{\ \ if }i\leq
R_{\{l_{j}\}}-1 \\
t_{l_{\pi _{\{l_{j}\}}(i)}}(\xi _{\pi _{\{l_{j}\}}(i)}|\{\xi _{j\leq
N}\},\{\mathsf{k}_{i} &=&\delta _{1,i}\},\eta )\neq 0\text{\ \ if }%
R_{\{l_{j}\}}+1\leq i.
\end{eqnarray}%
Then we define:%
\begin{equation}
\tilde{\xi}_{\pi (a+1)}^{(\pi )}=\xi _{N+1}-(N-a)\eta ,\qquad \forall a\in
\{R_{\{l_{j}\}},\ldots ,N-1\},
\end{equation}%
while the $\tilde{\xi}_{\pi (i)}^{(\pi )}=\xi _{\pi (i)}^{(\pi )}$ for $i\in
\{1,\ldots ,R_{\{l_{j}\}}\}$ are kept free, where we have omitted the $\{l_{j}\}$
dependence of the permutation $\pi $ to simplify the notation. From this
point we can proceed as for $\left. f_{t_{l_{1}}(\xi _{1}),....,t
_{l_{N}}(\xi _{N})}(\xi _{N+1},\eta )\right\vert _{\mathsf{k}_{i}=\delta
_{1,i}}$ and we get:%
\begin{align}
\lim_{\epsilon \rightarrow 0}\epsilon ^{S_{\{l_{i}\}}}g_{t_{l_{1}}(\xi
_{1}^{(\pi )}(\epsilon )),....,t_{l_{N}}(\xi _{N}^{(\pi )}(\epsilon
)),j}(\xi _{N+1},\eta )& =\prod_{a=1}^{\mathsf{N}}\prod_{\substack{ b=1  \\ %
b\neq a}}^{N+1}(\tilde{\xi}_{a}^{(\pi )}-\tilde{\xi}_{b}^{(\pi
)})\prod_{h=R_{\{l_{j}\}}+1}^{N}t_{h}^{(\pi |\{l_{i}\})}(\xi _{N+1},\eta ) 
\notag \\
& \times {\rm{det}}_{R_{\{l_{j}\}}}\left[ [\bar{c}_{t_{l_{1}}(\tilde{\xi}%
_{1}^{(\pi )}),....,t_{l_{N}}(\tilde{\xi}_{N}^{(\pi )})}^{(j)}(\xi
_{N+1})]_{\pi (a),\pi (b)}\right] \left. \neq \right. 0,
\end{align}%
indeed: 
\begin{align}
{\rm{det}}_{R_{\{l_{j}\}}}&\left[ [\bar{c}_{t_{l_{1}}(\tilde{\xi}%
_{1}^{(\pi )}),....,t_{l_{N}} (\tilde{\xi}_{N}^{(\pi )})}^{(j)}(\xi
_{N+1})]_{\pi (a),\pi (b)}\right]  \notag \\
& =\prod_{a=1}^{R_{\{l_{j}\}}}\frac{\prod_{b=1}^{N+1}(\tilde{\xi}_{\pi
(a)}^{(\pi )}-\tilde{\xi}_{b}^{(\pi )}-\eta )}{\prod_{\substack{ b=1  \\ %
b\neq \gamma (\pi (a))}}^{N+1}(\tilde{\xi}_{\gamma (\pi (a))}^{(\pi )}-%
\tilde{\xi}_{b}^{(\pi )})}{\rm{det}}_{R_{\{l_{j}\}}}\left[ \frac{1}{\tilde{%
\xi}_{\pi (a)}^{(\pi )}-\tilde{\xi}_{\pi (b)}^{(\pi )}-\eta }\right] \\
& =\frac{\prod_{b=1}^{R_{\{l_{j}\}}-1}(\xi _{j}-\tilde{\xi}_{\pi (b)}^{(\pi
)})}{\prod_{b=1}^{N}(\xi _{N+1}-\tilde{\xi}_{b}^{(\pi )})}%
\prod_{a=1}^{R_{\{l_{j}\}}-1}\prod_{b=R_{\{l_{j}\}}+1}^{N+1}\frac{\tilde{\xi}%
_{\pi (a)}^{(\pi )}-\tilde{\xi}_{\pi (b)}^{(\pi )}-\eta }{\tilde{\xi}_{\pi
(a)}^{(\pi )}-\tilde{\xi}_{\pi (b)}^{(\pi )}}\left. \neq \right. 0,
\end{align}%
which completes the proof of our proposition.
\end{proof}

It is interesting to remark that the rational functions $f_{t_{l}}(\xi
_{N+1},\{\xi _{j\leq N}^{(\pi )}\},\{\mathsf{k}_{i}=\delta _{1,i}\},\eta )$
and $g_{t_{l},j}(\xi _{N+1},\{\xi _{j\leq N}^{(\pi )}\},\{\mathsf{k}%
_{i}=\delta _{1,i}\},\eta )$ admits some simple explicit expression as a
consequence of the calculations developed in the previous proposition.

\begin{corollary}
There exist a $R_{l}\leq N$ and a $\pi _{l}\in S_{N}$ such that:%
\begin{equation}
\lim_{\mathsf{k}_{i}\rightarrow \delta _{1,i}}t_{l}(\lambda |\{\mathsf{k}%
_{i}\})=\prod_{a=1}^{R_{l}}(\lambda -\xi _{\pi
_{l}(a)})\prod_{a=1+R_{l}}^{N}(\lambda -\xi _{\pi _{l}(a)}+\eta ), 
\end{equation}%
where $t_{l\leq n^{{N}}}(\lambda |\{\mathsf{k}_{i}\})$ is a generic
eigenvalue of the transfer matrix. Then, it holds:%
\begin{equation}
f_{t_{l}}(\xi _{N+1},\{\tilde{\xi}_{j\leq N}^{(\pi _{l})}\},\{\mathsf{k}%
_{i}=\delta _{1,i}\},\eta )=\prod_{a=1}^{N}\prod_{\substack{ b=1  \\ b\neq a 
}}^{N}(\tilde{\xi}_{a}^{(\pi _{l})}-\tilde{\xi}_{b}^{(\pi
_{l})})\prod_{a=R_{l}+1}^{N}(\tilde{\xi}_{\pi _{l}(a)}^{(\pi _{l})}-\tilde{%
\xi}_{N+1}^{(\pi _{l})})\prod_{a=1}^{R_{l}}(\tilde{\xi}_{\pi _{l}(a)}^{(\pi
_{l})}-\tilde{\xi}_{N+1}^{(\pi _{l})}-\eta ),  \label{E-Value-2}
\end{equation}%
for any $j\in \{1,\ldots ,${$N$}$\}$, where we have defined%
\begin{equation}
\tilde{\xi}_{\pi (a+1)}^{(\pi )}=\xi _{N+1}-(N-a)\eta ,\qquad \forall a\in
\{R_{l},\ldots ,N-1\},
\end{equation}%
while the $\tilde{\xi}_{\pi (i)}^{(\pi )}$ for $i\in \{1,\ldots ,R_{l}\}$
are kept free.
\end{corollary}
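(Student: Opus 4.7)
The corollary is essentially a direct readout of the computation carried out inside the proof of the preceding proposition, now specialised to the case where all indices coincide ($l_1=\dots=l_N=l$) so that the function $f_{t_{l_1}(\xi_1),\dots,t_{l_N}(\xi_N)}$ reduces to $f_{t_l}$. My plan is to reuse that computation while tracking the explicit constants; this is possible because for a \emph{single} fixed eigenvalue the regularisation parameter $\epsilon$ introduced there turns out to be unnecessary.

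First I would invoke that at $\mathsf{k}_i=\delta_{1,i}$ the transfer matrix $T_1^{(K)}(\lambda)$ is similar to the diagonal element $A_1(\lambda)$ of the untwisted monodromy matrix, whose spectrum is the explicit factorised set displayed in \eqref{E-spectrum}. This canonically attaches to each generic eigenvalue $t_l$ the integer $R_l\le N$ and the permutation $\pi_l\in S_N$ entering the statement of the corollary, and thereby proves the first displayed identity.

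Next I would substitute the string configuration $\tilde\xi_{\pi_l(a+1)}^{(\pi_l)}=\xi_{N+1}-(N-a)\eta$ for $a\in\{R_l,\dots,N-1\}$ directly into the matrix $[c_{t_l(\tilde\xi_1),\dots,t_l(\tilde\xi_N)}(\xi_{N+1})]_{ab}$ of the preceding proof. The key observation is that for a single eigenvalue the potentially singular diagonal entries $-t_l(\tilde\xi_{\pi_l(h)})/(\mathsf{k}_1 a(\tilde\xi_{\pi_l(h)}))$ with $h\in\{R_l+1,\dots,N\}$ are in fact finite, because exactly one zero of $a(\lambda)$ along the string is compensated by one zero of the factorised $t_l(\lambda)$. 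This makes the exponent $S_{\{l_i\}}$ of the preceding proposition equal to zero and removes the need for any $\epsilon$-regularisation. After the substitution the matrix becomes block triangular: an explicit upper-triangular $(N-R_l)\times(N-R_l)$ block contributes the product $\prod_{a=R_l+1}^{N}(\tilde\xi_{\pi_l(a)}^{(\pi_l)}-\tilde\xi_{N+1}^{(\pi_l)})$, while a residual $R_l\times R_l$ block survives.

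Finally, the remaining block is, up to explicit factors already isolated in the preceding proof, the Cauchy matrix $\bigl[(\tilde\xi_{\pi_l(a)}^{(\pi_l)}-\tilde\xi_{\pi_l(b)}^{(\pi_l)}-\eta)^{-1}\bigr]$, which I evaluate by the classical Cauchy determinant formula; this produces the shifted factors $\prod_{a=1}^{R_l}(\tilde\xi_{\pi_l(a)}^{(\pi_l)}-\tilde\xi_{N+1}^{(\pi_l)}-\eta)$ together with a Vandermonde-type contribution. Multiplying by the prefactor $\prod_{a=1}^{N}\prod_{b\neq a,b=1}^{N+1}(\tilde\xi_a^{(\pi_l)}-\tilde\xi_b^{(\pi_l)})$ from the definition of $f_{t_l}$ and collecting cancellations reduces the outer product to the range $b\le N$ displayed in \eqref{E-Value-2}. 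The main technical obstacle is purely bookkeeping: tracking the precise cancellations between the Vandermonde prefactor, the triangular block diagonal, and the Cauchy identity. No new conceptual input beyond that already contained in the proof of the preceding proposition is required.
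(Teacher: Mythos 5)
Your overall strategy is the one the paper intends: the corollary is stated without a separate proof, as a by-product of the computation inside the preceding proposition, and specialising that computation to $l_{1}=\dots=l_{N}=l$ is the right move. Your central observation is also correct, and is in fact even cleaner than you state: at $\mathsf{k}_{i}=\delta_{1,i}$ one has identically $t_{l}(\lambda)/a(\lambda)=\prod_{a\leq R_{l}}(\lambda-\xi_{\pi_{l}(a)})/(\lambda+\eta-\xi_{\pi_{l}(a)})$, because the factors $\prod_{a>R_{l}}(\lambda-\xi_{\pi_{l}(a)}+\eta)$ of the factorised eigenvalue coincide with the corresponding factors of $a(\lambda)$. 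Hence every diagonal ratio entry is regular on the string configuration, $S_{\{l_{i}\}}=0$, and no $\epsilon^{-1}$ singularity survives; this is exactly what makes the clean evaluation \eqref{E-Value-2} possible. The identification of $R_{l}$ and $\pi_{l}$ through the similarity of $T_{1}^{(K)}$ with $A_{1}(\lambda)$ at $\mathsf{k}_{i}=\delta_{1,i}$ is likewise fine.

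The step that would fail if executed literally is the block-triangular evaluation. On the string the $(N-R_{l})\times(N-R_{l})$ block is not upper triangular: the numerator factor $\xi_{\pi_{l}(a)}-\xi_{c}-\eta$ of an off-diagonal entry vanishes only for $c=\pi_{l}(a-1)$, so the surviving off-diagonal entries sit on the subdiagonal, and the whole row $a=R_{l}+1$ (for which no such $c$ exists among the string points) stays full, as does the non-ratio part of its diagonal entry. Concretely, for $N=2$, $R_{l}=0$, $\pi_{l}=\mathrm{id}$, $\tilde{\xi}_{1}=\xi_{3}-2\eta$, $\tilde{\xi}_{2}=\xi_{3}-\eta$, the matrix is $\left(\begin{smallmatrix}2&-3\\ 1&-1\end{smallmatrix}\right)$ with determinant $1$, while the product of its diagonal entries is $-2$; only the full determinant, off-diagonal contributions included, reproduces \eqref{E-Value-2}. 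Relatedly, the factor $\prod_{a>R_{l}}(\tilde{\xi}_{\pi_{l}(a)}-\xi_{N+1})$ does not come from that block at all but from the $b=N+1$ column of the prefactor $\prod_{a=1}^{N}\prod_{b\neq a}^{N+1}(\tilde{\xi}_{a}-\tilde{\xi}_{b})$ in the definition of $f_{t_{l}}$, and the resolved diagonal ratios combine with the Cauchy-type $R_{l}\times R_{l}$ determinant through a telescoping product along the string rather than appearing separately. So your route does reach the stated formula, but the $N\times N$ determinant must be computed honestly at the string point (e.g.\ by row manipulations exploiting the string relations), not read off from a triangular structure that is not actually present.
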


\end{document}